\documentclass[cleveref, autoref]{paper}
 
\usepackage{fullpage,float,amsthm,amsfonts,amsmath,amssymb,keytheorems,url,tikz,circuitikz,tabularx,multirow,multicol,hhline,tablefootnote,xcolor,etoolbox,clipboard,authblk,titlesec,setspace,titletoc,tocloft,enumitem,xparse,xspace,pgfplots}
\usepackage[sortcites=true,maxnames=1000]{biblatex}
\usepackage[toc,page]{appendix}
\usepackage[margin=0pt,font={small},labelfont={bf},labelsep=colon,format=plain]{caption}
\usepackage[left]{lineno}
\usepackage[mathscr]{eucal}
\usepackage[ruled,vlined]{algorithm2e}
\usepackage[hidelinks]{hyperref}
\makeatletter 
\let\original@algocf@latexcaption\algocf@latexcaption
\long\def\algocf@latexcaption#1[#2]{%
  \@ifundefined{NR@gettitle}{%
    \def\@currentlabelname{#2}%
  }{%
    \NR@gettitle{#2}%
  }%
  \original@algocf@latexcaption{#1}[{#2}]%
}
\makeatother 

\usetikzlibrary{fadings}
\usetikzlibrary{patterns}
\usetikzlibrary{shadows.blur}
\usetikzlibrary{shapes}
\usetikzlibrary{positioning}

\pgfplotsset{width=10cm,compat=1.9}
\usepgfplotslibrary{external}

\def\titlename{}
\def\authorsnames{}

\def\illustrationsPath{Illustrations/}

\def\emulatorsImagesPath{\illustrationsPath Emulators/}

\def\squiggylinescale{0.08}
\def\squiggylinedist{0.6}
\def\squiggylinerange{3.2}
\def\checkmark{\tikz\fill[scale=0.4](0,.35) -- (.25,0) -- (1,.7) -- (.25,.15) -- cycle;}
\def\erdos{Erd\H{o}s' girth conjecture\xspace}

\definecolor{turquoise}{RGB}{64,224,208}
\definecolor{salmon}{RGB}{250,128,114}

\newcommand{\floor}[1]{\left\lfloor #1 \right\rfloor}
\newcommand{\ceil}[1]{\left\lceil #1 \right\rceil}
\newcommand{\paren}[1]{\left( #1 \right)}
\newcommand{\bracke}[1]{\left[ #1 \right]}
\newcommand{\bracce}[1]{\left\{ #1 \right\}}

\newcommand{\abs}[1]{\left| #1 \right|}

\newcommand{\tabref}[1]{\autoref{tab:#1}}
\newcommand{\secref}[1]{\hyperref[#1]{Section~\ref*{#1}}}
\newcommand{\appref}[1]{\hyperref[#1]{Appendix~\ref*{#1}}}
\newcommand{\thmref}[1]{~\protect\autoref{thm:#1}}
\newcommand{\lemref}[1]{\hyperref[lem:#1]{Lemma~\ref*{lem:#1}}}
\newcommand{\algref}[1]{\hyperref[alg:#1]{Algorithm~\ref*{alg:#1}}}

\newcommand{\cncref}[1]{~\protect\hyperref[cnc:#1]{Conclusion~\ref*{cnc:#1}}}
\newcommand{\corref}[1]{~\protect\hyperref[cor:#1]{Corollary~\ref*{cor:#1}}}
\newcommand{\figref}[1]{\autoref{fig:#1}}

\newcommand{\obvref}[1]{\hyperref[obv:#1]{Observation~\ref*{obv:#1}}}
\newcommand{\clmref}[1]{\hyperref[clm:#1]{Claim~\ref*{clm:#1}}}
\newcommand{\stpref}[3]{\hyperref[alg:#1_#2]{\textit{#3}}}

\newtheorem{question}{Question}
\newtheorem{definition}{Definition}
\newcommand{\prbref}[1]{\hyperref[prb:#1]{Problem~\ref*{prb:#1}}}
\newcommand{\defref}[1]{\hyperref[def:#1]{Definition~\ref*{def:#1}}}
\newcommand{\queref}[1]{\hyperref[que:#1]{Question~\ref*{que:#1}}}
\ExplSyntaxOn
\NewDocumentCommand{\querefs}{m}{
  Questions~
  \clist_set:Nn \l_tmpa_clist {#1}
  \int_set:Nn \l_tmpa_int {0}
  \clist_map_inline:Nn \l_tmpa_clist {
    \int_incr:N \l_tmpa_int
    \hyperref[que:##1]{\ref*{que:##1}}
    \int_compare:nNnT { \l_tmpa_int } < { \clist_count:N \l_tmpa_clist } { ,~ }
  }
}
\ExplSyntaxOff
\newcommand{\eqnref}[1]{\hyperref[eqn:#1]{Equation~(\ref*{eqn:#1})}}

\NewDocumentCommand\mmin{mgg}{\displaystyle \min\IfNoValueTF{#2}{}{_{#2}}\IfNoValueTF{#3}{}{^{#3}}\bracce{#1}}
\NewDocumentCommand\mmax{mgg}{\displaystyle \max\IfNoValueTF{#2}{}{_{#2}}\IfNoValueTF{#3}{}{^{#3}}\bracce{#1}}
\NewDocumentCommand\mprod{mgg}{\displaystyle \prod\IfNoValueTF{#2}{}{_{#2}}\IfNoValueTF{#3}{}{^{#3}}{#1}}
\NewDocumentCommand\msum{mgg}{\displaystyle \sum\IfNoValueTF{#2}{}{_{#2}}\IfNoValueTF{#3}{}{^{#3}}{#1}}
\NewDocumentCommand\mcup{mgg}{\displaystyle \bigcup\IfNoValueTF{#2}{}{_{#2}}\IfNoValueTF{#3}{}{^{#3}}{#1}}
\NewDocumentCommand\mcap{mgg}{\displaystyle \bigcap\IfNoValueTF{#2}{}{_{#2}}\IfNoValueTF{#3}{}{^{#3}}{#1}}
\newcommand{\textoverline}[1]{\={#1}}
\NewDocumentCommand\case{mg}
    {%
    \ensuremath{\textbf{Case (#1\IfNoValueTF{#2}{}{$_#2$})}}%
    }
\NewDocumentCommand\notcase{mg}
    {%
    \ensuremath{\textbf{Case (\textoverline{#1}\IfNoValueTF{#2}{}{$_#2$})}}%
    }
\newcommand{\midline}{\,\middle| \,\,}

\newcommand{\qoute}[1]{``#1''}
\newcommand{\squiggly}{
\begin{tikzpicture} 
    {
    \draw[domain=-\squiggylinedist:0,smooth,variable=\x,scale=\squiggylinescale] plot({\x},{0.75+sin(4*0 r)});
    \draw[domain=0:\squiggylinerange,smooth,variable=\x,scale=\squiggylinescale] plot({\x},{0.75+sin(4*\x r)});
    \draw[domain=\squiggylinerange:\squiggylinerange+\squiggylinedist,smooth,variable=\x,scale=\squiggylinescale] plot({\x},{0.75+sin(4*\squiggylinerange r)});
    }
\end{tikzpicture}
}
\newcommand{\straight}{
\begin{tikzpicture} 
    {
    \draw[domain=-\squiggylinedist:-\squiggylinedist+0.0001,smooth,variable=\x,scale=\squiggylinescale] plot({\x},{-0.25+sin(4*\squiggylinerange r)});
    \draw[domain=-\squiggylinedist+0.0001:\squiggylinerange+\squiggylinedist,smooth,variable=\x,scale=\squiggylinescale] plot({\x},{0.75+sin(4*\squiggylinerange r)});
    }
\end{tikzpicture}
}
\newcommand{\fatrightarrow}
{
    \begin{tikzpicture}
        \draw[-{Triangle[width=6pt,length=6pt]}, line width=2pt](0,0) -- (0.3, 0);
    \end{tikzpicture}
}
\newcommand{\hollowrightarrow}
{
    \begin{tikzpicture}
        \node[draw, single arrow,
              minimum height=1pt, minimum width=1pt,
              single arrow head extend=4pt,
              anchor=west, rotate=0] at (0,-2) {};
    \end{tikzpicture}
}
\newcommand{\ubrace}[2]{{\underbrace{#1}_{#2}}}
\newcommand{\textubrace}[2]{%
  \ubrace{#1}{%
    \mbox{\scriptsize\begin{socgtabular}{@{}c@{}}#2\end{socgtabular}}%
  }%
}
\newcommand{\rightabovearrow}[2]{\overset{#2}{\overrightarrow{#1}}}
\newcommand{\leftabovearrow}[2]{\overset{#2}{\overleftarrow{#1}}}
\newcommand{\sizeindex}[2]{\abs{#1}_{#2}}
\ifdefined\socgtabular\else
  \let\socgtabular\tabular
\fi

\newcommand{\longspace}[0]{\,\,\,\,\,\,\,\,\,\,\,\,\,\,\,\,\,\,\,\,\,\,\,\,\,\,\,\,\,\,\,\,\,\,\,\,\,\,\,\,\,\,\,\,\,\,\,\,\,}

\newcommand*{\settitlename}[1]{\title{#1}\def\titlename{#1}}
\newcommand*{\addauthor}[2]{\author[#2]{#1} 
\IfNoValueTF{\authorsnames}{\def\authorsnames{#1}}{\expandafter\def\expandafter\authorsnames\expandafter{\authorsnames, #1}}}
\renewcommand{\keywords}[1]{\def\keywordnames{#1}}

\newcommand{\dashuline}[1]
{
  \tikz[baseline=(X.base)]
  {
    \node[inner sep=0pt] (X) {#1};
    \draw[dash pattern=on 2pt off 2pt] (X.south west) -- (X.south east);
  }
}
\NewDocumentCommand\specball{mmg}
{
    ball\paren{#1 , #2 \IfNoValueTF{#3}{}{, #3}}
}
\ExplSyntaxOn
\cs_new:Npn \ballformatvar:n #1
 {
  \tl_clear:N \l_tmpa_tl
  \tl_map_inline:nn {#1}
   {
    \str_if_in:nnTF {ijkl\ell} {##1} 
      { \tl_put_right:Nn \l_tmpa_tl {\textit{##1}} }
      { \tl_put_right:Nn \l_tmpa_tl {##1} } 
   }
  \l_tmpa_tl
}
\cs_new:Npn \ballformatarg:n #1
 {
  \regex_match:nnTF { \A (?:[0-9]|i|j|k|\\ell)(?:[0-9ij k\\ell\+\-]*?) \Z } { #1 }
       {S\textsubscript{\ballformatvar:n{#1}}} 
       { #1 }
 }

\NewDocumentCommand \ball {mmg}
 {
    ball\paren{#1 ,
               \ballformatarg:n {#2}
               \IfNoValueTF{#3}
                           {}
                           {, \ballformatarg:n {#3}}
              }
 }
\ExplSyntaxOff

\newtheorem{theorem}{Theorem}[section]
\newtheorem{lemma}[theorem]{Lemma}
\newtheorem{remark}[theorem]{Remark}
\newtheorem{claim2}[theorem]{Claim}
\newtheorem{corollary}[theorem]{Corollary}

\newtheorem{observation}[theorem]{Observation}

\settitlename{Weighted Emulators with Local Heaviest Edges Stretch for Undirected Graphs}

\addauthor{Liam Roditty}{1,$\S$}
\addauthor{Ariel Sapir}{1,$\propto$}

\affil[1]{Department of \href{http://cs.biu.ac.il/}{Computer Science}, Bar-Ilan University}
\affil[$\S$]{liam.roditty@biu.ac.il}
\affil[$\propto$]{sapirar@biu.ac.il}

\date{\today}

\keywords{Graph, Shortest Paths, Weighted Graphs,  Approximation, Undirected, Single Source Shortest-Paths, Multi-Source Shortest-Paths, All-Pairs Shortest-Paths, SSSP, MSSP, MSASP, APSP, APASP} 

\makeatletter
\def\@maketitle{%
  \newpage
  \null
  \vskip 2em%
  \begin{center}%
  \let \footnote \thanks
    {\huge \textsf{\textbf{\@title}} \par}%
    \vskip 1.5em%
    {\large
      \lineskip .5em%
      \begin{tabular}[t]{c}%
        \@author
      \end{tabular}\par}%
    \vskip 1em
    {\large \@date}
  \end{center}
  \par
  \vskip 1.5em}
\makeatother

\makeatletter
\renewenvironment{abstract}{
    \begin{center}%
    {\bfseries\sffamily \Large\abstractname\vspace{\z@}}
      \end{center}%
      \quotation
    }
    \endquotation
\makeatother

\titleformat*{\section}{\Large\bfseries\sffamily}
\titleformat*{\subsection}{\large\bfseries\sffamily}

\dottedcontents{section}[1.5em]{}{1.3em}{.6em}

\setlist[itemize,1]{label=\scalebox{1.6}[1.0]{$\blacklozenge$}}
\setlist[itemize,2]{label=\scalebox{0.9}[1.3]{$\blacktriangleright$}}
\setlist[itemize,3]{label=\textbf{\scalebox{0.9}[1.2]{\fatrightarrow}}}
\setlist[itemize,4]{label=\textbf{\scalebox{0.666}[0.47]{\hollowrightarrow}}}
\setlist[itemize,5]{label=\textbf{\scalebox{0.9}[1.2]{$>$}}}

\begin{document}

\let\originalleft\left
\let\originalright\right
\renewcommand{\left}{\mathopen{}\mathclose\bgroup\originalleft}
\renewcommand{\right}{\aftergroup\egroup\originalright}

\setlength{\textfloatsep}{8pt}


\maketitle

\begin{abstract}
\small 

We introduce a generalized  family of $\left( 2\cdot \left\lfloor \frac{k}{2} \right\rfloor-1, 2\cdot \left\lceil \frac{k}{2} \right\rceil \cdot W_{1} +\max\left\{0,2\cdot\left(\left\lceil\frac{k}{2}\right\rceil-2\right)\right\}\cdot W_{2} \right)$-emulators with $\tilde O \left(n^{1+\frac{1}{k}}\right)$ edges, for any $k\in\mathbb{N}$, where $W_{i}$ is the $i^{\textnormal{th}}$ heaviest edge on a shortest path between two vertices. Our construction generalizes the $+2W_{1}$-spanner of size $\tilde O\left(n^{\frac{3}{2}}\right)$ and the $+4W_{1}$-emulator of size $\tilde O \left(n^{\frac{4}{3}}\right)$, both by Elkin, Gitlitz and Neiman [DISC'21 and DICO'23]. 

When $k$ is even, these are $\left(k-1,k\cdot W_{1} + \left(k-4\right)\cdot W_{2}\right)$-emulators and when $k$ is odd, these are $\left(k-2,\left(k+1\right)\cdot W_{1} + \left(k-3\right) \cdot W_{2}\right)$-emulators. Our framework not only expands known constructions for weighted graphs but also yields an improved stretch over state of the art emulators and spanners for unweighted graphs within a specific distance regime. In particular, for all vertex pairs separated by a distance of $\delta \leq O\left(3^{k^{2}}\right)$, our construction improves upon the seminal additive $+\tilde O\left(\delta^{1-\frac{1}{k}}\right)$-emulator of size $\tilde O\left(n^{1+\frac{1}{2^{k+1}-1}}\right)$ by Thorup and Zwick [SODA'06].

\end{abstract}

\setstretch{1.3}

\newpage

\tableofcontents
\newpage

\section{Introduction}~\label{intro}
When the edge set of a graph is large, directly working with the graph can become computationally expensive. One way to alleviate this is by compressing the graph while preserving critical properties, like distances. This compression helps reduce the computational load and improve the efficiency of algorithms. In particular, one form of compression that has gained attention is the concept of \textit{emulators}, in which the vertex set remains unchanged, but edges can be added or removed, as long as the distance between any two vertices in the modified graph remains within a specified range.

Let $G=\paren{V,E}$ be an unweighted graph and denote by $\delta_{G}\paren{u,v}$ the distance between two vertices $u,v\in V$. An $\paren{\alpha,\beta}$-emulator  $H=\paren{V,F, w_H}$ is a weighted graph with the same vertex set $V$, an auxiliary edge set $F\subseteq V\times V$ and a weight function $ w_{H} :F\rightarrow \mathbb{R^{+}}$, such that, for any $u,v\in V$:

$$\delta_{H}\paren{u,v}\in \bracke{\delta_{G}\paren{u,v}, \alpha\cdot \delta_{G}\paren{u,v}+\beta }$$

In other words: $\delta_{H}\paren{u,v}$ lies within a range bounded by the original distance $\delta_{G} \paren{u,v}$, up to a \textit{multiplicative stretch} of $\alpha$ and an \textit{additive stretch} of $\beta$. Two special cases of emulators are a \textit{multiplicative  emulator}, denoted as an $\alpha$-emulator, where $\beta=0$, and an \textit{additive emulator}, denoted as a $+\beta$-emulator, where $\alpha=1$. The general case is referred to as a \textit{mixed emulator}, having both multiplicative and  additive stretches. 

A closely related object is the \textit{spanner}, a restricted variant of an emulator that is required to be a   subgraph of the original graph:  $F \subseteq E$.    A central question in this regime is what is the proper trade-off between the stretch $\paren{\alpha,\beta}$ of an emulator or a spanner and its size $\abs{F}$. 

This size-stretch trade-off is fundamentally tied to the \textit{girth} of a graph. According to \textit{\erdos} \cite{Erdos1963} for any $k\geq 1$ there are graphs with $\tilde \Omega \paren{n^{1+\frac{1}{k}}}$ edges and girth $2k+1$. In such graphs, removing any edge would  increase the distance between its endpoints from $1$ to at least $2k$. Any $\paren{\alpha,\beta}$-emulator or spanner for such graphs for which $\alpha + \beta \leq 2k-1$ must retain all original edges. This identifies $\tilde O \paren{n^{1+\frac{1}{k}}}$ as a meaningful sparsity threshold.

\begin{question}~\label{que:n^{1+1/k}}
Let $k\in\mathbb{N}$. Which $\paren{\alpha,\beta}$-emulators or spanners with size $\tilde O \paren{n^{1+\frac{1}{k}}}$ exist?
\end{question}

In the multiplicative regime, Alth{\"o}fer, Das, Dobkin and Soares \cite{AltDasDobJosSoa1993} presented a $\paren{2k-1}$-spanner with $O\paren{ n^{1+\frac{1}{k}}}$ edges.  Thorup and Zwick \cite{ThoZwi2005} and Baswana and Sen \cite{BasSen2007} also presented such constructions. In the additive regime, Dor, Halperin and Zwick \cite{DorHalZwi2000} presented a $+4$-emulator of size $\tilde O\paren{n^{\frac{4}{3}}}$. We observe that these spanners and emulators are optimal assuming \erdos. 

Less sparse constructions include the $+4$-spanner of Chechik \cite{Chechik2013} with $\tilde O \paren{n^{\frac{7}{5}}}$ edges and the $+6$-spanners of size $\tilde O\paren{n^{\frac{4}{3}}}$ by Baswana, Kavitha, Mehlhorn and Pettie \cite{BasKavMehPet2010} and Woodruff \cite{Woodruff2010}.  Stretches that grow as a function of $n$ (e.g. \cite{ElkGitNei2021, KogPar2023,HuaPet2019A}) are beyond the scope of this work and are only mentioned in the tables. A  summary of state of the art results is provided in \tabref{known_emulators} for emulators and in  \tabref{known_spanners} for spanners.

\begin{table}[H]
  \begin{center}
    
\small\addtolength{\tabcolsep}{-3pt}
    \begin{tabular}{|c|c|c|c|c|} 
      \hline
      \textbf{\small Stretch} & \textbf{\small Size}  & \textbf{\small Construction Time} & \textbf{\small Weighted}& \textbf{\small Reference}\\
      \hhline{|=|=|=|=|=|}
      &&&&\\ [-1em]
      $+4W_{1}$& $n^{\frac{4}{3}}$ & $n^{\frac{7}{3} } $ & \checkmark & \cite{ElkGitNei2021}  \\
      &&&&\\ [-1em]
      \hline
      &&&&\\ [-1em]

    $+ \paren{6^k -1}\cdot \delta ^{1-\frac{1}{k}}$& $ k\cdot n^{1+\frac{1}{2^{k+1}-1}}$ & -- &  & \cite{ThoZwi2006}  \\
      &&&&\\ [-1em]
      \hline
      &&&&\\ [-1em]

      $+n^{\frac{1-3\cdot\varepsilon}{5}}$& $n^{1+\varepsilon}$ & -- &  & \cite{KogPar2023}  \\
      &&&&\\ [-1em]
      \hline
      &&&&\\ [-1em]

      $+n^{\frac{25-87\cdot\varepsilon}{112}}$& $n^{1+\varepsilon}$ & -- &  & \cite{KogPar2023}  \\
      &&&&\\ [-1em]
      \hline
      &&&&\\ [-1em]

      $+n^{\frac{2}{9}-\frac{1}{1600}-o\paren{1}}$& $n$ & -- &  & \cite{KogPar2023}  \\
      &&&&\\ [-1em]
      \hline
      &&&&\\ [-1em]
      
      $\paren{1+\varepsilon, \paren{90+\frac{120\cdot\paren{k-1}}{\varepsilon}}^{k-1}\cdot W_{1}}$&$kn+n^{1+\frac{1}{2^{k}-1}}$  & --&  \checkmark& \cite{ElkGitNei2019} \\
      &&&&\\ [-1em]
      \hline
      &&&&\\ [-1em]

      $\paren{3+\varepsilon, \paren{3+\frac{8}{\varepsilon}}^{k-1} \cdot W_{1}}$& $kn+n^{1+\frac{1}{2^{k}-1}}$  &  -- & \checkmark & \cite{ElkGitNei2019}  \\
      \hline
    \end{tabular}
    \caption{Known emulators for weighted and unweighted graphs.}~\label{tab:known_emulators}
  \end{center}
\end{table}

\begin{table}[H]
  \begin{center}
    
\small\addtolength{\tabcolsep}{-3pt}
    \begin{tabular}{|c|c|c|c|c|} 
      \hline
      \textbf{\small Stretch} & \textbf{\small Size}  & \textbf{\small Construction Time} & \textbf{\small Weighted}& \textbf{\small Reference}\\
      \hhline{|=|=|=|=|=|}
      &&&&\\ [-1em]
    $+2$& $n^{\frac{3}{2}}$ & $mn^{\frac{1}{2}} $ &   & \cite{DorHalZwi2000}  \\
      &&&&\\ [-1em]
      \hline
      &&&&\\ [-1em]
      
      $+2W_{1}$& $n^{\frac{3}{2}}$ & $n^{2} $ & \checkmark  & \cite{ElkGitNei2021}  \\
      &&&&\\ [-1em]
      \hline
      &&&&\\ [-1em]

$+4$& $n^{\frac{7}{5}}$ & $mn^{\frac{3}{5}}$ &   & \cite{Chechik2013,Dhalaan2021}  \\
      &&&&\\ [-1em]
      \hline
      &&&&\\ [-1em]

$+4W_{1}$& $n^{\frac{7}{5}}$ & -- &  \checkmark & \cite{AhmBodHamKobSpe2021}  \\
      &&&&\\ [-1em]
      \hline
      &&&&\\ [-1em]

    $+6$& $n^{\frac{4}{3}}$ & $n^2$ &  & \cite{BasKavMehPet2010,Woodruff2010}  \\
    &&&&\\ [-1em]
    \hline
    &&&&\\ [-1em]

    $+6W_{1}$& $n^{\frac{4}{3}}$ & -- & \checkmark  & \cite{LaLe2024}  \\
      &&&&\\ [-1em]
      \hline
      &&&&\\ [-1em]

      $+n^{\frac{1-\varepsilon}{2}} \cdot \log n\cdot W_{1}$& $n^{1+\varepsilon}$ &  & \checkmark & \cite{ElkGitNei2021}  \\
      &&&&\\ [-1em]
      \hline
      &&&&\\ [-1em]
      
      $\paren{1+\varepsilon, \paren{5+\frac{8\cdot\paren{k-1}}{\varepsilon}}^{k-1}\cdot W_{1}}$&$kn+n^{1+\frac{1}{\paren{\frac{4}{3}}^{k}-1}}$  & --&  \checkmark& \cite{ElkGitNei2019} \\
      &&&&\\ [-1em]
      \hline
      &&&&\\ [-1em]

      $\paren{3+\varepsilon, \paren{3+\frac{8}{\varepsilon}}\cdot k^{3+\frac{8}{\varepsilon}} }$& $n^{1+\frac{1}{k}}+\paren{3+\frac{8}{\varepsilon}}\cdot k ^{3+\frac{8}{\varepsilon}} \cdot n $  &  -- &  & \cite{BenPar2020}  \\
      &&&&\\ [-1em]
      \hline
      &&&&\\ [-1em]

      $\paren{3+\varepsilon, \paren{5+\frac{16}{\varepsilon}}^{k-1} \cdot W_{1}}$& $kn+n^{1+\frac{1}{\paren{\frac{4}{3}}^{k}-1}}$  &  -- & \checkmark & \cite{ElkGitNei2019}  \\
      &&&&\\ [-1em]
      \hline
      &&&&\\ [-1em]

      $\paren{k-1,2k}$& $kn^{1+\frac{1}{k}}$  &  $mn^{\frac{1}{2}}$ &  & \cite{ElkPel2001}  \\

      &&&&\\ [-1em]
      \hline
      &&&&\\ [-1em]
      $\paren{k,k-1}$& $n^{1+\frac{1}{k}}$ & $km $ &  & \cite{BasKavMehPet2010}  \\

      &&&&\\ [-1em]
      \hline
      &&&&\\ [-1em]
      $\paren{k,\paren{k-1}\cdot W_{\max}}$& $n^{1+\frac{1}{k}}$ & -- & \checkmark & \cite{Tzalik2026}  \\

      &&&&\\ [-1em]
      \hline
      &&&&\\ [-1em]
      $2k-1$& $kn^{1+\frac{1}{k}}$ & $km $ & \checkmark & \cite{BasSen2007}  \\
      \hline
    \end{tabular}
    \caption{Known spanners for weighted and unweighted graphs.}~\label{tab:known_spanners}
  \end{center}
\end{table} 

While \erdos provides a baseline for tightness, it is insufficient for two reasons.  First, it remains unproven for general $k\in\mathbb{N}$, making any lower bound derived from it inherently conditional. Second, it restricts us to the case where $\alpha + \beta \leq 2k-1$. For instance, the $+6$-spanners \cite{BasKavMehPet2010,Woodruff2010} of size $\tilde O\paren{n^{\frac{4}{3}}}$ are not precluded by the conjecture. Resolving these gaps requires additional lower bounds.

\begin{table}[H]
  \begin{center}
    
\small\addtolength{\tabcolsep}{-3pt}
    \begin{tabular}{|c|c|c|c|} 
      \hline
      \textbf{\small Structure Type} & \textbf{\small Stretch} & \textbf{\small Size}  & \textbf{\small Reference}\\
      \hhline{|=|=|=|=|}
      &&&\\ [-1em]
      Spanner & $+2k-1$& $\frac{1}{k}\cdot n^{1+\frac{1}{k}}$ & \cite{Woodruff2006}  \\
      &&&\\ [-1em]
      \hline

    &&&\\ [-1em]
    Emulator & $+2k-1$& $\frac{1}{k^2}\cdot n^{1+\frac{1}{k}}$ &  \cite{Woodruff2006}  \\
    &&&\\ [-1em]
    \hline

    &&&\\ [-1em]
    Both & $+k$ (for $k\geq 4$)& $n^{\frac{4}{3}}$ &  \cite{AbbBod2015}  \\
    &&&\\ [-1em]
    \hline

    &&&\\ [-1em]
    Spanner & $+\Theta\paren{\frac{1}{k+1}}\cdot \delta^{1-\frac{1}{k}}$ & $n^{1+\frac{1}{2^{k+1}-1}}$ &  \cite{AbbBodPet2016}  \\
    &&&\\ [-1em]
    \hline

    &&&\\ [-1em]
    Emulator & $+n^{\frac{1}{18}}$ & $n$ & \cite{HuaPet2019A}  \\
    \hline
    \end{tabular}
    \caption{Known lower bounds for spanners and emulators. While some present a tighter bound than others, we present all, for completeness. All lower bounds are for unweighted graphs, yet carry an implication for weighted graphs, when one assumes $\forall e\in E: w\paren{e} = 1$.}~\label{tab:lower_bounds}
  \end{center}
\end{table}

Woodruff \cite{Woodruff2006} established the first unconditional lower bounds for additive approximations, proving that additive spanners and emulators with $+\paren{2k-1}$ stretch require $\Omega\paren{\frac{1}{k} n^{1+\frac{1}{k}}}$ and $\Omega\paren{\frac{1}{k^{2}} n^{1+\frac{1}{k}}}$ edges, respectively. These were subsequently improved by the \qoute{additive barrier} of Abboud and Bodwin \cite{AbbBod2015}, who proved that any constant additive stretch requires at least $\tilde \Omega\paren{n^{\frac{4}{3}}}$ edges. This renders the $+4$-emulator of size $\tilde O \paren{n^{\frac{4}{3}}}$ of Dor, Halperin and Zwick  \cite{DorHalZwi2000} \textbf{unconditionally} optimal.  Henceforth, for $k \geq 4$, achieving a sparsity of $\tilde O \paren{n^{1+\frac{1}{k}}}$ requires a multiplicative stretch $\alpha>1$, resulting in a mixed approximation.

Notable mixed constructions of size  $\tilde O \paren{n^{1+\frac{1}{k}}}$ include:
 a  $\paren{k-1,2k}$-spanner by Elkin and Peleg \cite{ElkPel2001}, a $\paren{k,k-1}$-spanner by Baswana, Kavitha, Mehlhorn and Pettie \cite{BasKavMehPet2010} and  a $\paren{3+\varepsilon,\paren{3+\frac{8}{\varepsilon}}\cdot k ^{3+\frac{8}{\varepsilon}}}$-spanner by Ben-Levy and Parter \cite{BenPar2020}. 

Yet, these constructions consider \textbf{unweighted} graphs.  Extending these results to the weighted setting introduces a fundamental conceptual challenge.  Let $G=\paren{V,E,w}$ be a weighted graph with a non-negative weight function $w:E\rightarrow \mathbb{R^{+}}$. An additive stretch with a constant $\beta$
lacks scale invariance: multiplying all edge weights makes the additive stretch negligible. To address this, Roditty and Sapir \cite{RodSap2025} introduced the framework of \textit{Commensurate Versions}. This ensures the additive stretch is relative to the weight function $w$, making the approximation meaningful regardless of scaling. We adapt their definition for emulators:

\begin{definition}~\label{def:commensurate}
Let $G=\paren{V,E,w}$ be a weighted graph. Consider a pair of vertices $u,v\in V$, a path $P$ between $u$ and $v$ and a function $f\paren{\beta,G,P}$. The problem of constructing an $\paren{\alpha,f\paren{\beta,G,P}}$-emulator for the weighted setting is a \textbf{Commensurate Version} of the problem of constructing an $\paren{\alpha,\beta}$-emulator for the unweighted setting if $f\paren{\beta,G,P}=\beta$  when $ w\paren{e}=1$ for all $e\in E$.
\end{definition}

Roditty and Sapir \cite{RodSap2025} also introduced the notion of \textit{Strongly Commensurate Versions}, originally defined for algorithmic problems such as \textbf{A}ll-\textbf{P}airs \textbf{A}pproximate \textbf{S}hortest \textbf{P}aths (APASP), where the focus is  on the algorithms' runtimes. We adapt this notion for emulators, focusing naturally on their size:

\begin{definition}~\label{def:strongly_commensurate}
Let $H_1=\paren{V,F_1,\hat w_1}$ be an $\paren{\alpha,\beta}$-emulator for the unweighted setting and let $H_2=\paren{V,F_2,\hat w_2}$ be an $\paren{\alpha,f\paren{\beta,G,P}}$-emulator for the weighted setting. $H_2$ is a \textbf{Strongly Commensurate Version} of $H_1$ if the problem of finding an  $\paren{\alpha,f\paren{\beta,G,P}}$-emulator is a commensurate version of the problem of finding an $\paren{\alpha,\beta}$-emulator and $\abs{F_2} \in \tilde O \paren{\abs{F_1}}$. 
\end{definition}

Both definitions apply naturally to spanners as well. Early examples of strongly commensurate versions utilized $W_{\max}$: the weight of the heaviest edge in the \textbf{entire} graph. However, if a heavy edge exists but is rarely utilized in shortest paths, an approximation depending on $W_{\max}$ becomes highly \qoute{loose}. To overcome this, Cohen and Zwick \cite{CohZwi1997} introduced the notion $W_{i}\paren{u\squiggly v}$, which is the weight of the $i^{\textnormal{th}}$ heaviest edge on a shortest path $u\squiggly v$ for vertices $u,v\in V$. Approximations dependent on $W_{i}$ provide significantly tighter and \qoute{localized} estimations than those relying on $W_{\max}$.

\begin{question}~\label{que:weighted_vs_unweighted}
For which values of $\alpha,\beta$ and $f$, are there  $\paren{\alpha,f\paren{\beta,G,P}}$-emulators for weighted graphs, which are a strongly commensurate version of an $\paren{\alpha,\beta}$-emulator for unweighted graphs?
\end{question}

Partial answers to \queref{weighted_vs_unweighted} have gained popularity in the last several years both for spanners and emulators. The $+4$-spanner of size $\tilde O\paren{n^{\frac{7}{5}}}$ of Chechik \cite{Chechik2013} was adapted into a $+4W_{\max}$-spanner by Ahmed, Bodwin, Sahneh, Kobourov and Spence \cite{AhmBodSahKobSpe2020} and subsequently refined to a tighter $+4W_{1}$-spanner by Ahmed, Bodwin, Hamm, Kobourov and Spence \cite{AhmBodHamKobSpe2021}. The $+2$-spanner of Dor, Halperin and Zwick of size $\tilde O\paren{n^{\frac{3}{2}}}$ \cite{DorHalZwi2000} was adapted to a $+2W_{1}$-spanner of Elkin, Gitlitz and Neiman \cite{ElkGitNei2021}; the  $+6$-spanners of size  $\tilde  O \paren{n^{\frac{4}{3}}}$ by Baswana, Kavitha, Mehlhorn and Pettie \cite{BasKavMehPet2010} or Woodruff \cite{Woodruff2010} were adapted to a $+6W_{1}$-spanner by  La and Le \cite{LaLe2024}; the $\paren{k,k-1}$-spanner of size $\tilde O \paren{n^{1+\frac{1}{k}}}$ of Baswana, Kavitha, Mehlhorn and Pettie \cite{BasKavMehPet2010} was adapted  to a $\paren{k,\paren{k-1}\cdot W_{\max}}$-spanner by Tzalik \cite{Tzalik2026}; the $+4$-emulator of Dor, Halperin and Zwick of size  $\tilde O \paren{n^{\frac{4}{3}}}$\cite{DorHalZwi2000} was adapted to a $+4W_{1}$-emulator by Elkin, Gitlitz and Neiman \cite{ElkGitNei2021}.

\tabref{known_emulators} and \tabref{known_spanners} summarize these results, revealing a notable gap in the weighted regime: known constructions rely exclusively on either $W_{\max}$ or $W_{1}$. To the best of our knowledge, no existing construction utilizes $W_i$ for $i \geq 2$. The significance of this becomes apparent when considering a shortest path $u\squiggly v$ dominated by a single heavy edge: $\delta_{G}\paren{u,v}\approx W_{1}\paren{u\squiggly v} $. An additive stretch proportional to $W_{1}$ offers no meaningful sub-multiplicative guarantee. Conversely, in this exact scenario  the remaining edges must be proportionately small. An approximation utilizing $W_{2}$ provides in such cases a strictly tighter guarantee than either $W_{1}$ or $W_{\max}$. 

\begin{question}~\label{que:w2}
Which $\paren{\alpha, f\paren{G, W_{1}, W_{2}, \ldots,W_{k}}}$-emulators or spanners exist, where $f$  explicitly depends on at least one of the local heaviest edge weights $W_{i}$ for $i \geq 2$?
\end{question}

Our framework addresses this fundamental gap, introducing the first explicit dependency on the second heaviest edge, $W_{2}$. We obtain a unified framework of $\paren{2\cdot \floor{\frac{k}{2}}-1, 2\cdot \ceil{\frac{k}{2}} \cdot W_{1} +\max\bracce{0,2\cdot\paren{\ceil{\frac{k}{2}}-2}}\cdot W_{2} }$-emulators with $\tilde O\paren{n^{1+\frac{1}{k}}}$ edges, for any integer $k \geq 2$. More precisely, the resulting stretch depends on the parity of $k$; for an even value of $k$, these are $\paren{k-1,k\cdot W_{1} + \paren{k-4}\cdot W_{2}}$-emulators; for an odd value of $k$, these are $\paren{k-2,\paren{k+1}\cdot W_{1} + \paren{k-3} \cdot W_{2}}$-emulators. We emphasize that the dependence we have over $W_{1}$ is linear, in contrast to other mixed approximations, such as the $\paren{1+\varepsilon, \paren{90+\frac{120\cdot\paren{k-1}}{\varepsilon}}^{k-1}\cdot W_{1}}$-emulator or the $\paren{3+\varepsilon, \paren{3+\frac{8}{\varepsilon}}^{k-1} \cdot W_{1}}$-spanner of Elkin, Gitlitz and Neiman \cite{ElkGitNei2019}. 

In this paper we show that the $+4W_{1}$-emulator of size $\tilde O\paren{n^{\frac{4}{3}}}$ of Elkin, Gitlitz and Neiman \cite{ElkGitNei2021} is not an isolated construction, but rather the second instance ($k=3$) of our result. Starting from $k\geq 4$, we introduce an unavoidable multiplicative stretch. By \defref{strongly_commensurate}, any  additive weighted emulator or spanner must correspond to an unweighted emulator with a constant additive stretch and the same size (e.g. $+4W_{\max}$, $+4W_{1}$ or $+2W_{1}+2W_{2}$ all correspond to $+4$). However, the \qoute{additive barrier} of Abboud and Bodwin \cite{AbbBod2015} states that any such unweighted structure requires at least $\tilde \Omega\paren{n^{\frac{4}{3}}}$ edges. Consequently, going below the $\tilde \Omega\paren{n^{\frac{4}{3}}}$ density threshold to achieve a sparsity of $\tilde O\paren{n^{1+\frac{1}{k}}}$ for $k \geq 4$ cannot be done by an additive approximation.  

The $k=2$ instance of our framework is a special case that yields an additive $+2W_{1}$-spanner. However, for all $k \geq 3$, the construction strictly produces an emulator. This transition is motivated by a  density discrepancy between the two structures. For instance, a $+4W_{1}$-emulator requires only $\tilde O\paren{n^{\frac{4}{3}}}$ edges \cite{ElkGitNei2021}, whereas the sparsest known $+4W_{1}$-spanner requires $\tilde O\paren{n^{\frac{7}{5}}}$ edges \cite{AhmBodHamKobSpe2021}. 

These density gaps suggest a fundamental \textit{separation}, where an $\paren{\alpha,\beta}$-emulator successfully bypass the size of an $\paren{\alpha,\beta}$-spanner. While a   proof for this gap remains a major open problem, the recurring size differences in state of the art constructions  lead to the following question:

\begin{question}~\label{que:separation}
For which values of $\alpha,\beta$ does there exist an $\paren{\alpha,\beta}$-emulator that is strictly sparser than any known $\paren{\alpha,\beta}$-spanner?
\end{question}

To better understand these discrepancies, we examine this question specifically within the unweighted regime. Two widely known results are the $\paren{2k-1}$-spanners \cite{BasSen2007, ThoZwi2005, AltDasDobJosSoa1993} and the $\paren{k, k-1}$-spanner \cite{BasKavMehPet2010}, each of size $\tilde O\paren{n^{1+\frac{1}{k}}}$. While the $\paren{k, k-1}$-spanner essentially \qoute{shifts} some multiplicative stretch from the $\paren{2k-1, 0}$-spanner into an additive stretch, no other intermediate trade-offs are currently known. 

Our framework fills this gap by providing a new family of trade-offs of the same $\tilde O\paren{n^{1+\frac{1}{k}}}$ density. In the unweighted setting, we may assume $w\paren{e} = 1$ for all $e\in E$. This implies $W_{1} = W_{2} = 1$ regardless of the vertices in question. Our construction therefore yields a $\paren{2\cdot \floor{\frac{k}{2}} - 1, 2\cdot \ceil{\frac{k}{2}} + \max\{0, 2\cdot\paren{\ceil{\frac{k}{2}} - 2}\}}$-emulator of size $\tilde{O}\paren{n^{1+\frac{1}{k}}}$. This simplifies to a $\paren{k-1, 2k-4}$-emulator for an even $k$, and a $\paren{k-2, 2k-2}$-emulator for an odd $k$. There is currently no known spanner with the same stretch. 

Lastly, we evaluate the performance of our framework in the unweighted regime against the most prominent constructions of the same size. When compared to the $\paren{k,k-1}$-spanner of Baswana, Kavitha, Mehlhorn and Pettie \cite{BasKavMehPet2010}, our construction provides a tighter approximation for pairs of vertices with a distance of $\delta_G\paren{u,v}\geq k-3$ for even $k$, or $\delta_G\paren{u,v} \geq \frac{k-1}{2}$ for odd $k$. Furthermore, our framework strictly improves upon the $\paren{k-1,2k}$-spanner of Elkin and Peleg \cite{ElkPel2001}, for any $k\in\mathbb{N}$.

\begin{table}[t]
  \begin{center}
\small\addtolength{\tabcolsep}{-3pt}
    \renewcommand\arraystretch{1.5}
    \begin{tabular}{|c|c|c|c|c|}
      \hline
      \textbf{\small Stretch} & \textbf{\small Weighted} & \textbf{\small Limitation} & \textbf{\small Reference} & \textbf{\small Partially Answers}\\
      \hhline{|=|=|=|=|=|}
      
      $( 2\cdot \lfloor \frac{k}{2} \rfloor-1, 2\cdot \lceil \frac{k}{2} \rceil \cdot W_{1} $ & \multirow{4}{*}{\checkmark} & \multirow{2}{*}{--} & \multirow{4}{*}{\secref{_k-1,kW1+_k-4_W2_}} & \multirow{4}{*}{\querefs{n^{1+1/k}, w2}} \\
      $+\max\{0,2\cdot(\lceil\frac{k}{2}\rceil-2)\}\cdot W_{2} )$ & & & & \\ \cline{1-1} \cline{3-3}
      
      $\paren{k-1,k\cdot W_{1} +\paren{k-4}\cdot W_2}$ & & even $k$ & & \\ \cline{1-1} \cline{3-3}
      
      $\paren{k-2,\paren{k+1}\cdot W_{1} +\paren{k-3}\cdot W_2}$ & & odd $k$ & & \\ \hline

      $\paren{k-1,2k-4}$ & \multirow{2}{*}{--} & even $k$ & \multirow{2}{*}{\secref{_k-1,2k-4_}} & \multirow{2}{*}{\querefs{n^{1+1/k}, separation}} \\ \cline{1-1} \cline{3-3}
      
      $\paren{k-2,2k-2}$ & & odd $k$ & & \\ \hline

    \end{tabular}
    \renewcommand\arraystretch{1}
    \caption{Implications of our result. The size is  $\tilde O \paren{n^{1+\frac{1}{k}}}$ and the construction time is $\tilde O \paren{n^{2+\frac{1}{k}}}$ for $k\leq 4$ and $\tilde O \paren{n^{2+\frac{2}{k}}}$ for $k\geq 5$.}~\label{tab:our_res}
  \end{center}

  \vspace{-6mm}
\end{table}

It is also interesting to compare our result to the $+\paren{6^k -1}\cdot \delta ^{\paren{1-\frac{1}{k}}}$-emulator of size $\tilde{O}\paren{n^{1+\frac{1}{2^{k+1}-1}}}$ of Thorup and Zwick \cite{ThoZwi2006}. To do so, observe that our result computes a $\paren{2^{k+1}-3, 2^{k+2}-4}$-emulator of size $\tilde O \paren{n^{1+\frac{1}{2^{k+1}-1}}}$. While the Thorup and Zwick emulator eventually provides a better stretch for sufficiently large distances, our construction maintains a tighter approximation for all vertices within \qoute{short distance} of one another. Notably, the threshold for this range grows exponentially at a rate of approximately $3^{k^{2}}$. For $k=2$, our stretch is tighter for all pairs whose original distance is at most $70$, a threshold that expands rapidly to $5,744$ for $k=3$ and reaches $4,575,579$ for $k=4$. This implies that for most practical values of $k$, our framework provides a superior approximation for essentially all relevant distances in the graph. We provide a comprehensive list of comparative values in \tabref{_k-1,2k-4__vsThorupZwick} and a visual illustration in \figref{_k-1,2k-4__vsThorupZwick}. A complete summary of our results and their implications is provided in \tabref{our_res}.

\section{Toolkit}\label{toolkit}
\subsection{Basic graph notions}\label{notions}
Unless stated otherwise, we consider an undirected weighted graph $G=\paren{V,E,w}$, where $w:E\rightarrow \mathbb{R}^{\geq 0}$ is a non-negative weight function. We denote the weight of an edge by $w\paren{u,v}$, the number of vertices by $\abs{V} = n$ and the number of edges by $\abs{E}= m$. For any $k\in \mathbb{N}$, let $\bracke{k}=\bracce{1,2,\ldots,k}$. 

A \emph{path} from a vertex $u$ to a vertex $v$ is an ordered sequence of vertices $u=y_{0},y_{1},y_{2},\ldots,y_{k} , y_{k+1}=v$ such that for every $i\in\bracke{k+1}$ it holds that $\paren{y_{i-1},y_{i}}\in E$. The weight of a path is defined as $w\paren{P} = \msum{e\in P}{w\paren{e}}$. If the graph is unweighted, the weight of a path is simply the number of edges in the path $w\paren{P} = \abs{P}$. A path minimizing $w\paren{P}$ is referred to as a \textit{shortest path} and its weight is the \emph{distance} between its endpoints $u$ and $v$, denoted by $\delta\paren{u,v}$.  For a vertex $u$ and a set $S\subseteq V$, we define the distance $\delta\paren{u,S}$ as the minimum distance from $u$ to any vertex in $S$, namely $\delta\paren{u,S} = \mmin{\delta\paren{u,s}}{s\in S}$.

We denote by $u \squiggly v$ a shortest path between $u$ and $v$, hence $w\paren{u\squiggly v} = \delta\paren{u,v}$. In the special case where this path consists of a single edge, we denote it by $u\squiggly v = u\straight v$, and consequently we have $w\paren{u,v}=\delta\paren{u,v}$. Along a fixed path $u\squiggly v$, we define $W_{i}\paren{u\squiggly v}$ to be the weight of its $i^{\textnormal{th}}$ heaviest edge. Therefore, $W_{1}\paren{u\squiggly v}$ is the heaviest, $W_{2}\paren{u\squiggly v}$ is the second heaviest, etc. Expressions of the form $\msum{W_{i}}{i\in I}$ are always interpreted with respect to a single, fixed path $u\squiggly v$. When no explicit path is specified, we consider a shortest path between $u$ and $v$ that minimizes the upper-bound.

Throughout our construction, we would denote: by $d\bracke{u,v}$ the distance estimate between $u$ and $v$ computed by our algorithm, by $\delta_H \paren{u,v}$ the distance in the emulator $H$ and by $\delta_G\paren{u,v}$ the distance in the original graph $G$. When providing a construction for an $\paren{\alpha,\beta}$-emulator $H=\paren{V,F}$, we always prove that $\delta_H\paren{u,v} \in \bracke{\delta_G\paren{u,v}, \alpha\cdot \delta_G\paren{u,v}+\beta}$.

\subsection{Balls and pivots}~\label{nearest}
For a parameter $q\in\paren{0,1}$ we now consider a set $S$ of vertices $s\in V$ sampled randomly and independently with probability $q$ each. For a vertex $u\in V$, we define its \emph{pivot} with respect to $S$ as a vertex $p_{S}\paren{u}\in S$ satisfying $\delta\paren{u,p_{S}\paren{u}}=\delta\paren{u,S}$. That is, the pivot of $u$ is the closest vertex to $u$ among members of  $S$. If there are several candidates we can break ties by a consistent criteria. We also associate $u$ with its \emph{ball} with respect to $S$, defined as: $\ball{u}{S} = \bracce{ v \in V \midline \delta\paren{u,v}<\delta\paren{u,p_{S}\paren{u}}}$.

\begin{observation}[\cite{ThoZwi2005}]~\label{obv:hssize} 
$E\bracke{\abs{S}} \in \tilde O \paren{nq}$.
\end{observation}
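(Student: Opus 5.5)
The statement is a direct computation of an expectation, so the plan is to use linearity of expectation over indicator variables, with no concentration argument needed. For every vertex $s\in V$, let $X_s$ be the indicator of the event $s\in S$. Then $\abs{S}=\msum{X_s}{s\in V}$.

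By construction each vertex is sampled independently with probability $q$, so $E\bracke{X_s}=\Pr\bracke{s\in S}=q$ for every $s$. Summing over all $s\in V$ and applying linearity of expectation gives
$$E\bracke{\abs{S}}=\msum{E\bracke{X_s}}{s\in V}=nq,$$
and in particular $E\bracke{\abs{S}}\in O\paren{nq}\subseteq\tilde O\paren{nq}$. Independence is not even needed for this step; only the marginal probability $q$ matters.

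The $\tilde O$ in the statement presumably anticipates the way the observation is used later. There, one typically also wants $\abs{S}=O\paren{nq\log n}$ with high probability, or the sample may be augmented to guarantee that every vertex has a pivot. If that stronger form is wanted, a standard Chernoff bound on the sum of the independent $X_s$ gives it directly. Beyond that I see no obstacle; the only care needed is to state that the bound is on the expectation, matching the observation as written.
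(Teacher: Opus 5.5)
Your proof is correct: linearity of expectation over the indicators $X_s$ gives $E[|S|]=nq$ exactly, which is the standard argument for this fact. The paper gives no proof of its own and simply cites Thorup and Zwick, where the fact rests on this same one-line computation.
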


\begin{observation}[\cite{ThoZwi2005}]~\label{obv:ballsize} 
$E\bracke{\abs{\ball{u}{S}}} \in \tilde O \paren{\frac{1}{q}}$.
\end{observation}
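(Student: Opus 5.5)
The plan is to fix a vertex $u\in V$ and order all other vertices $v\in V\setminus\bracce{u}$ by their distance $\delta\paren{u,v}$, breaking ties by the same consistent criterion used to define pivots. Write $v_1,v_2,\ldots,v_{n-1}$ for this order. The goal is to bound the expected number of vertices preceding the first sampled vertex in the list, since these are exactly the members of $\ball{u}{S}$.

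First, observe that every vertex strictly closer to $u$ than $p_S\paren{u}$ appears in the list before $p_S\paren{u}$. Hence if $p_S\paren{u}=v_j$, then $\abs{\ball{u}{S}}\le j-1$. Vertices tied with the pivot are excluded by the strict inequality, so this bound holds for every tie-breaking rule. If $u$ itself is in $S$, the ball is empty. We may therefore condition on $u\notin S$ without loss, or simply include $u$ as $v_0$ in the list.

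Second, compute the tail probabilities. The event $\abs{\ball{u}{S}}\ge j$ implies that none of $v_1,\ldots,v_j$ was sampled, and this has probability $\paren{1-q}^{j}$ by independence. Therefore
\[
E\bracke{\abs{\ball{u}{S}}}=\msum{\Pr\bracke{\abs{\ball{u}{S}}\ge j}}{j\ge 1}\le \msum{\paren{1-q}^{j}}{j\ge 1}\le \frac{1-q}{q}\le\frac{1}{q},
\]
which is $O\paren{\frac{1}{q}}$ and hence $\tilde O\paren{\frac{1}{q}}$.

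One subtle case needs care: $S$ may be empty, in which case the pivot is undefined and the ball is naturally taken to be all of $V$. This happens with probability $\paren{1-q}^{n}$. Its contribution $n\paren{1-q}^{n}$ is at most $\frac{1}{eq}$, because $x\paren{1-q}^x$ is maximized at $x=\frac{1}{-\ln\paren{1-q}}$. Alternatively, the tail-sum argument above already covers this case, since the ball still has at most $n-1$ elements. In either reading the contribution is absorbed into $O\paren{\frac{1}{q}}$. I expect this edge case and the tie-handling to be the only points requiring attention; the rest is the standard geometric-distribution argument of \cite{ThoZwi2005}.
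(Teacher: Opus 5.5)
Your proposal is correct and is the standard geometric waiting-time argument of Thorup and Zwick; the paper states this observation by citation only and gives no proof of its own. There is one harmless off-by-one: whenever $\delta(u,p_S(u))>0$, the vertex $u$ itself lies in $\mathrm{ball}(u,S)$, so the event $|\mathrm{ball}(u,S)|\ge j$ forces $u,v_1,\ldots,v_{j-1}$ (not $v_1,\ldots,v_j$) to be unsampled, which still has probability $(1-q)^j$, and your bound $\frac{1-q}{q}\le\frac{1}{q}$ is unaffected.
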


Additionally, we define a  classification of the graph's edges. We associate each vertex $u$ with the set  $E_{S}\paren{u}=\bracce{\paren{u,v} \midline w\paren{u,v}<\delta\paren{u,p_S\paren{u}}}$. We then define the set: $E_{S} = \mcup{E_{S}\paren{u}}{u\in V}$. 

\begin{observation}[\cite{ThoZwi2005}]~\label{obv:edgessize} 
$E\bracke{\abs{E_{S}}} \in \tilde O \paren{\frac{n}{q}}$.
\end{observation}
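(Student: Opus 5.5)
We need to prove $E[|E_S|]\in\tilde O(n/q)$, where $E_S(u)$ is the set of edges $(u,v)$ with $w(u,v)<\delta(u,p_S(u))$.

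\emph{Reduction to balls.} Fix $u$. Every edge $(u,v)\in E_S(u)$ satisfies
\[
\delta(u,v)\le w(u,v)<\delta(u,p_S(u)),
\]
so $v\in\ball{u}{S}$. The graph is simple (or we keep only the lightest parallel edge, since heavier copies never lie on shortest paths). Hence the map $(u,v)\mapsto v$ is injective on $E_S(u)$, and
\[
\abs{E_S(u)}\le\abs{\ball{u}{S}}.
\]

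\emph{Summing.} The union bound gives
\[
\abs{E_S}\le\sum_{u\in V}\abs{E_S(u)}\le\sum_{u\in V}\abs{\ball{u}{S}}.
\]
Linearity of expectation then yields
\[
E\bracke{\abs{E_S}}\le\sum_{u\in V}E\bracke{\abs{\ball{u}{S}}}.
\]
By \obvref{ballsize}, each term is $\tilde O(1/q)$, so the total is $\tilde O(n/q)$.

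\emph{Where care is needed.} The only subtle step is the ball-size bound when ties exist or $S$ is empty. I would handle both by appealing directly to \obvref{ballsize}. For completeness, its proof goes as follows. Order the vertices by distance from $u$, breaking ties consistently. A vertex lies in the ball only if no vertex among those ordered before it was sampled, so the ball size is dominated by a geometric random variable with mean $1/q$. The case where $S$ is empty contributes at most $n(1-q)^n$, which is negligible for $q\ge c\log n/n$, or we can add an arbitrary vertex to $S$.

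Zero-weight edges pose no problem, since the inequality $\delta(u,v)\le w(u,v)$ still holds for them.
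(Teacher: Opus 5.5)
Your proof is correct and is the standard argument behind this observation; the paper gives no proof of its own and only cites Thorup and Zwick. Each edge of $E_S(u)$ goes to a distinct neighbor $v$ with $\delta(u,v)\le w(u,v)<\delta(u,p_S(u))$, so $|E_S(u)|\le|\mathrm{ball}(u,S)|$, and summing \obvref{ballsize} over all $u$ gives $\tilde O(n/q)$. Your separate treatment of $S=\varnothing$ is harmless but unnecessary: under the convention $\delta(u,\varnothing)=\infty$, the geometric bound $\Pr[v_j\in\mathrm{ball}(u,S)]\le(1-q)^j$ already covers that event.
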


By definition, every  edge $\paren{x,y} \notin E_S$ satisfies $\delta\paren{x,p_{S}\paren{x}} \leq w\paren{x,y}$ and $\delta\paren{y,p_{S}\paren{y}} \leq w\paren{x,y}$. Consequently, any such edge along a  $u\squiggly v$ provides an upper bound for $\delta_{G}\paren{p_{S}\paren{x}, p_{S}\paren{y}}$. This structural guarantee is fundamental in our subsequent distance estimates.

\begin{observation}[\cite{BasKav2010}]~\label{obv:pathinball} 
Let $u\in V$, $S\subseteq V$ and $v\in \ball{u}{S}$. Then: $u\squiggly v \subseteq E_{S}$.
\end{observation}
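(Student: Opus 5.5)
We need to prove \obvref{pathinball}: for $v\in\ball{u}{S}$, every edge of a shortest path $u\squiggly v$ lies in $E_S$.

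The plan is to fix a shortest path $u=y_0,y_1,\ldots,y_{t}=v$ and show that each edge $\paren{y_{i-1},y_i}$ lies in $E_S\paren{y_{i-1}}$. By the definition of $E_S\paren{\cdot}$, this amounts to the inequality $w\paren{y_{i-1},y_i}<\delta\paren{y_{i-1},p_S\paren{y_{i-1}}}$.

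First, every vertex on the path is itself in the ball. Since $u\squiggly v$ is a shortest path and weights are non-negative, $\delta\paren{u,y_i}\le\delta\paren{u,v}<\delta\paren{u,p_S\paren{u}}$. Moreover, $\delta\paren{u,y_i}=\delta\paren{u,y_{i-1}}+w\paren{y_{i-1},y_i}$, because subpaths of shortest paths are shortest.

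Next, I bound $\delta\paren{y_{i-1},S}$ from below using the triangle inequality. For every $s\in S$ we have $\delta\paren{u,p_S\paren{u}}\le\delta\paren{u,s}\le\delta\paren{u,y_{i-1}}+\delta\paren{y_{i-1},s}$. Applying this to $s=p_S\paren{y_{i-1}}$ gives
$$\delta\paren{y_{i-1},p_S\paren{y_{i-1}}}\ge\delta\paren{u,p_S\paren{u}}-\delta\paren{u,y_{i-1}}.$$

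Finally, I combine the two steps:
$$\delta\paren{u,p_S\paren{u}}-\delta\paren{u,y_{i-1}}>\delta\paren{u,y_i}-\delta\paren{u,y_{i-1}}=w\paren{y_{i-1},y_i}.$$
Together with the previous inequality this yields the strict inequality $w\paren{y_{i-1},y_i}<\delta\paren{y_{i-1},p_S\paren{y_{i-1}}}$, so $\paren{y_{i-1},y_i}\in E_S\paren{y_{i-1}}\subseteq E_S$.

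The only delicate point is keeping the inequality strict. The strictness comes from the strict inequality in the definition of the ball, applied at $y_i$. Ties in the choice of pivot do not matter, since only the distance $\delta\paren{\cdot,S}$ enters the argument. I expect this to be the one place needing care; it is handled by applying the ball condition at the endpoint $y_i$ rather than at $y_{i-1}$.
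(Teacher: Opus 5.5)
Your proof is correct. The paper gives no proof of this observation and only cites Baswana and Kavitha; your argument is the standard one behind that citation: $\delta(y_{i-1},S)\ge\delta(u,S)-\delta(u,y_{i-1})>\delta(u,y_i)-\delta(u,y_{i-1})=w(y_{i-1},y_i)$, with the strictness coming from the ball condition at $y_i$ and the edge charged to $E_S(y_{i-1})$.
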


We observe that both the pivots themselves and the exact distances from every vertex to its pivot can be computed using a single SSSP computation, by introducing an auxiliary vertex connected to all vertices in $S$ with edges of weight $0$.

\begin{observation}[\cite{AinCheIndMot1999}]~\label{obv:pivotsdistance} 
The pivot $p_{S}\paren{u}$ of each vertex $u\in V$ and the exact distance $d\bracke{u,p_{S}\paren{u}} = \delta\paren{u,p_{S}\paren{u}}$ can be computed in $\tilde O \paren{m}$ time.
\end{observation}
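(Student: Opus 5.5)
The statement to prove is \obvref{pivotsdistance}: all pivots $p_S\paren{u}$ and all distances $\delta\paren{u,p_S\paren{u}}$ can be computed in $\tilde O\paren{m}$ time. The plan is to reduce the problem to a single-source shortest path computation in an augmented graph.

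\textbf{Construction.} Build $G'$ from $G$ by adding a new vertex $s^{*}$ and an edge $\paren{s^{*},s}$ of weight $0$ for every $s\in S$. The graph $G'$ has $n+1$ vertices and $m+\abs{S}\leq m+n$ edges. Since $G$ is undirected, we may assume it is connected, or else handle each component separately; hence $n\leq m+1$ and the size of $G'$ is $O\paren{m}$. All weights remain non-negative, so Dijkstra's algorithm with a binary heap runs from $s^{*}$ in $O\paren{m\log n}=\tilde O\paren{m}$ time.

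\textbf{Correctness of the distances.} We claim $\delta_{G'}\paren{s^{*},u}=\delta_G\paren{u,S}$ for every $u\in V$.

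For the upper bound, any $s\in S$ gives a path in $G'$ consisting of the edge $s^{*}\to s$ followed by a shortest $s\squiggly u$ path in $G$. Its weight is $\delta_G\paren{s,u}$.

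For the lower bound, take any $s^{*}$–$u$ path in $G'$. It starts with an edge $\paren{s^{*},s}$ of weight $0$ for some $s\in S$. Its suffix from the last visit to $s^{*}$ onward is a path in $G$ from some $s'\in S$ to $u$. Since weights are non-negative, the whole path weighs at least $\delta_G\paren{s',u}\geq\delta_G\paren{u,S}$.

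\textbf{Extracting the pivots.} To identify the pivot itself, record during Dijkstra, for every settled vertex, the \emph{first vertex after $s^{*}$} on its shortest-path-tree path. This label is propagated along tree edges. When $u$ is settled with parent $x\neq s^{*}$, set $\mathrm{root}\paren{u}=\mathrm{root}\paren{x}$; when the parent is $s^{*}$, set $\mathrm{root}\paren{u}=u$.

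The tree path from $s^{*}$ to $u$ is a shortest path, so the subpath from $\mathrm{root}\paren{u}\in S$ to $u$ lies entirely in $G$ and has weight $\delta_{G'}\paren{s^{*},u}=\delta_G\paren{u,S}$. Therefore $\mathrm{root}\paren{u}$ is a valid pivot, and we output $p_S\paren{u}=\mathrm{root}\paren{u}$ together with $d\bracke{u,p_S\paren{u}}=\delta_{G'}\paren{s^{*},u}$.

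\textbf{Main obstacle: consistent tie-breaking.} The one delicate point is making ties break by a consistent criterion, as the definition of pivots requires. The approach is to run Dijkstra with a lexicographic key $\paren{\text{distance},\ \text{id of root}}$, relaxing whenever either component improves. With this key every vertex receives the minimum-identifier nearest sampled vertex. This is a standard perturbation argument: the lexicographic order is a valid totally ordered monoid under addition of weights, so Dijkstra's correctness is preserved. The running time stays $\tilde O\paren{m}$.
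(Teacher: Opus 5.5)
Your proof is correct and takes the same route as the paper, which justifies the observation by a single SSSP computation from an auxiliary vertex joined to every vertex of $S$ by weight-$0$ edges; your shortest-path-tree root labels and lexicographic tie-breaking make explicit what the paper leaves implicit. One small quibble: handling components separately only gives $n\le m+c$ for $c$ components, so the $\tilde O(m)$ bound (here and in the paper) really relies on the usual convention $m=\Omega(n)$, or should be read as $\tilde O(m+n)$.
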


Additionally, we can extend the notion of a ball to consider only vertices from a certain set $A\subseteq V$. Formally, we define
$\ball{u}{A}{S} = \bracce{ a\in A \midline \delta\paren{u,a} < \delta\paren{u,p_{S}\paren{u}}}$. Let $q'\in \paren{0,1}$ be the probability for a vertex from $A$ to belong to $S$. 

\begin{observation}[\cite{ThoZwi2005}]~\label{obv:bunchsize} 
$E\bracke{\abs{\ball{u}{A}{S}}} \in \tilde O \paren{\frac{1}{q'}}$.
\end{observation}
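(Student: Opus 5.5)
The statement to prove is \obvref{bunchsize}: the restricted ball $\ball{u}{A}{S}$ has expected size $\tilde O\paren{\frac{1}{q'}}$, where every vertex of $A$ joins $S$ independently with probability $q'$. The plan is the standard Thorup--Zwick ordering argument, run only over the vertices of $A$.

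Fix $u$. Sort the vertices of $A$ by nondecreasing distance from $u$, breaking ties consistently (for instance by vertex identifier), to get $a_1,a_2,\ldots,a_{|A|}$. By definition, a vertex $a_j$ lies in $\ball{u}{A}{S}$ only if $\delta\paren{u,a_j}<\delta\paren{u,p_S\paren{u}}\le \delta\paren{u,a_i}$ for every $a_i\in A\cap S$. So whenever $a_j$ is in the ball, none of $a_1,\ldots,a_j$ belongs to $S$. Otherwise some $a_i\in S$ with $i\le j$ would give $\delta\paren{u,a_i}\le\delta\paren{u,a_j}$, which contradicts the strict inequality.

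Independence of the sampling then gives $\Pr\bracke{a_j\in\ball{u}{A}{S}}\le \paren{1-q'}^{j}$. By linearity of expectation,
$$E\bracke{\abs{\ball{u}{A}{S}}}\le \msum{\paren{1-q'}^{j}}{j\ge 1}\le \frac{1-q'}{q'}\le \frac{1}{q'}.$$
This is even $O\paren{\frac{1}{q'}}$, so it lies in $\tilde O\paren{\frac{1}{q'}}$.

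The only delicate point is the relationship between $S$ and $A$. If $S$ also contains vertices outside $A$, those vertices can only shrink $\delta\paren{u,p_S\paren{u}}$, and therefore only shrink the ball. The inequality above needs only the event that $a_1,\ldots,a_j\notin S$, so it survives unchanged. Ties in distance are handled by the non-strict comparison $\delta\paren{u,a_i}\le\delta\paren{u,a_j}$, and no other step needs care.
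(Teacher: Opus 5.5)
Your argument is correct, and it is exactly the standard Thorup--Zwick ordering argument that the paper relies on by citation; the paper gives no proof of its own. The one step left implicit is that in the paper's applications $A=S_i$ is itself random, so you should condition on $A$ first (fixing the ordering $a_1,a_2,\ldots$). Your bound $\Pr[a_j \in \ball{u}{A}{S}] \le (1-q')^j$ then holds conditionally, and hence also for the unconditional expectation.
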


\subsection{Hierarchy of hitting sets}~\label{hierarchy}
We now consider $\ell$ sets $S_1, S_2, \ldots, S_\ell \subseteq V$ and probabilities $q_{1},q_{2},\ldots , q_{\ell} \in\paren{0,1}$ where, for each $i\in\bracke{\ell}$, a vertex from $S_{i-1}$ is independently sampled to $S_{i}$ with probability $q_{i}$. It follows that $S_{i-1}\supseteq S_{i}$. 

We denote $p_i\paren{u} = p_{S_i}\paren{u}$ and $E_{i}=E_{S_{i}}$. Let us consider $\ball{u}{i-1}$ and  $\ball{u}{i}$. As $S_{i-1}\supseteq S_{i}$, it follows that any vertex $v$ that satisfies $\delta\paren{u,v}<\delta\paren{u,S_{i-1}}$ also satisfies $\delta\paren{u,v}<\delta\paren{u,S_{i}}$. Hence, $\ball{u}{i-1}\subseteq \ball{u}{i}$. By the same argument it follows that $E_{i-1}\paren{u} \subseteq E_{i}\paren{u}$. Thus, $E_{i-1}\subseteq E_{i}$. This hierarchical structure of pivots, balls, and edge sets provides a framework for our distance estimations.

\begin{observation}[\cite{DorHalZwi2000, BasKav2006}]~\label{obv:ESiwithinESi+1}
For $i<j$ it holds that $S_{j}\subseteq S_{i}$ and  $ E_{i} \subseteq E_{j}$.
\end{observation}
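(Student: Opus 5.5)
The statement to prove is \obvref{ESiwithinESi+1}: for $i<j$ we have $S_j\subseteq S_i$ and $E_i\subseteq E_j$. The plan is to establish the one-step versions, $S_{i}\subseteq S_{i-1}$ and $E_{i-1}\subseteq E_{i}$, and then chain them by induction on $j-i$. Both inclusion relations are transitive, so the general case follows once the single step is in hand.

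\textbf{Vertex sets.} The inclusion $S_i\subseteq S_{i-1}$ holds by construction of the hierarchy. Each $S_i$ is obtained by sampling vertices of $S_{i-1}$ independently with probability $q_i$, so every member of $S_i$ is already a member of $S_{i-1}$. Applying this repeatedly gives $S_j\subseteq S_{j-1}\subseteq\cdots\subseteq S_i$.

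\textbf{Edge sets.} Using $S_i\subseteq S_{i-1}$, for every vertex $u$ we get
$$\delta\paren{u,S_{i-1}}=\mmin{\delta\paren{u,s}}{s\in S_{i-1}}\le \mmin{\delta\paren{u,s}}{s\in S_{i}}=\delta\paren{u,S_i},$$
since the minimum over a superset can only be smaller. By the definition of pivots, $\delta\paren{u,p_{i}\paren{u}}=\delta\paren{u,S_i}$, so this reads $\delta\paren{u,p_{i-1}\paren{u}}\le\delta\paren{u,p_i\paren{u}}$. Now take any edge $\paren{u,v}\in E_{i-1}\paren{u}$. Then
$$w\paren{u,v}<\delta\paren{u,p_{i-1}\paren{u}}\le\delta\paren{u,p_i\paren{u}},$$
so $\paren{u,v}\in E_i\paren{u}$. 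This gives $E_{i-1}\paren{u}\subseteq E_i\paren{u}$ for every $u$, and taking the union over $u\in V$ yields $E_{i-1}\subseteq E_i$. Iterating gives $E_i\subseteq E_j$.

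\textbf{Edge cases.} There is no real obstacle here; the only points needing care are conventions. If $S_i$ happens to be empty, set $\delta\paren{u,S_i}=\infty$; the inequalities above remain valid and $E_i\paren{u}$ becomes all edges at $u$, so the inclusion still holds. Tie-breaking among candidate pivots does not matter, because the argument uses only the distance $\delta\paren{u,S_i}$ and never the identity of the chosen pivot.
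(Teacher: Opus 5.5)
Your proposal is correct and follows the paper's argument: the paper likewise takes $S_i\subseteq S_{i-1}$ from the sampling construction, notes that $\delta(u,S_{i-1})\le\delta(u,S_i)$ implies $E_{i-1}(u)\subseteq E_i(u)$ for every $u$, and takes the union over $u$. Your explicit chaining over consecutive levels and your handling of the empty set ($\delta(u,\varnothing)=\infty$) agree with the paper's conventions $S_{\ell+1}=\varnothing$ and $E_{\ell+1}=E$.
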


For completeness, we set $S_{0}= V$ and $S_{\ell+1} = \varnothing$. This proves to be consistent with the hierarchy definition we presented, and it ensures that $E_0 = \varnothing$ and $E_{\ell+1}=E$. Hence, $V = S_0 \supseteq S_1 \supseteq S_2 \supseteq \ldots \supseteq S_{\ell} \supseteq S_{\ell+1} = \varnothing$ and $\varnothing = E_{0} \subseteq E_{1} \subseteq E_{2} \subseteq \ldots \subseteq E_{\ell} \subseteq E_{\ell+1} = E$. Let $j\in \bracke{\ell}$. By applying \obvref{hssize} and \obvref{ballsize}:

\begin{observation}[\cite{DorHalZwi2000, BasKav2006}]~\label{obv:hssize2}
$\abs{S_{j}} \in \tilde O\paren{n\cdot \mprod{q_{k}}{k=1}{j}}$. 
\end{observation}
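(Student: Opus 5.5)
The statement is \obvref{hssize2}: for every $j\in\bracke{\ell}$, $\abs{S_{j}} \in \tilde O\paren{n\cdot \mprod{q_{k}}{k=1}{j}}$. The plan is to show that membership of a vertex in $S_j$ is a product of independent Bernoulli events, and then apply linearity of expectation, exactly as in \obvref{hssize} but iterated along the hierarchy.

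First, fix a vertex $v\in V=S_0$. By the construction in \secref{hierarchy}, $v\in S_j$ holds if and only if $v\in S_{i-1}$ and $v$ was retained when $S_i$ was sampled from $S_{i-1}$, for every $i\in\bracke{j}$. I will prove by induction on $j$ that $\Pr\bracke{v\in S_j}=\mprod{q_k}{k=1}{j}$.

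\begin{itemize}
\item The base case $j=0$ is trivial, since $S_0=V$.
\item For the inductive step, the sampling at level $j$ is independent of earlier levels and keeps each member of $S_{j-1}$ with probability $q_j$. Hence
$$\Pr\bracke{v\in S_j}=\Pr\bracke{v\in S_{j-1}}\cdot q_j=\mprod{q_k}{k=1}{j}.$$
\end{itemize}

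Summing over all $n$ vertices and using linearity of expectation gives $E\bracke{\abs{S_j}}=n\cdot\mprod{q_k}{k=1}{j}$. This is the sense in which the paper states \obvref{hssize}: the $\tilde O$ hides nothing beyond the expectation.

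If a high-probability bound is wanted instead, $\abs{S_j}$ is a sum of $n$ independent indicators, so a Chernoff bound gives concentration. Specifically, $\abs{S_j}\in O\paren{n\prod_{k=1}^{j} q_k+\log n}$ with high probability. A union bound over the $\ell\le O(\log n)$ (or constant) levels then yields the bound simultaneously for all $j$. Alternatively, the statement follows directly from \obvref{hssize} by viewing $S_j$ as a single sample of $V$ with effective probability $q=\mprod{q_k}{k=1}{j}$.

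The only point needing care is independence across levels, i.e., that the marginal probability really multiplies. That is guaranteed by the construction, so I expect no real obstacle here.
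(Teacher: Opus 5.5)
Your proof is correct and is essentially the paper's argument. The paper gives no separate proof; it obtains this observation by applying \obvref{hssize} with the effective sampling probability $\prod_{k=1}^{j} q_k$. Your induction on the marginal $\Pr\bracke{v\in S_j}$, followed by linearity of expectation, makes that step explicit, and the Chernoff remark is a harmless strengthening.
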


\begin{observation}[\cite{DorHalZwi2000, ThoZwi2005}]~\label{obv:ballsize2}
$\abs{\ball{u}{j}} \in \tilde O\paren{\mprod{\frac{1}{q_{k}}}{k=1}{j}}$. 
\end{observation}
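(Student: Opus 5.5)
The plan is to go by induction on $j$, using the hierarchy. The base case $j=0$ is trivial: $S_0=V$, every vertex is its own pivot, and $\ball{u}{0}=\varnothing$, so the bound $\tilde O(1)$ holds.

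For the inductive step, the inclusion $\ball{u}{j-1}\subseteq\ball{u}{j}$ alone is too weak. It only bounds the smaller ball by the larger one, in the wrong direction. Instead I would run the standard Thorup--Zwick counting argument directly at level $j$, conditioned on $S_{j-1}$.

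Fix $u$ and order the vertices of $V$ by their distance from $u$, breaking ties consistently. Split $\ball{u}{j}$ into two parts:
\begin{itemize}[label={}]
\item the vertices of $S_{j-1}$ that lie in $\ball{u}{j}$, which is exactly $\ball{u}{S_{j-1}}{S_j}$;
\item the vertices outside $S_{j-1}$.
\end{itemize}

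For the first part, apply \obvref{bunchsize} with $A=S_{j-1}$ and $q'=q_j$. Each vertex of $S_{j-1}$ enters $S_j$ independently with probability $q_j$. Scanning $S_{j-1}$ in increasing distance from $u$, the number of vertices seen before the first one sampled into $S_j$ is dominated by a geometric variable with mean $1/q_j$. So this part has expected size $O(1/q_j)$.

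For the second part, take any $v\in\ball{u}{j}$ that is not in $S_{j-1}$. Either $v\in\ball{u}{j-1}$, or $\delta(u,v)\ge\delta(u,S_{j-1})$. I would cover the second possibility by the balls of the $S_{j-1}$-vertices near $u$. The idea is to write $v$'s contribution through the $S_{j-1}$-vertices closer to $u$ than $p_j(u)$, with the union of their level-$(j-1)$ ``cells'' weighted by $\prod_{k<j}1/q_k$.

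More cleanly, I would avoid this decomposition and argue in one shot with the rank argument. The $r$-th closest vertex to $u$ lies in $\ball{u}{j}$ only if none of the $r-1$ closer vertices is in $S_j$. Each vertex is in $S_j$ independently with probability $\prod_{k\le j}q_k$, since the sampling stages are nested and independent. Summing $\sum_{r}(1-\prod_{k\le j}q_k)^{r-1}$ gives
\[
E\bracke{\abs{\ball{u}{j}}}\le\prod_{k=1}^{j}\frac{1}{q_k}.
\]
This is just \obvref{ballsize} applied with $q=\prod_{k\le j}q_k$, in the same way \obvref{hssize2} follows from \obvref{hssize}.

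The main step is justifying that membership in $S_j$ is an independent Bernoulli event with parameter $\prod_{k\le j}q_k$ for each vertex. That follows from the sampling description: a vertex enters $S_j$ iff it survives each of the $j$ independent coin flips. The $\tilde O$ accommodates the usual high-probability/log-factor version if a whp bound rather than an expectation is intended. In that case the whp bound comes from a union bound over the $n$ vertices, with each tail bounded by $(1-\prod q_k)^{c\log n/\prod q_k}\le n^{-c}$.
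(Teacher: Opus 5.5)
Your one-shot rank argument is correct and is essentially the paper's route. Membership in $S_j$ is an independent Bernoulli event with parameter $\prod_{k\le j}q_k$, so \obvref{ballsize} applies directly with $q=\prod_{k\le j}q_k$, just as \obvref{hssize2} follows from \obvref{hssize}. The induction framing and the unfinished two-part decomposition are never used by your final argument, so you can simply drop them.
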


Let $P$ be a path between $u$ and $v$ and let $i\in\bracke{\ell}$. As mentioned in \secref{notions}, an edge  $\paren{x,y}\in P$ such that $\paren{x,y}\notin E_i$ provides an upper bound on the distance from each endpoint $x$ or $y$ to its corresponding pivot $p_{i}\paren{x}$ or $p_{i}\paren{y}$. To properly make-use of this observation, we denote by $\sizeindex{P}{i}$ the number of edges in $P$ that are not in $E_i$. If $\sizeindex{P}{i}>0$, we denote by $\rightabovearrow{P}{i}$ the first edge, from $u$'s perspective, that does not belong to $E_{i} $ and by $\leftabovearrow{P}{i}$ the last edge, from $u$'s perspective, that does not belong to $E_{i}$. 

\begin{observation}~\label{obv:Pi=0,1}
$\sizeindex{P}{i}=0$ iff $P\subseteq E_{i}$ and $\sizeindex{P}{i}=1$ iff $\rightabovearrow{P}{i} = \leftabovearrow{P}{i}$.
\end{observation}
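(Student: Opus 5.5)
Both equivalences in \obvref{Pi=0,1} follow directly from the definition of $\sizeindex{P}{i}$ as the number of edges of $P$ that lie outside $E_i$, together with the definitions of $\rightabovearrow{P}{i}$ and $\leftabovearrow{P}{i}$ as the first and last such edges in the order of $P$ starting from $u$. My plan is to prove each equivalence by unfolding these definitions; no earlier result is needed beyond the notation of \secref{hierarchy}.

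For the first equivalence, $\sizeindex{P}{i}=0$ means exactly that no edge of $P$ lies outside $E_i$. This is the same as saying every edge of $P$ belongs to $E_i$, which is $P\subseteq E_i$ when $P$ is viewed as a set of edges. Both directions hold at once.

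For the second equivalence, I first note that $\sizeindex{P}{i}=1$ forces $\sizeindex{P}{i}>0$, so both arrows are defined. Let $e_1,\ldots,e_r$ be the edges of $P$ not in $E_i$, listed in order along $P$ from $u$, so $r=\sizeindex{P}{i}$. By definition $\rightabovearrow{P}{i}=e_1$ and $\leftabovearrow{P}{i}=e_r$.
\begin{itemize}
\item If $r=1$, then $e_1=e_r$, so the two arrows coincide.
\item Conversely, suppose $r\geq 1$ and $e_1=e_r$. The first and last edges of the list are the same position along $P$, so the list has a single element and $r=1$.
\end{itemize}

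The one subtlety, and the only place I would take care, is the converse direction. The equality $\rightabovearrow{P}{i}=\leftabovearrow{P}{i}$ only makes sense when $\sizeindex{P}{i}>0$, so the statement should be read under that convention. Edges must also be identified by their position on $P$ rather than by endpoint pairs. For a shortest path, which is simple, no edge repeats, so these two readings agree. With that clarified, the argument is immediate.
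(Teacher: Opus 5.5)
Your proof is correct and takes the same approach as the paper: the paper states this observation without proof, as an immediate consequence of the definitions of $\sizeindex{P}{i}$, $\rightabovearrow{P}{i}$ and $\leftabovearrow{P}{i}$, and you simply unfold those definitions. Your remark about identifying edges by their position on $P$ (or taking $P$ simple, as the shortest paths to which the observation is applied may be) makes explicit a precision the paper leaves implicit.
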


We consider auxiliary edges, that connect each vertex to its pivot. Namely, these are $D_{i} = \bracce{\paren{u,p_{i}\paren{u}} \midline u\in V}$. We set $D = \mcup{D_{i}}{i=1}{\ell} $ as their union. 

\begin{observation}[\cite{DorHalZwi2000, CohZwi1997}]~\label{obv:edgesToPivots}
Let $i\in \bracke{\ell}$. Then: $\abs{D_{i}}\in O \paren{n}$. Additionally, $\abs{D}\in O\paren{n\cdot \ell}$. If $\ell \in \tilde O \paren{1}$ then $\abs{D} \in \tilde O\paren{n}$. 
\end{observation}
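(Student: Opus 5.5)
The plan for \obvref{edgesToPivots} is a direct counting argument based on the definition $D_{i}=\bracce{\paren{u,p_{i}\paren{u}} \midline u\in V}$. Fix $i\in\bracke{\ell}$. Each vertex $u\in V$ has exactly one pivot $p_{i}\paren{u}$, since \secref{nearest} fixes a consistent tie-breaking rule. So the map $u\mapsto \paren{u,p_{i}\paren{u}}$ sends $V$ onto $D_{i}$, and therefore $\abs{D_{i}}\leq \abs{V}=n$, which gives $\abs{D_{i}}\in O\paren{n}$.

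Two edge cases need a sentence each. If $u\in S_{i}$, then $p_{i}\paren{u}=u$ and the pair is a self-loop of weight $0$, which we simply discard; this only lowers the count. If $u$ has no vertex of $S_{i}$ in its connected component, then $\delta\paren{u,S_{i}}=\infty$ and no pivot edge is added; this also only lowers the count. In particular the bound holds deterministically, not just in expectation, so the randomness in sampling $S_{i}$ plays no role here.

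For the union, subadditivity of cardinality gives $\abs{D}=\abs{\mcup{D_{i}}{i=1}{\ell}}\leq \msum{\abs{D_{i}}}{i=1}{\ell}\leq n\cdot\ell$, hence $\abs{D}\in O\paren{n\cdot\ell}$. Finally, if $\ell\in\tilde O\paren{1}$, meaning $\ell$ is at most polylogarithmic in $n$, then $n\cdot\ell\in\tilde O\paren{n}$, which gives the last claim.

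I do not expect any real obstacle. The only point that needs care is that a pivot is well defined and unique per vertex, which is exactly what the consistent tie-breaking provides. Without it, a vertex with several closest members of $S_{i}$ could contribute several edges.
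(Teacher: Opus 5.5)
Your proposal is correct and matches the reasoning behind this observation, which the paper states without proof and attributes to prior work. The steps are the same: each vertex contributes at most one pivot edge per level (with ties broken consistently), you take a union bound over the $\ell$ levels to get $n\cdot\ell$, and since $\ell$ is polylogarithmic this is $\tilde O(n)$.
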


Additionally, we consider refined balls of the form $\ball{u}{i}{j}$ for $i < j$, which are generally smaller than the standard $\ball{u}{j}$. By \obvref{ballsize2} and \obvref{bunchsize}:

\begin{observation}[\cite{ThoZwi2005}]~\label{obv:bunchsize2}
Let $i,j\in \bracke{\ell}\cup\bracce{0}$ such that $i<j$. Then: $\abs{\ball{u}{i}{j}} \in \tilde O\paren{\mprod{\frac{1}{q_{k}}}{k=i+1}{j}}$. 
\end{observation}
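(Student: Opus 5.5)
The plan is to reduce this to \obvref{bunchsize}, applied to a single level of the hierarchy. Fix $u$ and indices $i<j$. The refined ball $\ball{u}{i}{j}$ is by definition the ball of $u$ restricted to the candidate set $A=S_i$, measured against the pivot $p_j\paren{u}$ of the smaller set $S_j\subseteq S_i$. So all we need is the probability $q'$ with which a vertex of $A=S_i$ lands in $S_j$. Once we have it, \obvref{bunchsize} gives $E\bracke{\abs{\ball{u}{i}{j}}}\in\tilde O\paren{\frac{1}{q'}}$.

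\textbf{Step 1: compute $q'$.} In the hierarchy of \secref{hierarchy}, a vertex of $S_{k-1}$ enters $S_k$ independently with probability $q_k$. Conditioned on $a\in S_i$, the vertex survives the levels $i+1,\ldots,j$ independently, so $\Pr\bracke{a\in S_j \midline a\in S_i}=\mprod{q_k}{k=i+1}{j}$. Hence $q'=\mprod{q_k}{k=i+1}{j}$, and $\frac{1}{q'}=\mprod{\frac{1}{q_k}}{k=i+1}{j}$, which is exactly the claimed bound.

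\textbf{Step 2: justify applying \obvref{bunchsize}.} That observation rests on the standard Thorup--Zwick argument. Order the vertices of $A$ by their distance from $u$, with ties broken consistently. The ball $\ball{u}{A}{S}$ consists of the elements of $A$ that precede the first element of $A$ lying in $S$. Note that $p_S\paren{u}\in S\subseteq A$, so the ball stops at or before the first sampled element of $A$. The count is therefore dominated by a geometric random variable with success probability $q'$, whose expectation is at most $\frac{1}{q'}$.

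\textbf{Main subtlety.} The obstacle is conditioning. The set $A=S_i$ is itself random, so the argument must condition on $S_i$ (equivalently, on the samples of levels $1,\ldots,i$). The sampling at levels $i+1,\ldots,j$ is independent of those earlier choices. Given $S_i$, each element of $S_i$ is in $S_j$ independently with probability $q'$, and the distance order from $u$ does not depend on the later coins. So the geometric bound holds conditionally, and taking expectation over $S_i$ preserves it.

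The boundary case $i=0$ gives $S_0=V$, and the statement reduces to \obvref{ballsize2}. The $\tilde O$ (rather than expectation) form follows as in \obvref{ballsize2}: either read it in expectation, or note that the geometric tail gives $O\paren{\frac{\log n}{q'}}$ with high probability, and then take a union bound over all $u$ and all pairs $i,j$.
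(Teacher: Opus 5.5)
Your proposal is correct and follows the paper's route. The paper derives this observation by applying \obvref{bunchsize} with $A=S_i$, $S=S_j$ and $q'=\prod_{k=i+1}^{j}q_k$ (the level-by-level survival probability behind \obvref{ballsize2}), which is exactly your Step 1; your Step 2, the geometric argument conditioned on $S_i$, just spells out the Thorup--Zwick reasoning that the paper takes from the citation.
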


For example, when $j = i+1$, we define
$\ball{u}{i}{i+1} = \bracce{ s\in S_{i} \midline \delta\paren{u,s} < \delta\paren{u,p_{i+1}\paren{u}}}$. These are
the members of $S_i$ that are closer to $u$ than its pivot $p_{i+1}\paren{u}$ in the next-level set $S_{i+1}$. Our result is focused around such sets, as they may provide a finer distance estimation than what could be achieved solely with the standard sets $\ball{u}{i}$ and the edge sets $E_i$.

Lastly, having these sets, we consider auxiliary edges from a vertex $u$ to members of $\ball{u}{i}{i+j}$ for $j\in\bracke{\ell}$ and $i\in\bracke{\ell-j}$. We define: $B_{j}\paren{u} = \bracce{ \paren{u,s} \midline \exists {i\in\bracke{\ell-j}\cup\bracce{0}}: s\in\ball{u}{i}{i+j}}$. For example, when $j=1$ we consider the set $B_{1}\paren{u} = \bracce{\paren{u,s} \midline \exists{i\in\bracce{0,1,\ldots,\ell-1}: s\in \ball{u}{i}{i+1}}}$. This is the set of edges from $u$ to any $s\in S_{i}$ such that $s\in \ball{u}{i}{i+1}$. By \obvref{bunchsize2}:

\begin{observation}~\label{obv:edgestobunches}
Let $u\in V$ and $j\in\bracke{\ell}$.  Then: $\abs{B_{j}\paren{u}} \in \tilde O\paren{ \msum{  \mprod{\frac{1}{q_{k}}}{k=i+1}{i+j}}{i=0}{\ell-j}}\subseteq \tilde O\paren{ \ell \cdot \mmax{\mprod{\frac{1}{q_{k}}}{k=i+1}{i+j}}{i=0}{\ell-j}}$. If $\ell\in \tilde O\paren{1}$ this becomes $\tilde O\paren{ \mmax{\mprod{\frac{1}{q_{k}}}{k=i+1}{i+j}}{i=0}{\ell-j}}$.
\end{observation}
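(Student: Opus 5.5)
The bound follows by decomposing $B_{j}\paren{u}$ according to the level $i$ that witnesses membership, and then applying \obvref{bunchsize2} to each part. By definition,
$$B_{j}\paren{u} = \mcup{\bracce{\paren{u,s} \midline s\in \ball{u}{i}{i+j}}}{i=0}{\ell-j},$$
so a union bound gives
$$\abs{B_{j}\paren{u}} \leq \msum{\abs{\ball{u}{i}{i+j}}}{i=0}{\ell-j}.$$
Pairs $\paren{u,s}$ that are witnessed by several values of $i$ are counted more than once. This only overcounts, so the inequality is in the correct direction.

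For each fixed $i\in\bracce{0,1,\ldots,\ell-j}$ we have $i<i+j\leq \ell$. Hence \obvref{bunchsize2} applies with the pair $\paren{i,i+j}$ and yields
$$\abs{\ball{u}{i}{i+j}}\in\tilde O\paren{\mprod{\frac{1}{q_{k}}}{k=i+1}{i+j}}.$$
Here $\ball{u}{i}{i+j}$ is the refined ball of members of $S_i$ closer to $u$ than $p_{i+j}\paren{u}$. The bound holds because each element of $S_i$ survives into $S_{i+j}$ independently with probability $\mprod{q_{k}}{k=i+1}{i+j}$, which is \obvref{bunchsize} with $A=S_i$ and $q'$ equal to this product.

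Summing the $\ell-j+1\leq \ell$ terms (at most $\ell+1$ in the degenerate case) gives the first containment. For the second, each term is at most the maximum over $i$, and there are $O\paren{\ell}$ terms, so
$$\abs{B_{j}\paren{u}} \in \tilde O\paren{\ell\cdot\mmax{\mprod{\frac{1}{q_{k}}}{k=i+1}{i+j}}{i=0}{\ell-j}}.$$
Finally, if $\ell\in\tilde O\paren{1}$, the factor $\ell$ is polylogarithmic and is absorbed into the $\tilde O$ notation. This gives the last form of the statement.

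The only delicate point is the nature of the bounds. The bounds coming from \obvref{bunchsize2} hold in expectation, or with high probability by the standard Thorup–Zwick argument. The argument therefore relies on linearity of expectation, or on a union bound over the $O\paren{\ell}$ levels and the $n$ vertices for the high-probability version, to combine them. Since $\ell\leq n$, that union bound costs only a polylogarithmic factor. Everything else is direct bookkeeping.
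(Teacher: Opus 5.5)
Your proof is correct and is the same argument the paper uses. The paper obtains this observation directly from \obvref{bunchsize2}: it writes $B_{j}\paren{u}$ as the union over $i\in\bracce{0,\ldots,\ell-j}$ of the edges from $u$ to the corresponding refined balls, sums the resulting $\ell-j+1\leq\ell$ bounds, and absorbs $\ell$ when it is polylogarithmic. Your parenthetical about $\ell+1$ terms is unnecessary, since $j\geq 1$.
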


For a subset $A\subseteq V$, we define $B_{j}\paren{A} = \mcup{B_{j}\paren{a}}{a\in A}$. The size of $B_{j}\paren{A}$ for certain subsets $A$ will be of high importance for our emulators' size analysis:

\begin{observation}~\label{obv:edgestobunches2}
Let $A\subseteq  V$,  $j\in\bracke{\ell}$ and assume $\ell\in \tilde O\paren{1}$.  Then: $\abs{B_{j}\paren{A}} \in \tilde O\paren{\abs{A} \cdot \mmax{\mprod{\frac{1}{q_{k}}}{k=i+1}{i+j}}{i=0}{\ell-j}}$.
\end{observation}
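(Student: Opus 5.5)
The bound follows by writing $B_{j}\paren{A}$ as a union over $a\in A$, bounding each term with \obvref{edgestobunches}, and summing. By definition $B_{j}\paren{A} = \mcup{B_{j}\paren{a}}{a\in A}$, so the union bound gives
$$\abs{B_{j}\paren{A}} \leq \msum{\abs{B_{j}\paren{a}}}{a\in A}.$$
Each $B_{j}\paren{a}$ is itself a union over $i\in\bracke{\ell-j}\cup\bracce{0}$ of the edges from $a$ to $\ball{a}{i}{i+j}$. Hence
$$\abs{B_{j}\paren{a}} \leq \msum{\abs{\ball{a}{i}{i+j}}}{i=0}{\ell-j}.$$

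The per-vertex estimate comes from \obvref{bunchsize2}, applied with indices $i<i+j$, which yields $\abs{\ball{a}{i}{i+j}}\in\tilde O\paren{\mprod{\frac{1}{q_{k}}}{k=i+1}{i+j}}$. This is exactly \obvref{edgestobunches}. I will bound the sum of at most $\ell+1$ terms by $\paren{\ell+1}$ times the maximum term. Since $\ell\in\tilde O\paren{1}$, the factor $\ell+1$ is absorbed into the $\tilde O$. Summing over $a\in A$ then gives $\abs{A}\cdot\tilde O\paren{\mmax{\mprod{\frac{1}{q_{k}}}{k=i+1}{i+j}}{i=0}{\ell-j}}$, which is the claimed bound.

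The only delicate point is that the bounds in \obvref{ballsize} and \obvref{bunchsize2} hold in expectation, or with high probability after a logarithmic loss. I will read the statement in the same expected sense as the earlier observations. Under that reading, linearity of expectation lets the per-vertex expected bounds add up over $A$ when $A$ is fixed independently of the sampling.

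If $A$ depends on the sampled hierarchy (for example $A=S_{i}$), I would instead use the standard high-probability form: with probability $1-n^{-c}$, every ball satisfies the bound with an extra $O\paren{\log n}$ factor. A union bound over the $n$ vertices and the $\ell+1$ levels then gives a uniform bound that holds simultaneously for all $a\in A$, whatever $A$ is. This uniformity issue is the main, though mild, obstacle. The rest is immediate.
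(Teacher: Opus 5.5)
Your proposal is correct and follows the same route the paper intends: write $B_{j}(A)$ as the union of the per-vertex sets $B_{j}(a)$, bound each one by \obvref{edgestobunches} (that is, \obvref{bunchsize2} summed over the levels $i=0,\dots,\ell-j$, with the resulting factor of at most $\ell$ absorbed since $\ell\in\tilde O(1)$), and sum over $a\in A$. Your extra remark, that a sample-dependent $A$ such as $S_{1}$ is best handled via the high-probability form of the ball bounds holding uniformly over all vertices, is a valid refinement of a point the paper leaves implicit.
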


\section{Overview}\label{overview}
To construct our emulator $H=\paren{V,F,w_{H}}$, we utilize a hierarchy of nested hitting sets $V = S_{0} \supseteq S_{1} \supseteq \ldots \supseteq S_{k} = \varnothing$, where each set $S_{i}$ defines an $i^{\textnormal{th}}$ level pivot $p_{i}\paren{u}$ for every vertex $u \in V$. At a high level, our construction proceeds in two phases:

\begin{enumerate}
    \item Sample each $u\in S_i$ into $S_{i+1}$ independently with probability $q_{i+1} = q = n^{-\frac{1}{k}}$ to form the hierarchy,
    \item Construct $F$ as the union of the edge sets\footnote{We observe that each set is of size $\tilde O \paren{n^{1+\frac{1}{k}}}$.}: $D$, $E_{1}$, $S_{i-1}\times S_{k-i}$ for $i\in \bracke{k}$, $B_{1}\paren{V}$ and $B_{2}\paren{S_{1}}$.
\end{enumerate}

While such hierarchies are also foundational for \textbf{A}ll-\textbf{P}airs \textbf{A}pproximate \textbf{S}hortest \textbf{P}aths (\emph{APASP}) \cite{CohZwi1997, DorHalZwi2000, BasKav2010} and \textbf{A}pproximate \textbf{D}istance \textbf{O}racles (\emph{ADO}) \cite{ThoZwi2005, BasKav2010, DorForKirNazVasVos2023}, the construction of an emulator fundamentally differs from them.  APASP and ADO algorithms optimize runtime by invoking \textbf{S}ingle \textbf{S}ource \textbf{S}hortest \textbf{P}aths (\emph{SSSP}) from small vertex  sets over edge set (possibly, different) of corresponding size.  In contrast, an emulator must commit to a single edge set $F$.

This commitment precludes techniques standard in APASP or ADO. For example, Cohen and Zwick \cite{CohZwi1997} can \qoute{bypass} denser regions of the graph by adding auxiliary edges $\bracce{u} \times V$ per SSSP invocation from $u$. For an emulator to commit to  all such sets yields the complete cartesian product $V \times V$. A similar phenomenon appears in Dor, Halperin and Zwick \cite{DorHalZwi2000} or in  Roditty and Sapir \cite{RodSap2025}, when an APASP algorithm utilizes cartesian products $S_j \times S_\ell$ in an SSSP invocation. The size of these products is proportional to the source set from which the SSSP is invoked. In contrast, emulators may commit only to one size of a cartesian product. Finally, there is an issue of the nested edge sets: invoking SSSP from a source set $S_{i}$ over a specific edge set $E_{i+1}$, ensures the total complexity remains within   $\abs{S_{i}} \cdot \abs{E_{i+1}}$. This is most notably used by Baswana and Kavitha \cite{BasKav2010}.  In the case of $i=k-1$ for example, this results in $E_{k}=E$. While this is permissible in APASP due to the small size of $S_{k-1}$, an emulator attempting to replicate this step would be forced to include $E$ entirely. 

Consequently, identical hierarchies yield strictly weaker bounds for emulators than for APASP. For instance, the same hierarchy used to achieve the $+2W_{1}$-APASP algorithm of Cohen and Zwick \cite{CohZwi1997} can only be utilized to achieve the  $+4W_{1}$-emulator of Elkin, Gitlitz and Neiman \cite{ElkGitNei2021}. This discrepancy is emphasized in the additive regime. While additive APASP permits various runtime-stretch trade-offs \cite{SahYe2023, RodSap2025, CohZwi1997}, the $\tilde \Omega\paren{n^{\frac{4}{3}}}$ lower bound of Abboud and Bodwin \cite{AbbBod2015} prohibits the existence of additive emulators.

Our approach relies on bounding distance estimations between pivots of vertices along the shortest path. To maintain the required emulator size of $\tilde O\paren{n^{1+\frac{1}{k}}}$, we can only afford auxiliary edges bridging pivots of levels $i$ and $j$ if $i+j=k-1$. Consequently, as $k$ increases, our construction must connect increasingly higher level pivots. Since $\delta_{G}\paren{x,p_{i+1}\paren{x}} \geq \delta_{G}\paren{x,p_{i}\paren{x}}$ for any $x \in V$, upper bounding distances through higher level pivots inherently yields a \qoute{weaker} approximation. In other words, as $k$ increases, the overall stretch increases as well. While the resulting stretch and size  depends on $k$, our case analysis approach circumvents any structural dependence on $k$. Let  $u,v\in V$ and fix a  $u\squiggly v$. We classify $u\squiggly v$ into exactly one of  three cases, based solely on edges missing from $E_{1}$:

\begin{enumerate}
    \item \case{a}: $\sizeindex{u\squiggly v}{1}=0$,
    \item \case{b}: $\paren{x,y}=\rightabovearrow{u\squiggly v}{1}=\leftabovearrow{u\squiggly v}{1}$,
    \item \case{c}:  $\paren{x,y}=\rightabovearrow{u\squiggly v}{1}\neq \leftabovearrow{u\squiggly v}{1}=\paren{z,w}$.
\end{enumerate}

If \case{a} occurs, $\delta_{H}\paren{u,v}=\delta_{G}\paren{u,v}$ as $E_{1}\subseteq F$. Roughly speaking, \case{b} increases the additive stretch, while \case{c} increases the multiplicative stretch. For example, when moving from $k=3$ to $k=4$, we increase the multiplicative stretch from $1$ to $3$ due to \case{c}. As $\abs{S_{1}} = n^{\frac{3}{4}}$ and we aim for size $n^{\frac{5}{4}}$, we cannot use an auxiliary edge of the form $S_{1}\times S_{1}$. We can, however, use $S_{1}\times S_{2}$, since $\abs{S_{2}} = n^{\frac{1}{2}}$. While $\delta_{G}\paren{x,p_{1}\paren{x}}\leq w\paren{x,y}$ and $ \delta_{G}\paren{w,p_{1}\paren{w}}\leq w\paren{z,w}$, we bound the distances to the second level pivots by $\delta_{G}\paren{x,p_{2}\paren{x}}\leq \delta_{G}\paren{x,w}+ w\paren{z,w}$ and $\delta_{G}\paren{w,p_{2}\paren{w}}\leq \delta_{G}\paren{x,w}+ w\paren{x,y}$, under additional assumptions specified later. At worst $u=x$ and $w=v$, thus our  upper bound is inherently restricted by $\delta_{G}\paren{x,w} \leq \delta_{G}\paren{u,v}$. Consequently, the multiplicative stretch increases from $1$ to $3$.

This establishes a stretch-size trade-off parameterized by $k$. Increasing $k$ improves sparsity but yields alternating penalties: an odd-to-even transition increases the multiplicative stretch by $2$, whereas an even-to-odd transition increases the additive stretch by $2W_{1}+2W_{2}$.  Formalizing this alternating behavior, we derive our generalized $\paren{2\cdot \floor{\frac{k}{2}}-1, 2\cdot \ceil{\frac{k}{2}} \cdot W_{1} + \max\bracce{0, 2\cdot\paren{\ceil{\frac{k}{2}}-2}}\cdot W_{2}}$-emulator. 

We initially present our constructions in a simplified conceptual form without prioritizing runtime efficiency. We assume that  the weight of any auxiliary edge $\paren{u,v}\in F$ that did not exist in $E$ is $w_{H} \paren{u,v} = \delta_{G} \paren{u,v}$. Finally, all our constructions can be  adapted to run in $\tilde O \paren{n^{2+\frac{1}{k}}}$ time.  This is achieved by a two-phase algorithmic approach:

\begin{enumerate}
    \item Invoke an $\paren{\tau,\eta}$-APASP algorithm for certain parameters $\tau$ and $\eta$, resulting in a distance estimation matrix $d$,
    \item Use these estimates to construct a sparse set of edges, resulting in an  $\paren{\alpha,\beta}$-emulator. The weights of each edge are given by the distance estimation matrix $d$.
\end{enumerate}

\subsection{Main Technical Contributions}\label{contributions}

Our framework introduces two  theoretical innovations. The first is algorithmic. While previous constructions utilize edges such as $B_{1}\paren{V}$, these sets are primarily effective for bounding distances to second level pivots. Relying exclusively on $B_{1}\paren{V}$ for higher level pivots leads to a rapidly increasing multiplicative stretch as  it accumulates over  the levels. To circumvent this, we introduce the auxiliary edge set $B_{2}\paren{S_{1}}$, which allows us to bound third level pivots directly through first level pivots. By maintaining tight distance estimates at these base hierarchy levels, we prevent the rapid growth of multiplicative stretch in all subsequent levels. Surprisingly, extending this approach to higher level edge sets $B_{j}\paren{S_{j-1}}$ for $j \geq 3$ (e.g. $B_{3}\paren{S_{2}}$) is ineffective. Such edge sets \qoute{shift} additive stretch into multiplicative stretch, making them redundant.

The second innovation regards our proof methodology. Classifying paths strictly by edges absent from $E_{1}$ ensures our
analysis remains constrained to three cases regardless of the hierarchy depth $k$. In contrast, a naive case distinction based on missing edges across the full hierarchy $E_{1}, E_{2}, \ldots, E_{k-1}$ might appear advantageous as it provides immediate distance information to higher level pivots, however, it necessitates an escalating number of $5+6k$ cases. Furthermore, it results in the same stretch bound. 

\subsection{Structure of the paper}\label{structure}
As a warm-up, we review in \secref{+4W1}  the $+4W_{1}$-emulator ($k=2$) of Elkin, Gitlitz and Neiman \cite{ElkGitNei2021}, establishing the case analysis framework for all subsequent proofs. Building on these mechanics, \secref{_3,4W1_} introduces a $\paren{3,4W_{1}}$-emulator ($k=3$) of size $\tilde O \paren{n^{\frac{5}{4}}}$. These results are generalized to arbitrary $k \in \mathbb{N}$ in \secref{_k-1,kW1+_k-4_W2_}, yielding a $\paren{2\cdot \left\lfloor \frac{k}{2} \right\rfloor-1, 2\cdot \left\lceil \frac{k}{2} \right\rceil \cdot W_{1}+\max\left\{0,2\cdot\left(\left\lceil\frac{k}{2}\right\rceil-2\right)\right\}\cdot W_{2}}$-emulator of size $\tilde O \paren{n^{1+\frac{1}{k}}}$. Finally, we discuss in \secref{_k-1,2k-4_}  the advantages of our emulator  compared to the classical result of Thorup and Zwick \cite{ThoZwi2006}, for vertices whose distance is \qoute{short} (i.e. $\leq O\paren{3^{k^2}}$).
We present a runtime efficient implementation for our weighted emulators in \appref{runtime}. Lastly, we present a slightly refined stretch analysis in \appref{refined}.

\section{Warm-up: additive $+4W_{1}$-emulator}\label{+4W1}

\begin{algorithm}[H]
\caption{$+4W_{1}$-emulator$\paren{G=\paren{V,E,w}}$}\label{alg:+4W1}

\textbf{Input:} An undirected non-negative weighted graph $G=\paren{V,E,w}$.

\textbf{Output:} A $+4W_{1}$-emulator. 

\medskip

Compute APSP

Let $\beta,\gamma \in \paren{ 0 , 1}$ to be fixed later

Sample each vertex of $V$ with probability $n^{-\beta}$ to $S_{1}$

Sample each vertex of $S_{1}$ with probability $n^{-\gamma}$ to $S_{2}$

For every $u\in V$ find the pivot $p_{1}\paren{u}$ (respectively, $p_{2}\paren{u}$)  

Construct the set of edges $D$

Construct the set of edges $E_{1}$

Set $F\leftarrow D \cup E_{1} \cup \paren{S_{1} \times S_{1} }$

Initialize $w_{H}\leftarrow \varnothing$

\For{$\paren{u,v}\in F$}
    {
        $w_{H} \paren{u,v} \leftarrow \delta_{G} \paren{u,v}$
    }

\Return $H=\paren{V,F,w_{H}}$

\end{algorithm}

As a warm-up, we review the construction of the $+4W_{1}$-emulator by Elkin, Gitlitz and Neiman \cite{ElkGitNei2021}, which yields an emulator of size $\tilde O \paren{n^{\frac{4}{3}}}$. Following the framework in \secref{overview}, we assume distances are known a priori, as the final weights are defined by $w_{H}\paren{u,v} \leftarrow \delta_{G}\paren{u,v}$ for all $\paren{u,v}\in F$.

\algref{+4W1} works as follows: first, we sample vertices of $V$, with probability $n^{-\beta}$ to form the set $S_{1}$. Second, we sample vertices of $S_{1}$ with probability $n^{-\gamma}$ to form the set $S_{2}$. For any $u\in V$, we identify the designated pivots $p_{1}\paren{u}$ and $p_{2}\paren{u}$ and construct the set $D$ containing all edges between vertices and their respective pivots (See: \obvref{pivotsdistance}). Utilizing these distances, we construct the set $E_{1}$ consisting of edges strictly nearer than a pivot (See: \obvref{edgessize}). 

Finally, we define $F$ as the union of $D$, $E_{1}$, $S_{1}\times S_{1}$ and $V\times S_{2}$. We note that the set $S_{2}$, and consequently the edges $V\times S_{2}$, are not required for the stretch analysis of the present construction. Indeed, one may omit the second level altogether and obtain the same $+4W_{1}$-emulator guarantee using only $D$, $E_{1}$ and $S_{1}\times S_{1}$. We nevertheless retain $S_{2}$ and $V\times S_{2}$ in the construction, as they will become necessary when we later consider a runtime efficient implementation that does not assume all distances are known a priori (See: \appref{runtime_k=3}).

This concludes the description of \algref{+4W1}. Let $u,v\in V$ and fix a $u\squiggly v$. Our goal is to prove that  $\delta_{G}\paren{u,v}\leq \delta_{H}\paren{u,v}\leq \delta_{G}\paren{u,v}+4W_{1}\paren{u\squiggly v}$. To prove this, we consider a case analysis approach, based on the values of $\sizeindex{u\squiggly v}{1}$, $\rightabovearrow{u\squiggly v}{1}$ and $\leftabovearrow{u\squiggly v}{1}$ (See: \secref{hierarchy}).

\begin{itemize}
    \item \case{a}: $\sizeindex{u\squiggly v}{1} = 0 $,
    \item \case{b}: $\paren{x,y}=\rightabovearrow{u\squiggly v}{1}=\leftabovearrow{u\squiggly v}{1}$,
    \item \case{c}: $\paren{x,y}=\rightabovearrow{u\squiggly v}{1} \neq \leftabovearrow{u\squiggly v}{1}=\paren{z,w}$.
\end{itemize}

This case analysis approach serves as the foundational template for our generalized framework. \lemref{+4W1_case_a}, \lemref{+4W1_case_b} and \lemref{+4W1_case_c} bound the distance $\delta_{H}\paren{u,v}$ for \case{a}, \case{b} and \case{c}, respectively, after which we prove the overall sparsity is bounded by $\tilde O\paren{n^{\frac{4}{3}}}$.

\begin{lemma}~\label{lem:+4W1_case_a} 
   If \case{a} holds then $\delta_H\paren{u,v} = \delta_G\paren{u,v}$.
\end{lemma}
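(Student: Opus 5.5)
The plan is to prove the two inequalities $\delta_H\paren{u,v}\geq \delta_G\paren{u,v}$ and $\delta_H\paren{u,v}\leq \delta_G\paren{u,v}$ separately; the first is generic to the construction and the second uses the hypothesis of \case{a}.

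For the lower bound, observe that every edge $\paren{a,b}\in F$ receives weight $w_H\paren{a,b}=\delta_G\paren{a,b}$ in \algref{+4W1}. Take any path $a_0=u,a_1,\ldots,a_t=v$ in $H$. Its weight is $\msum{\delta_G\paren{a_{i-1},a_i}}{i=1}{t}$. By the triangle inequality in $G$ this sum is at least $\delta_G\paren{u,v}$. Minimizing over all $u$--$v$ paths in $H$ gives $\delta_H\paren{u,v}\geq\delta_G\paren{u,v}$. This argument does not depend on the case, so it can be reused verbatim for \case{b} and \case{c}.

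For the upper bound, \case{a} means $\sizeindex{u\squiggly v}{1}=0$. By \obvref{Pi=0,1} this is equivalent to $u\squiggly v\subseteq E_1$. Since $E_1\subseteq F$, every edge of the fixed shortest path is present in $H$. The weight of each such edge $\paren{a,b}$ in $H$ is $\delta_G\paren{a,b}\leq w\paren{a,b}$; in fact it is equal, because subpaths of shortest paths are shortest, so a single edge of $u\squiggly v$ is itself a shortest path between its endpoints. Hence the same path exists in $H$ with weight at most $w\paren{u\squiggly v}=\delta_G\paren{u,v}$, which gives $\delta_H\paren{u,v}\leq\delta_G\paren{u,v}$.

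There is no real obstacle here. The one subtlety worth stating explicitly is that $H$ reweights edges by $\delta_G$ rather than by $w$, which is why the inequality $\delta_G\paren{a,b}\leq w\paren{a,b}$ (or the equality for edges on shortest paths) is needed.
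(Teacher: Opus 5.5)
Your proof is correct and is essentially the paper's argument: under \case{a} the fixed shortest path lies in $E_{1}\subseteq F$, so it survives in $H$ with weight at most its original weight, giving $\delta_H\paren{u,v}\leq\delta_G\paren{u,v}$. The only difference is that you also prove the lower bound $\delta_H\paren{u,v}\geq\delta_G\paren{u,v}$ explicitly (every edge of $H$ is weighted by its $G$-distance, so the triangle inequality applies), whereas the paper uses it implicitly.
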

\begin{proof} 
Recall that $F$ includes $E_1$. As $\sizeindex{u\squiggly v}{1} =0$, it follows that $u\squiggly v\subseteq E_{1}$. Hence, $u\squiggly v \subseteq F$ which means that $\delta_H\paren{u,v} = \delta_G\paren{u,v}$.
\end{proof}

\begin{lemma}~\label{lem:+4W1_case_b} 
   If \case{b} holds then $\delta_H\paren{u,v} \leq \delta_G\paren{u,v} + 4W_{1}\paren{u\squiggly v}$.
\end{lemma}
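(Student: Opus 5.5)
In \case{b} the path $u\squiggly v$ has exactly one edge $\paren{x,y}$ outside $E_{1}$. I will route the emulator path through the pivots of $x$ and $y$, so that only a bounded detour around this single edge is paid for. Write $u\squiggly v = u\squiggly x \straight y \squiggly v$. By \obvref{Pi=0,1}, every edge of $u\squiggly x$ and of $y\squiggly v$ lies in $E_{1}\subseteq F$. Hence
\[
\delta_{H}\paren{u,x}=\delta_{G}\paren{u,x}
\quad\text{and}\quad
\delta_{H}\paren{y,v}=\delta_{G}\paren{y,v}.
\]

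Since $\paren{x,y}\notin E_{1}$, the edge is not in $E_{1}\paren{x}$ and not in $E_{1}\paren{y}$. The remark after \obvref{edgessize} therefore gives
\[
\delta_{G}\paren{x,p_{1}\paren{x}}\leq w\paren{x,y}
\quad\text{and}\quad
\delta_{G}\paren{y,p_{1}\paren{y}}\leq w\paren{x,y}.
\]
The edges $\paren{x,p_{1}\paren{x}}$ and $\paren{y,p_{1}\paren{y}}$ belong to $D\subseteq F$. The edge $\paren{p_{1}\paren{x},p_{1}\paren{y}}$ belongs to $S_{1}\times S_{1}\subseteq F$. All three carry weight equal to their $G$-distance.

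Next I bound the middle segment by the triangle inequality:
\[
\delta_{H}\paren{x,y}\leq \delta_{G}\paren{x,p_{1}\paren{x}}+\delta_{G}\paren{p_{1}\paren{x},p_{1}\paren{y}}+\delta_{G}\paren{y,p_{1}\paren{y}},
\]
and then in $G$
\[
\delta_{G}\paren{p_{1}\paren{x},p_{1}\paren{y}}\leq \delta_{G}\paren{x,p_{1}\paren{x}}+w\paren{x,y}+\delta_{G}\paren{y,p_{1}\paren{y}}.
\]
Combining these with the pivot bounds gives $\delta_{H}\paren{x,y}\leq 5w\paren{x,y} = w\paren{x,y}+4w\paren{x,y}$. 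Because $\paren{x,y}$ lies on $u\squiggly v$, we have $w\paren{x,y}\leq W_{1}\paren{u\squiggly v}$.

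Summing the three segments,
\[
\delta_{H}\paren{u,v}\leq \delta_{G}\paren{u,x}+w\paren{x,y}+\delta_{G}\paren{y,v}+4W_{1}\paren{u\squiggly v}=\delta_{G}\paren{u,v}+4W_{1}\paren{u\squiggly v}.
\]
The lower bound $\delta_{H}\paren{u,v}\geq\delta_{G}\paren{u,v}$ holds because every edge of $F$ has weight equal to the $G$-distance between its endpoints, so no path in $H$ is shorter than the corresponding distance in $G$.

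Two degenerate situations need a check. If $p_{1}\paren{x}=p_{1}\paren{y}$, the middle edge is simply omitted and the bound only improves. If $x$ or $y$ is itself in $S_{1}$, its pivot is itself and its pivot distance is $0$, which again only helps. The main obstacle is just getting the factor $4$ rather than a looser constant, and the two-sided triangle inequality through both pivots handles this directly.
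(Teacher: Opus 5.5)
Your proof is correct and follows the paper's argument: you route through $x \to p_{1}(x) \to p_{1}(y) \to y$ using edges of $D$ and $S_{1}\times S_{1}$, apply the triangle inequality $\delta_{G}(p_{1}(x),p_{1}(y)) \leq \delta_{G}(x,p_{1}(x)) + w(x,y) + \delta_{G}(y,p_{1}(y))$, and bound both pivot distances by $w(x,y) \leq W_{1}$. Isolating the middle segment as $\delta_{H}(x,y) \leq 5w(x,y)$ before summing only regroups the paper's chain of inequalities, and your added lower-bound and degenerate-case remarks are harmless extras.
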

\begin{proof} 
Let $\paren{x,y} \in u\squiggly v$ be the sole edge such that $\paren{x,y} \notin E_{1}$. By the definition of $E_{1}$, this implies $\delta_{G}\paren{x,p_{1}\paren{x}}, \delta_{G}\paren{y,p_{1}\paren{y}} \leq w\paren{x,y}$. Since $\paren{x,y}$ is the only edge from $u\squiggly v$ missing from $E_{1}$, the subpaths satisfy $u\squiggly x, y\squiggly v \subseteq E_{1}$. Furthermore, the auxiliary edges $\paren{x,p_{1}\paren{x}},\paren{y,p_{1}\paren{y}}\in D$ and $\paren{p_{1}\paren{x},p_{1}\paren{y}}\in S_{1}\times S_{1}$. Consequently, the entire path $u\squiggly x \straight p_{1}\paren{x} \straight p_{1}\paren{y} \straight y \squiggly v\subseteq  E_{1} \cup D \cup \paren{S_{1}\times S_{1}} \subseteq F$. Therefore:

\begin{alignat*}{3}
        \delta_H\paren{u,v} & \leq \textubrace{ \delta_G\paren{u,x}+ \delta_G\paren{x,p_{1}\paren{x}} +  \delta_G\paren{p_{1}\paren{x},p_{1}\paren{y}} +  \delta_G\paren{p_{1}\paren{y},y}+ \delta_G\paren{y,v}}{Our preceding discussion} &\\
        & \leq \textubrace{ \delta_G\paren{u,x}+ 2\delta_G\paren{x,p_{1}\paren{x}} +  w\paren{x,y} +  2\delta_G\paren{p_{1}\paren{y},y}+ \delta_G\paren{y,v}}{Triangle inequality}\\
        & \leq \textubrace{ \delta_G\paren{u,v}+ 2\delta_G\paren{x,p_{1}\paren{x}} +  2\delta_G\paren{p_{1}\paren{y},y}}{$u\squiggly x\straight y \squiggly v$ is a shortest path}\\
        &\leq \textubrace{\delta_G\paren{u,v } + 2 \cdot w\paren{x,y}+ 2 \cdot w\paren{x,y}}{$\paren{x,y}\notin E_{1}$}\\
        & \leq \textubrace{ \delta_G\paren{u,v } + 4 \cdot W_{1}\paren{u\squiggly v}}{$w\paren{x,y} \leq W_{1}\paren{u\squiggly v}$ }
    \end{alignat*}

\end{proof}

\begin{lemma}~\label{lem:+4W1_case_c} 
   If \case{c} holds then $\delta_H\paren{u,v} \leq \delta_G\paren{u,v} + 2W_{1}\paren{u\squiggly v}+2W_{2}\paren{u\squiggly v}$.
\end{lemma}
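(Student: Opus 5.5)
We mirror the proof of \lemref{+4W1_case_b}, replacing the single missing edge by the two extreme missing edges. Let $\paren{x,y}=\rightabovearrow{u\squiggly v}{1}$ and $\paren{z,w}=\leftabovearrow{u\squiggly v}{1}$, ordered so that $x$ precedes $y$, $y$ precedes or equals $z$, and $z$ precedes $w$ along $u\squiggly v$ from $u$'s side. Since these are the first and last edges of $u\squiggly v$ outside $E_{1}$, the subpaths $u\squiggly x$ and $w\squiggly v$ lie entirely in $E_{1}\subseteq F$. Moreover, $\paren{x,y}\notin E_{1}$ and $\paren{z,w}\notin E_{1}$ give $\delta_G\paren{x,p_{1}\paren{x}}\leq w\paren{x,y}$ and $\delta_G\paren{w,p_{1}\paren{w}}\leq w\paren{z,w}$. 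The edges $\paren{x,p_{1}\paren{x}}$ and $\paren{w,p_{1}\paren{w}}$ belong to $D$, and $\paren{p_{1}\paren{x},p_{1}\paren{w}}\in S_{1}\times S_{1}$. Hence the path $u\squiggly x\straight p_{1}\paren{x}\straight p_{1}\paren{w}\straight w\squiggly v$ is contained in $F$, and every emulator edge has weight equal to the true distance.

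Next we bound the length of this path. By the triangle inequality, $\delta_G\paren{p_{1}\paren{x},p_{1}\paren{w}}\leq \delta_G\paren{x,p_{1}\paren{x}}+\delta_G\paren{x,w}+\delta_G\paren{w,p_{1}\paren{w}}$. Since $x$ and $w$ lie on the shortest path in this order, $\delta_G\paren{u,x}+\delta_G\paren{x,w}+\delta_G\paren{w,v}=\delta_G\paren{u,v}$. Combining,
\begin{alignat*}{3}
\delta_H\paren{u,v} &\leq \delta_G\paren{u,v}+2\delta_G\paren{x,p_{1}\paren{x}}+2\delta_G\paren{w,p_{1}\paren{w}}\\
&\leq \delta_G\paren{u,v}+2w\paren{x,y}+2w\paren{z,w}.
\end{alignat*}

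Finally, $\paren{x,y}$ and $\paren{z,w}$ are two distinct edges of $u\squiggly v$, so the larger of their weights is at most $W_{1}\paren{u\squiggly v}$ and the smaller is at most $W_{2}\paren{u\squiggly v}$. Therefore $2w\paren{x,y}+2w\paren{z,w}\leq 2W_{1}\paren{u\squiggly v}+2W_{2}\paren{u\squiggly v}$. The lower bound $\delta_H\paren{u,v}\geq\delta_G\paren{u,v}$ holds because every edge of $F$ has weight equal to the $G$-distance between its endpoints.

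The main point to verify is that distinctness of the two edges is exactly what allows charging them to $W_{1}$ and $W_{2}$ rather than to $2W_{1}$. This is routine once the ordering of $x,y,z,w$ along the path is made explicit; the remainder of the argument is bookkeeping identical to \case{b}.
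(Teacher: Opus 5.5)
Your proposal is correct and follows the paper's proof essentially step for step. You use the same route $u\squiggly x \to p_{1}(x) \to p_{1}(w) \to w \squiggly v$ through $D$ and $S_{1}\times S_{1}$ and the same triangle-inequality bound. You also charge $2w(x,y)+2w(z,w)$ to $2W_{1}+2W_{2}$ using the distinctness of the two edges, exactly as the paper does.
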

\begin{proof} 
Let $\paren{x,y}=\rightabovearrow{u\squiggly v}{1}$ and $\paren{z,w} = \leftabovearrow{u\squiggly v}{1}$. Therefore, $\delta_{G}\paren{x,p_{1}\paren{x}} \leq w\paren{x,y}$ and $\delta_{G}\paren{w,p_{1}\paren{w}} \leq w\paren{z,w}$. By the selection of these edges, the subpaths satisfy $u\squiggly x, w\squiggly v \subseteq E_{1}$. Furthermore, the auxiliary edges $\paren{x,p_{1}\paren{x}}, \paren{w,p_{1}\paren{w}} \in D$ and $\paren{p_{1}\paren{x},p_{1}\paren{w}} \in S_{1}\times S_{1}$. Consequently, the path $u\squiggly x \straight p_{1}\paren{x} \straight p_{1}\paren{w} \straight w \squiggly v \subseteq E_{1} \cup D \cup \paren{S_{1}\times S_{1}} \subseteq F$. Therefore:

\begin{alignat*}{3}
        \delta_H\paren{u,v} & \leq \textubrace{ \delta_G\paren{u,x}+ \delta_G\paren{x,p_{1}\paren{x}} +  \delta_G\paren{p_{1}\paren{x},p_{1}\paren{w}} +  \delta_G\paren{p_{1}\paren{w},w}+ \delta_G\paren{w,v}}{Our preceding discussion} &\\
        & \leq \textubrace{ \delta_G\paren{u,x}+ 2\delta_G\paren{x,p_{1}\paren{x}} +  \delta_{G}\paren{x,w} +  2\delta_G\paren{p_{1}\paren{w},w}+ \delta_G\paren{w,v}}{Triangle inequality}\\
        & \leq \textubrace{ \delta_G\paren{u,v}+ 2\delta_G\paren{x,p_{1}\paren{x}} +  2\delta_G\paren{p_{1}\paren{w},w}}{$u\squiggly x\squiggly w \squiggly v$ is a shortest path}\\
        &\leq \textubrace{\delta_G\paren{u,v } + 2 \cdot w\paren{x,y}+ 2 \cdot w\paren{z,w}}{$\paren{x,y},\paren{z,w}\notin E_{1}$}\\
        & \leq \textubrace{ \delta_G\paren{u,v } + 2 \cdot W_{1}\paren{u\squiggly v}+ 2 \cdot W_{2}\paren{u\squiggly v}}{$w\paren{x,y} +w\paren{z,w}\leq W_{1}\paren{u\squiggly v}+W_{2}\paren{u\squiggly v}$ }
    \end{alignat*}

\end{proof}

We observe that \lemref{+4W1_case_b} constitutes the bottleneck for the current $k=2$ construction. However, for $k \geq 3$, both \case{b} and \case{c} will emerge as the dominant bounds. We  establish the emulator's size:

\begin{theorem}~\label{thm:+4W1} 
   \algref{+4W1} constructs a $+4W_{1}$-emulator of expected size $\tilde O\paren{n^{\frac{4}{3}}}$.
\end{theorem}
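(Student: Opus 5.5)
The proof splits into a stretch part and a size part.

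\textbf{Stretch.} The lower bound $\delta_H\paren{u,v}\geq\delta_G\paren{u,v}$ is immediate. Every edge $\paren{a,b}\in F$ receives weight $w_H\paren{a,b}=\delta_G\paren{a,b}$. Hence any path in $H$ maps, via the triangle inequality in $G$, to a walk in $G$ of no larger weight.

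For the upper bound, fix $u,v$ and a shortest path $u\squiggly v$. Cases \case{a}, \case{b} and \case{c} are exhaustive by \obvref{Pi=0,1}.
\begin{itemize}
    \item \lemref{+4W1_case_a} gives stretch $0$.
    \item \lemref{+4W1_case_b} gives $+4W_{1}$.
    \item \lemref{+4W1_case_c} gives $+2W_{1}+2W_{2}$, which is at most $4W_{1}$ since $W_{2}\leq W_{1}$.
\end{itemize}
So the stretch claim follows directly.

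\textbf{Size.} Bound each component of $F$ in expectation.
\begin{itemize}
    \item By \obvref{edgesToPivots}, $\abs{D}\in O\paren{n}$ because $\ell=2$.
    \item By \obvref{edgessize} with $q=n^{-\beta}$, $E\bracke{\abs{E_{1}}}\in\tilde O\paren{n^{1+\beta}}$.
    \item For $S_{1}\times S_{1}$, note $\abs{S_{1}}$ is binomial. Hence $E\bracke{\abs{S_{1}}^{2}}\leq \paren{E\abs{S_{1}}}^{2}+E\abs{S_{1}}$, which is $\tilde O\paren{n^{2-2\beta}}$. Alternatively, a Chernoff bound gives $\abs{S_{1}}\in\tilde O\paren{n^{1-\beta}}$ w.h.p.
    \item The text also mentions $V\times S_{2}$. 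It has expected size $n\cdot n^{1-\beta-\gamma}$, which is $\tilde O\paren{n^{2-\beta-\gamma}}$.
\end{itemize}

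\textbf{Parameters.} Balance $n^{1+\beta}=n^{2-2\beta}$, which gives $\beta=\frac{1}{3}$. Then choose $\gamma$ with $2-\beta-\gamma\leq\frac{4}{3}$, e.g.\ $\gamma=\frac{1}{3}$, so that $V\times S_2$ is also $\tilde O\paren{n^{\frac{4}{3}}}$. By linearity of expectation, $E\bracke{\abs{F}}\in\tilde O\paren{n^{\frac{4}{3}}}$.

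\textbf{Main obstacle.} The only delicate point is the quadratic term $S_{1}\times S_{1}$. Expectation of a square is not the square of the expectation, so the binomial second-moment bound (or the concentration bound) is the step I would write carefully. Everything else is direct citation of the earlier observations.
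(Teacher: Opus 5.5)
Your proposal is correct and follows the paper's proof. It derives the stretch from the three case lemmas, using $2W_{1}+2W_{2}\leq 4W_{1}$ in Case (c), and bounds the size by summing the contributions of $D$, $E_{1}$, $S_{1}\times S_{1}$ and $V\times S_{2}$ with $\beta=\gamma=\frac{1}{3}$. Your second-moment bound on $\abs{S_{1}}^{2}$ only makes explicit a step the paper handles by directly squaring $\abs{S_{1}}=\tilde O\paren{n^{1-\beta}}$.
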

\begin{proof}
The correctness of the emulator follows from the case analysis in \lemref{+4W1_case_a} through \lemref{+4W1_case_c}. To analyze the expected size of $F$, we first note that \obvref{hssize} implies $\abs{S_{1}} = \tilde O \paren{n^{1-\beta}}$ and $\abs{S_{2}} = \tilde O \paren{n^{1-\beta-\gamma}}$. By \obvref{edgessize} the edge sets satisfy $\abs{E_{1}} = \tilde O \paren{n^{1+\beta}}$. By \obvref{edgesToPivots} the edge set $D$ contributes  $\tilde O\paren{n}$ edges. Finally, the product $S_{1}\times S_{1}$ and $V\times S_{2}$ have sizes $\tilde O \paren{n^{2-2\beta}}$ and $\tilde O\paren{n^{2-\beta-\gamma}}$, respectively. Thus, the total size of the emulator is bounded by:

$$ \abs{F} = \tilde O \paren{\abs{D} + \abs{E_{1}}+\abs{S_{1}}\cdot\abs{S_{1}} +\abs{V}\cdot\abs{S_{2}}} = \tilde O\paren{n^{1+\beta} + n^{2-2\beta} + n^{2-\beta-\gamma}} $$

Setting $\beta = \gamma = \frac{1}{3}$ yields the desired bound of $\tilde O\paren{n^{\frac{4}{3}}}$.

\end{proof}

\section{First step: $\paren{3,4W_{1}}$-emulator}\label{_3,4W1_}

We now present the second step ($k=3$) of our generalized framework: a $\paren{3,4W_{1}}$-emulator of size $\tilde O\paren{n^{\frac{5}{4}}}$. This construction preserves the additive guarantee of the $+4W_{1}$-emulator from \secref{+4W1} while increasing the multiplicative stretch from $1$ to $3$.
Increasing the multiplicative stretch is mandatory due to the $\tilde \Omega\paren{n^{\frac{4}{3}}}$ lower bound of Abboud and Bodwin \cite{AbbBod2015}. Notably, this construction introduces an additional set of auxiliary edges $B_{1}\paren{V}$, that plays a crucial role in the subsequent stages of our generalized framework.

\begin{algorithm}[H]
\caption{$\paren{3,4W_{1}}$-emulator$\paren{G=\paren{V,E,w}}$}\label{alg:_3,4W1_}

\textbf{Input:} An undirected non-negative weighted graph $G=\paren{V,E,w}$.

\textbf{Output:} A $\paren{3,4W_{1}}$-emulator. 

\medskip

Compute APSP

Let $\beta_1,\beta_2,\beta_3 \in \paren{ 0 , 1}$ to be fixed later

Set $S_0 \leftarrow V$

\For{$i \in \bracke{3}$}
{
    Sample each vertex of $S_{i-1}$ with probability $n^{-\beta_i}$ to $S_{i}$
    
    For every  $u\in V$ find the pivot $p_{i}\paren{u}$ 

    Construct the set of edges $D_{i}$
}

Set $S_{4} \leftarrow \varnothing$

$D\leftarrow \mcup{D_{i}}{i=1}{3}$

Construct the set of edges $E_{1}$

Construct the set of edges $B_{1}\paren{V}$

Set $F\leftarrow D \cup E_{1} \cup \paren{S_{1} \times S_{2} }  \cup B_{1}\paren{V}$

Initialize $w_{H}\leftarrow \varnothing$

\For{$\paren{u,v}\in F$}
    {
        $w_{H} \paren{u,v} \leftarrow \delta_{G} \paren{u,v}$
    }

\Return $H=\paren{V,F,w_{H}}$

\end{algorithm}

\algref{_3,4W1_} uses three parameters $\beta_1,\beta_2,\beta_3 \in \paren{0,1}$ such that a vertex from $S_{i-1}$ has probability $n^{-\beta_i}$ to be sampled to $S_{i}$, for $i\in\bracke{3}$.  For completeness, we set $S_{0} = V$ and $S_{4}=\varnothing$, hence $E_{4}=E$. After sampling, we identify the pivots $p_{1}\paren{u}$, $p_{2}\paren{u}$ and $p_{3}\paren{u}$ for every $u\in V$, forming the auxiliary edge sets $D$ and $E_{1}$. Additionally, we  construct the set of edges $B_{1}\paren{V}$ from vertices to members of their bunch (See: \secref{hierarchy}).

We set $F$ to be $D\cup E_{1}\cup \paren{S_{1}\times S_{2}}\cup B_{1}\paren{V}\cup\paren{V\times S_{3}}$ and the weight $w_{H}\paren{u,v}$ to be $\delta_{G}\paren{u,v}$ for $\paren{u,v}\in F\setminus E$.

We note that, in this special case, the stretch analysis only uses those edges of $B_{1}\paren{V}$ arising from refined balls of the form $\ball{u}{1}{2}$. We nevertheless retain the notation $B_{1}\paren{V}$ for consistency with the general construction, where multiple refined ball components are required. On a similar note, the product $V\times S_{3}$ is not needed for the stretch analysis, but is included because it is mandatory in the runtime efficient implementation as a fallback when the relevant shortest path is not contained in $E_{3}$ (See: \appref{runtime_k=4}).

This concludes  \algref{_3,4W1_}. To argue that $H=\paren{V,F,w_{H}}$ is a $\paren{3,4W_{1}}$-emulator we consider arbitrary $u,v\in V$ and fix a $u\squiggly v$. Consistent with the framework established in \secref{overview}, we distinguish between three cases:

\begin{itemize}
    \item \case{a}: $\sizeindex{u\squiggly v}{1} = 0 $,
    \item \case{b}: $\paren{x,y} = \rightabovearrow{u\squiggly v}{1} = \leftabovearrow{u\squiggly v}{1}$,
    \item \case{c}: $\paren{x,y} = \rightabovearrow{u\squiggly v}{1}  \neq \leftabovearrow{u\squiggly v}{1} = \paren{z,w}$.
\end{itemize}

This case analysis dictates our proof structure. \lemref{_3,4W1__case_a}, \lemref{_3,4W1__case_b} and \lemref{_3,4W1__case_c} establish that $\delta_{H}\paren{u,v} \leq 3\cdot \delta_{G}\paren{u,v} + 4W_{1}\paren{u\squiggly v}$ for \case{a}, \case{b} and \case{c}, respectively, after which we bound the overall size by $\abs{F} \in \tilde O\paren{n^{\frac{5}{4}}}$.

\begin{figure}[h]
\centering

\tikzset{every picture/.style={line width=0.75pt}} 

\begin{tikzpicture}[x=0.75pt,y=0.75pt,yscale=-1,xscale=1]

\draw [color={rgb, 255:red, 245; green, 166; blue, 35 }  ,draw opacity=1 ]   (331.15,146.14) .. controls (332.81,144.47) and (334.48,144.47) .. (336.15,146.13) .. controls (337.82,147.79) and (339.49,147.78) .. (341.15,146.11) .. controls (342.82,144.44) and (344.48,144.44) .. (346.15,146.1) .. controls (347.82,147.76) and (349.49,147.75) .. (351.15,146.08) .. controls (352.82,144.41) and (354.48,144.41) .. (356.15,146.07) .. controls (357.82,147.73) and (359.49,147.72) .. (361.15,146.05) .. controls (362.82,144.38) and (364.48,144.38) .. (366.15,146.04) .. controls (367.82,147.7) and (369.49,147.69) .. (371.15,146.02) .. controls (372.82,144.35) and (374.48,144.35) .. (376.15,146.01) .. controls (377.82,147.67) and (379.49,147.66) .. (381.15,145.99) .. controls (382.82,144.32) and (384.48,144.32) .. (386.15,145.98) .. controls (387.82,147.64) and (389.49,147.63) .. (391.15,145.96) .. controls (392.82,144.29) and (394.48,144.29) .. (396.15,145.95) .. controls (397.82,147.61) and (399.49,147.6) .. (401.15,145.93) .. controls (402.82,144.26) and (404.48,144.26) .. (406.15,145.92) .. controls (407.82,147.58) and (409.49,147.57) .. (411.15,145.9) .. controls (412.82,144.23) and (414.48,144.23) .. (416.15,145.89) .. controls (417.82,147.55) and (419.49,147.54) .. (421.15,145.87) -- (423.32,145.87) -- (423.32,145.87) ;
\draw  [color={rgb, 255:red, 0; green, 0; blue, 0 }  ,draw opacity=1 ][fill={rgb, 255:red, 0; green, 0; blue, 0 }  ,fill opacity=1 ] (185.9,146.03) .. controls (185.9,143.82) and (187.69,142.03) .. (189.9,142.03) .. controls (192.11,142.03) and (193.9,143.82) .. (193.9,146.03) .. controls (193.9,148.24) and (192.11,150.03) .. (189.9,150.03) .. controls (187.69,150.03) and (185.9,148.24) .. (185.9,146.03) -- cycle ;
\draw  [color={rgb, 255:red, 124; green, 180; blue, 60 }  ,draw opacity=1 ] (195,128.25) -- (277.5,128.25) -- (277.5,134.42) ;
\draw  [color={rgb, 255:red, 124; green, 180; blue, 60 }  ,draw opacity=1 ] (277.5,128.25) -- (195,128.25) -- (195,134.42) ;

\draw  [color={rgb, 255:red, 0; green, 0; blue, 0 }  ,draw opacity=1 ][fill={rgb, 255:red, 0; green, 0; blue, 0 }  ,fill opacity=1 ] (277.5,146.17) .. controls (277.5,143.96) and (279.29,142.17) .. (281.5,142.17) .. controls (283.71,142.17) and (285.5,143.96) .. (285.5,146.17) .. controls (285.5,148.38) and (283.71,150.17) .. (281.5,150.17) .. controls (279.29,150.17) and (277.5,148.38) .. (277.5,146.17) -- cycle ;
\draw [color={rgb, 255:red, 0; green, 0; blue, 0 }  ,draw opacity=1 ]   (189.9,146.14) .. controls (191.56,144.47) and (193.23,144.47) .. (194.9,146.13) .. controls (196.57,147.79) and (198.24,147.78) .. (199.9,146.11) .. controls (201.57,144.44) and (203.23,144.44) .. (204.9,146.1) .. controls (206.57,147.76) and (208.24,147.75) .. (209.9,146.08) .. controls (211.57,144.41) and (213.23,144.41) .. (214.9,146.07) .. controls (216.57,147.73) and (218.24,147.72) .. (219.9,146.05) .. controls (221.57,144.38) and (223.23,144.38) .. (224.9,146.04) .. controls (226.57,147.7) and (228.24,147.69) .. (229.9,146.02) .. controls (231.57,144.35) and (233.23,144.35) .. (234.9,146.01) .. controls (236.57,147.67) and (238.24,147.66) .. (239.9,145.99) .. controls (241.57,144.32) and (243.23,144.32) .. (244.9,145.98) .. controls (246.57,147.64) and (248.24,147.63) .. (249.9,145.96) .. controls (251.57,144.29) and (253.23,144.29) .. (254.9,145.95) .. controls (256.57,147.61) and (258.24,147.6) .. (259.9,145.93) .. controls (261.57,144.26) and (263.23,144.26) .. (264.9,145.92) .. controls (266.57,147.58) and (268.24,147.57) .. (269.9,145.9) .. controls (271.57,144.23) and (273.23,144.23) .. (274.9,145.89) .. controls (276.57,147.55) and (278.24,147.54) .. (279.9,145.87) -- (282.07,145.87) -- (282.07,145.87) ;
\draw  [color={rgb, 255:red, 0; green, 0; blue, 0 }  ,draw opacity=1 ][fill={rgb, 255:red, 0; green, 0; blue, 0 }  ,fill opacity=1 ] (327.15,146.03) .. controls (327.15,143.82) and (328.94,142.03) .. (331.15,142.03) .. controls (333.36,142.03) and (335.15,143.82) .. (335.15,146.03) .. controls (335.15,148.24) and (333.36,150.03) .. (331.15,150.03) .. controls (328.94,150.03) and (327.15,148.24) .. (327.15,146.03) -- cycle ;
\draw  [color={rgb, 255:red, 0; green, 0; blue, 0 }  ,draw opacity=1 ][fill={rgb, 255:red, 0; green, 0; blue, 0 }  ,fill opacity=1 ] (418.75,146.17) .. controls (418.75,143.96) and (420.54,142.17) .. (422.75,142.17) .. controls (424.96,142.17) and (426.75,143.96) .. (426.75,146.17) .. controls (426.75,148.38) and (424.96,150.17) .. (422.75,150.17) .. controls (420.54,150.17) and (418.75,148.38) .. (418.75,146.17) -- cycle ;

\draw (183.98,133.03) node [anchor=north west][inner sep=0.75pt]  [font=\footnotesize,color={rgb, 255:red, 0; green, 0; blue, 0 }  ,opacity=1 ,xslant=-0.03] [align=left] {$\displaystyle u$};
\draw (277.25,132.5) node [anchor=north west][inner sep=0.75pt]  [font=\footnotesize,xslant=-0.03] [align=left] {$\displaystyle v$};
\draw (230.5,113.75) node [anchor=north west][inner sep=0.75pt]  [font=\scriptsize] [align=left] {\textcolor[rgb]{0.18,0.5,0.89}{$\displaystyle \textcolor[rgb]{0.49,0.71,0.24}{E}\textcolor[rgb]{0.49,0.71,0.24}{_{1}}$}};
\draw (325.23,133.03) node [anchor=north west][inner sep=0.75pt]  [font=\footnotesize,color={rgb, 255:red, 0; green, 0; blue, 0 }  ,opacity=1 ,xslant=-0.03] [align=left] {$\displaystyle u$};
\draw (418.5,132.5) node [anchor=north west][inner sep=0.75pt]  [font=\footnotesize,xslant=-0.03] [align=left] {$\displaystyle v$};

\end{tikzpicture}

\caption[.]{\label{fig:_3,4W1__case_a} To the left: \case{a} of \algref{_3,4W1_}. To the right: the path that will be in  $H$.}
\vspace{1mm}
\end{figure}
\begin{lemma}~\label{lem:_3,4W1__case_a} 
   If \case{a} holds then $\delta_H\paren{u,v} = \delta_G\paren{u,v}$.
\end{lemma}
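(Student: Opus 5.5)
The plan is to mirror the proof of \lemref{+4W1_case_a} essentially verbatim, since \case{a} depends only on the edge set $E_{1}$ and not on the rest of the hierarchy. First, I will recall from \algref{_3,4W1_} that $F = D \cup E_{1} \cup \paren{S_{1}\times S_{2}} \cup B_{1}\paren{V} \cup \paren{V\times S_{3}}$, so in particular $E_{1}\subseteq F$. Second, since \case{a} means $\sizeindex{u\squiggly v}{1}=0$, \obvref{Pi=0,1} gives $u\squiggly v \subseteq E_{1}$, and hence $u\squiggly v\subseteq F$.

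Next, I have to check that the weights in $H$ of these edges equal their original weights. Every edge $\paren{a,b}\in u\squiggly v$ lies on a shortest path, so $w\paren{a,b} = \delta_{G}\paren{a,b}$. The weight assignment $w_{H}\paren{a,b}\leftarrow \delta_{G}\paren{a,b}$ therefore coincides with $w\paren{a,b}$, whether or not the edge is treated as auxiliary. Consequently the path $u\squiggly v$ exists in $H$ with weight exactly $\delta_{G}\paren{u,v}$, which gives $\delta_{H}\paren{u,v}\leq\delta_{G}\paren{u,v}$.

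For the reverse inequality, every edge $\paren{a,b}\in F$ has $w_{H}\paren{a,b}=\delta_{G}\paren{a,b}$. By the triangle inequality in $G$, any $u$–$v$ path in $H$ therefore has weight at least $\delta_{G}\paren{u,v}$, so $\delta_{H}\paren{u,v}\geq\delta_{G}\paren{u,v}$. Combining the two inequalities yields equality.

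There is no real obstacle here. The only point needing care is the weight-consistency remark: an edge of $E_{1}$ might not itself be a shortest path between its endpoints, but edges lying on the fixed shortest path $u\squiggly v$ necessarily are, so the argument goes through.
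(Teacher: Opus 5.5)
Your proposal is correct and takes the same route as the paper: \case{a} gives $u\squiggly v\subseteq E_{1}\subseteq F$, so the shortest path itself is present in $H$. The paper stops there and leaves two details implicit, both of which you verify correctly: each edge of $u\squiggly v$ keeps weight $w\paren{a,b}=\delta_{G}\paren{a,b}$ in $H$, and $w_{H}\geq\delta_{G}$ on $F$ yields $\delta_{H}\paren{u,v}\geq\delta_{G}\paren{u,v}$.
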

\begin{proof} 
Recall that $F$ includes $E_1$. As $\sizeindex{u\squiggly v}{1} =0$, it follows that $u\squiggly v\subseteq E_{1} \subseteq F$ which means that $\delta_H\paren{u,v} = \delta_G\paren{u,v}$. See: \figref{_3,4W1__case_a}.
\end{proof}

\input{\emulatorsImagesPath _3,4W1_/case_b_generic}

We move to \case{b}, which requires bridging only a single missing edge. Contrary to the flexibility of APASP algorithms discussed in \secref{overview}, we  already incur a substantial additive penalty in the emulator setting.

\begin{lemma}~\label{lem:_3,4W1__case_b} 
   If \case{b} holds then $\delta_H\paren{u,v} \leq \delta_G\paren{u,v} + 6W_{1}\paren{u\squiggly v}$.
\end{lemma}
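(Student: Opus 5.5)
The plan is to keep the structure of \lemref{+4W1_case_b}, but to replace the edge $S_{1}\times S_{1}$, which $F$ no longer contains, by one of two routes: an edge of $B_{1}\paren{V}$, or an edge of $S_{1}\times S_{2}$. Let $\paren{x,y}$ be the sole edge of $u\squiggly v$ outside $E_{1}$. As before, $u\squiggly x, y\squiggly v\subseteq E_{1}\subseteq F$. Since $\paren{x,y}\notin E_{1}$ we have $\delta_{G}\paren{x,p_{1}\paren{x}}\leq w\paren{x,y}$, and the triangle inequality gives $\delta_{G}\paren{y,p_{1}\paren{x}}\leq w\paren{x,y}+\delta_{G}\paren{x,p_{1}\paren{x}}\leq 2w\paren{x,y}$. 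The proof splits on whether $p_{1}\paren{x}$ lies in the refined ball $\ball{y}{1}{2}$.

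\textbf{Subcase 1:} $\delta_{G}\paren{y,p_{1}\paren{x}}<\delta_{G}\paren{y,p_{2}\paren{y}}$. Then $p_{1}\paren{x}\in S_{1}$ belongs to $\ball{y}{1}{2}$, so $\paren{y,p_{1}\paren{x}}\in B_{1}\paren{V}\subseteq F$. We use the path $u\squiggly x\straight p_{1}\paren{x}\straight y\squiggly v$, whose edges lie in $E_{1}\cup D\cup B_{1}\paren{V}$. Its weight is at most
\[
\delta_{G}\paren{u,x}+w\paren{x,y}+2w\paren{x,y}+\delta_{G}\paren{y,v}=\delta_{G}\paren{u,v}+2w\paren{x,y},
\]
where the equality uses that $u\squiggly x\straight y\squiggly v$ is a shortest path.

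\textbf{Subcase 2:} $\delta_{G}\paren{y,p_{2}\paren{y}}\leq\delta_{G}\paren{y,p_{1}\paren{x}}\leq 2w\paren{x,y}$. Here we use the edge $\paren{p_{1}\paren{x},p_{2}\paren{y}}\in S_{1}\times S_{2}$ together with $\paren{x,p_{1}\paren{x}},\paren{y,p_{2}\paren{y}}\in D$. The path is $u\squiggly x\straight p_{1}\paren{x}\straight p_{2}\paren{y}\straight y\squiggly v$. By the triangle inequality,
\[
\delta_{G}\paren{p_{1}\paren{x},p_{2}\paren{y}}\leq \delta_{G}\paren{p_{1}\paren{x},x}+w\paren{x,y}+\delta_{G}\paren{y,p_{2}\paren{y}}\leq w+w+2w=4w\paren{x,y}.
\]
The total weight is therefore at most
\[
\delta_{G}\paren{u,x}+w+4w+2w+\delta_{G}\paren{y,v}=\delta_{G}\paren{u,v}-w\paren{x,y}+7w\paren{x,y}=\delta_{G}\paren{u,v}+6w\paren{x,y}.
\]

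In both subcases the bound is at most $\delta_{G}\paren{u,v}+6w\paren{x,y}$, and $w\paren{x,y}\leq W_{1}\paren{u\squiggly v}$ gives the claim. The lower bound $\delta_{H}\geq\delta_{G}$ holds automatically, because every edge of $F$ has weight equal to the $G$-distance between its endpoints.

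The main obstacle is Subcase 2: $p_{2}\paren{y}$ must be controlled without $\paren{y,p_{1}\paren{y}}$ being usable directly. The key point is that failing to be in $\ball{y}{1}{2}$ forces $p_{2}\paren{y}$ to be at least as close to $y$ as $p_{1}\paren{x}$, which is within $2w\paren{x,y}$ of $y$. Subcase 2 is where the constant $6$ arises, so it is the step to check carefully.
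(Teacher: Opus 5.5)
Your proof is correct and follows the paper's argument essentially step for step. You use the same split on whether $p_{1}(x)$ lies in the refined ball of $y$ between levels $1$ and $2$: the $B_{1}(V)$ edge $(p_{1}(x),y)$ gives $+2w(x,y)$, and otherwise the $S_{1}\times S_{2}$ edge $(p_{1}(x),p_{2}(y))$ together with $\delta_{G}(y,p_{2}(y))\leq 2w(x,y)$ gives $+6w(x,y)$. The only difference is cosmetic: you bound $\delta_{G}(p_{1}(x),p_{2}(y))\leq 4w(x,y)$ up front instead of expanding it inside the running sum, and the total is identical.
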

\begin{proof} 
Let $\paren{x,y} = \rightabovearrow{u\squiggly v}{1}=\leftabovearrow{u\squiggly v}{1}$. By definition of $E_{1}$, it follows that $\delta_G\paren{x,p_{1}\paren{x}},\delta_G\paren{y,p_{1}\paren{y}} \leq w\paren{x,y}$. We consider two possible cases:
\begin{enumerate}
    \item \dashuline{\case{b}{1}: $p_{1}\paren{x}\in \ball{y}{1}{2}$}\\
    By definition of $B_{1}\paren{y}$, it follows that $\paren{p_{1}\paren{x},y}\in B_{1}\paren{y} \subseteq B_{1}\paren{V} \subseteq F$. Hence:
    \begin{alignat*}{3}
        \delta_H\paren{u,v} & \leq \textubrace{ \delta_G\paren{u,x} +  \delta_G\paren{x,p_{1}\paren{x}}+\delta_G\paren{p_{1}\paren{x},y}+\delta_G\paren{y,v}}{Triangle inequality} \\
        &
        \leq \textubrace{ \delta_G\paren{u,x}+ 2\delta_G\paren{x,p_{1}\paren{x}} +  w\paren{x,y}+\delta_G\paren{y,v}}{Triangle inequality}  \\
        &
        \leq \textubrace{ \delta_G\paren{u,v}+ 2\delta_G\paren{x,p_{1}\paren{x}} }{$u\squiggly x \straight y \squiggly v$ is a shortest path}  \\
        & \leq \textubrace{ \delta_G\paren{u,v}+ 2w\paren{x,y}  }{$\paren{x,y}\notin E_{1}$}\\
        & \leq \textubrace{ \delta_G\paren{u,v } + 2 \cdot W_{1}\paren{u\squiggly v}}{$w\paren{x,y} \leq W_{1}\paren{u\squiggly v}$ } &&
    \end{alignat*}

    \item \dashuline{\case{b}{2}: $p_{1}\paren{x}\notin \ball{y}{1}{2}$}\\
    It follows that $\delta_G\paren{y,p_2\paren{y}}\leq \delta_G \paren{y,p_{1}\paren{x}} \leq  w\paren{x,y}+\delta_G\paren{x,p_{1}\paren{x}}  \leq 2 w \paren{x,y}$. Recall that $S_{1}\times S_{2}\subseteq F$. Hence:
    \begin{alignat*}{3}
        \delta_H\paren{u,v} & \leq \textubrace{ \delta_G\paren{u,x} +  \delta_G\paren{x,p_{1}\paren{x}}+\delta_G\paren{p_{1}\paren{x},p_{2}\paren{y}}+\delta_G\paren{p_{2}\paren{y},y}+\delta_G\paren{y,v}}{Triangle inequality} \\
        &
        \leq \textubrace{ \delta_G\paren{u,x}+ 2\delta_G\paren{x,p_{1}\paren{x}} +  w\paren{x,y}+2\delta_G\paren{y,p_{2}\paren{y}}+\delta_G\paren{y,v}}{Triangle inequality}  \\
        &
        \leq \textubrace{ \delta_G\paren{u,v}+ 2\delta_G\paren{x,p_{1}\paren{x}} + 2\delta_G\paren{y,p_{2}\paren{y}} }{$u\squiggly x \straight y \squiggly v$ is a shortest path}  \\
        &
        \leq \textubrace{ \delta_G\paren{u,v}+ 2w\paren{x,y}+4w\paren{x,y} }{By our preceding discussion}\\
        & = \textubrace{ \delta_G\paren{u,v}+ 6w\paren{x,y}  }{Simply summing}\\
        & \leq \textubrace{ \delta_G\paren{u,v } + 6 \cdot W_{1}\paren{u\squiggly v}}{$w\paren{x,y} \leq W_{1}\paren{u\squiggly v}$ } &&
    \end{alignat*}
    
    See: \figref{_3,4W1__case_b_generic}.
\end{enumerate}

\end{proof}

\input{\emulatorsImagesPath _3,4W1_/case_c_generic}

Finally, we address \case{c}, where we have at least two edges to \qoute{bridge over}. 
Utilizing $B_{1}\paren{V}$ to bound distances to third level pivots is inefficient, hence we use $B_{2}\paren{S_{1}}$. While we get, unavoidably, a multiplicative stretch, we have two edges that can be used as an upper bound, hence we can use $W_{2}$ instead of $W_{1}$. 

\begin{lemma}~\label{lem:_3,4W1__case_c} 
   If \case{c} holds then $\delta_H\paren{u,v} \leq 3\delta_G\paren{u,v} + 4W_{2}\paren{u\squiggly v}$.
\end{lemma}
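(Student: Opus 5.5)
The plan is to mirror the two-subcase structure of \lemref{_3,4W1__case_b}, but to apply it to whichever of the two missing edges is \emph{lighter}. Since $\paren{x,y}$ and $\paren{z,w}$ are two distinct edges of $u\squiggly v$, we have $\mmin{w\paren{x,y},w\paren{z,w}}\leq W_{2}\paren{u\squiggly v}$. The lemma is symmetric under reversing the path, which swaps $u\leftrightarrow v$, $x\leftrightarrow w$ and $y\leftrightarrow z$. So I will assume without loss of generality that $w\paren{x,y}\leq w\paren{z,w}$, and hence $w\paren{x,y}\leq W_{2}\paren{u\squiggly v}$.

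The facts I will use are as follows. As in \lemref{+4W1_case_c}, $\delta_G\paren{x,p_1\paren{x}}\leq w\paren{x,y}$, the subpaths $u\squiggly x$ and $w\squiggly v$ lie in $E_1\subseteq F$, and the pivot edges are in $D\subseteq F$. The lower bound $\delta_H\geq\delta_G$ is immediate, because every emulator edge has weight equal to a true $G$-distance.

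I then split on whether $p_1\paren{x}\in\ball{w}{1}{2}$.

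\dashuline{\case{c}{1}: $p_{1}\paren{x}\in \ball{w}{1}{2}$.} Here $\paren{w,p_1\paren{x}}\in B_1\paren{w}\subseteq F$, and I use the path $u\squiggly x\straight p_1\paren{x}\straight w\squiggly v$. By the triangle inequality $\delta_G\paren{p_1\paren{x},w}\leq \delta_G\paren{x,p_1\paren{x}}+\delta_G\paren{x,w}$. Since $u\squiggly x\squiggly w\squiggly v$ is a shortest path, this gives
\[
\delta_H\paren{u,v}\leq\delta_G\paren{u,v}+2\delta_G\paren{x,p_1\paren{x}}\leq \delta_G\paren{u,v}+2W_2\paren{u\squiggly v}.
\]
This is already well within the claimed bound.

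\dashuline{\case{c}{2}: $p_{1}\paren{x}\notin \ball{w}{1}{2}$.} Since $p_1\paren{x}\in S_1$ lies outside the refined ball of $w$, we get
\[
\delta_G\paren{w,p_2\paren{w}}\leq\delta_G\paren{w,p_1\paren{x}}\leq \delta_G\paren{x,w}+w\paren{x,y}.
\]
Now $\paren{p_1\paren{x},p_2\paren{w}}\in S_1\times S_2\subseteq F$, so I use the path $u\squiggly x\straight p_1\paren{x}\straight p_2\paren{w}\straight w\squiggly v$. Applying the triangle inequality to the middle edge, $\delta_G\paren{p_1\paren{x},p_2\paren{w}}\leq \delta_G\paren{x,p_1\paren{x}}+\delta_G\paren{x,w}+\delta_G\paren{w,p_2\paren{w}}$. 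Combining this with the shortest-path decomposition yields
\[
\delta_H\paren{u,v}\leq \delta_G\paren{u,v}+2\delta_G\paren{x,p_1\paren{x}}+2\delta_G\paren{w,p_2\paren{w}}\leq \delta_G\paren{u,v}+4w\paren{x,y}+2\delta_G\paren{x,w}.
\]
Using $\delta_G\paren{x,w}\leq\delta_G\paren{u,v}$ and $w\paren{x,y}\leq W_2\paren{u\squiggly v}$ gives $3\delta_G\paren{u,v}+4W_2\paren{u\squiggly v}$. The term $2\delta_G\paren{x,w}$ is exactly the source of the multiplicative factor $3$ anticipated in \secref{overview}.

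The main subtlety is the symmetry step. I must make sure that all the ingredients used on the $x$-side have exact mirror images on the $w$-side, so that the reduction to $w\paren{x,y}\leq w\paren{z,w}$ is legitimate. These ingredients are: $u\squiggly x\subseteq E_1$, the bound $\delta_G\paren{w,p_1\paren{w}}\leq w\paren{z,w}$, and the edge sets $B_1\paren{x}$ and $S_2\times S_1$. All of them are symmetric: $B_1$ is defined for every vertex, $S_1\times S_2$ is an unordered set of pairs, and the first/last missing edges swap under reversal of the path. The payoff of choosing the lighter edge is that the additive term depends on $W_2$ rather than $W_1$.
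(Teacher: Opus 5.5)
Your proof is correct and follows the paper's argument. It uses the same w.l.o.g.\ assumption $w(x,y)\leq w(z,w)$, the same $B_1(w)$ edge when $p_1(x)$ lies in the refined ball of $w$, and otherwise the same $S_1\times S_2$ edge together with the bound $\delta_G(w,p_2(w))\leq \delta_G(x,w)+w(x,y)$. The one difference is that you split only on whether $p_1(x)\in \mathrm{ball}(w,S_1,S_2)$. This drops the paper's redundant disjunct $p_1(w)\in \mathrm{ball}(x,S_1,S_2)$, whose branch yields only $\delta_G(u,v)+2W_1$ and silently needs $W_1\leq \delta_G(u,v)$ to fit the stated bound. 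Your case split is therefore slightly cleaner.
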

\begin{proof} 
Let $\paren{x,y} = \rightabovearrow{u\squiggly v}{1}$ and $\paren{z,w}=\leftabovearrow{u\squiggly v}{1}$. By definition of $E_{1}$, it follows that $\delta_G\paren{x,p_{1}\paren{x}}\leq w\paren{x,y}$ and $\delta_G\paren{w,p_{1}\paren{w}} \leq w\paren{z,w}$. As $\paren{x,y}\neq \paren{z,w}$, we may assume w.l.o.g. that $w\paren{x,y}\leq w\paren{z,w}$. We consider two possible cases:
\begin{enumerate}
    \item \dashuline{\case{c}{1}: $p_{1}\paren{x}\in \ball{w}{1}{2}$ or $p_{1}\paren{w}\in \ball{x}{1}{2}$}\\
    If $p_{1}\paren{x}\in \ball{w}{1}{2}$ by the definition of $B_{1}\paren{w}$, it follows that $\paren{p_{1}\paren{x},w}\in B_{1}\paren{w} \subseteq B_{1}\paren{V} \subseteq F$. Therefore:
    \begin{alignat*}{3}
        \delta_H\paren{u,v} & \leq \textubrace{ \delta_G\paren{u,x} +  \delta_G\paren{x,p_{1}\paren{x}}+\delta_G\paren{p_{1}\paren{x},w}+\delta_G\paren{w,v}}{Triangle inequality} \\
        &
        \leq \textubrace{ \delta_G\paren{u,x}+ 2\delta_G\paren{x,p_{1}\paren{x}} +  \delta_G\paren{x,w}+\delta_G\paren{w,v}}{Triangle inequality}  \\
        &
        \leq \textubrace{ \delta_G\paren{u,v}+ 2\delta_G\paren{x,p_{1}\paren{x}} }{$u\squiggly x \squiggly w \squiggly v$ is a shortest path}  \\
        & \leq \textubrace{ \delta_G\paren{u,v}+ 2w\paren{x,y}  }{$\paren{x,y}\notin E_{1}$}\\
        & \leq \textubrace{ \delta_G\paren{u,v } + 2 \cdot W_{2}\paren{u\squiggly v}}{$w\paren{x,y} \leq W_{2}\paren{u\squiggly v}$ } &&
    \end{alignat*}
    Otherwise, $p_{1}\paren{w}\in \ball{x}{1}{2}$. By a symmetric argument $\delta_{H}\paren{u,v}\leq \delta_{G}\paren{u,v}+2w\paren{z,w} \leq \delta_{G}\paren{u,v}+2W_{1}\paren{u\squiggly v}$.

    \item \dashuline{\case{c}{2}: $p_{1}\paren{x}\notin \ball{w}{1}{2}$ and $p_{1}\paren{w}\notin \ball{x}{1}{2}$}\\
    It follows that $\delta_G\paren{w,p_2\paren{w}}\leq \delta_G \paren{w,p_{1}\paren{x}} \leq  \delta_G\paren{x,w}+\delta_G\paren{x,p_{1}\paren{x}}  \leq \delta_G\paren{x,w}+ w \paren{x,y}$. Recall that $S_{1}\times S_{2}\subseteq F$. Therefore:
    \begin{alignat*}{3}
        \delta_H\paren{u,v} & \leq \textubrace{ \delta_G\paren{u,x} +  \delta_G\paren{x,p_{1}\paren{x}}+\delta_G\paren{p_{1}\paren{x},p_{2}\paren{w}}+\delta_G\paren{p_{2}\paren{w},w}+\delta_G\paren{w,v}}{Triangle inequality} \\
        &
        \leq \textubrace{ \delta_G\paren{u,x}+ 2\delta_G\paren{x,p_{1}\paren{x}} +  \delta_G\paren{x,w}+2\delta_G\paren{w,p_{2}\paren{w}}+\delta_G\paren{w,v}}{Triangle inequality}  \\
        &
        \leq \textubrace{ \delta_G\paren{u,v}+ 2\delta_G\paren{x,p_{1}\paren{x}} + 2\delta_G\paren{w,p_{2}\paren{w}} }{$u\squiggly x \squiggly w \squiggly v$ is a shortest path}  \\
        &
        \leq \textubrace{ \delta_G\paren{u,v}+ 2w\paren{x,y}+2\delta_G\paren{x,w}+2w\paren{x,y}}{By our preceding discussion}\\
        & \leq \textubrace{ 3\delta_G\paren{u,v}+ 4w\paren{x,y}  }{$u\squiggly x \squiggly w \squiggly v$ is a shortest path}\\
        & \leq \textubrace{ 3\delta_G\paren{u,v } + 4 \cdot W_{2}\paren{u\squiggly v}}{Our w.l.o.g. assumption } &&
    \end{alignat*}
    
    See: \figref{_3,4W1__case_c_generic}.
\end{enumerate}
\end{proof}

Having proven the three cases, we conclude with the size of the emulator: 

\begin{theorem}~\label{thm:_3,4W1_} 
   \algref{_3,4W1_} constructs a $\paren{3,4W_{1}}$-emulator of expected size $\tilde O\paren{n^\frac{5}{4}}$.
\end{theorem}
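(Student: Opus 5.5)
The plan has two parts: correctness, by combining the three case lemmas, and sparsity, by summing the sizes of the edge sets and then fixing $\beta_1,\beta_2,\beta_3$.

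\textbf{Lower bound.} Every edge $\paren{a,b}\in F$ gets weight $w_H\paren{a,b}=\delta_G\paren{a,b}$ (original edges of $E_1$ keep their weight, which is at least the distance). So any path in $H$ maps to a walk in $G$ of no larger weight, and $\delta_H\paren{u,v}\ge\delta_G\paren{u,v}$.

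\textbf{Upper bound.} Fix $u,v$ and a shortest path $u\squiggly v$, and split into \case{a}, \case{b} and \case{c}.
\begin{itemize}
\item \case{a}: \lemref{_3,4W1__case_a} gives exactness directly.
\item \case{b}: \lemref{_3,4W1__case_b} only gives $\delta_G\paren{u,v}+6W_1\paren{u\squiggly v}$, so this case needs one extra step. The bound was in fact derived as $\delta_G\paren{u,v}+6w\paren{x,y}$, and $\paren{x,y}$ lies on the shortest path, so $w\paren{x,y}\le\delta_G\paren{u,v}$. Hence
\[
\delta_G\paren{u,v}+6w\paren{x,y}\le 3\delta_G\paren{u,v}+4w\paren{x,y}\le 3\delta_G\paren{u,v}+4W_1\paren{u\squiggly v}.
\]
In other words, part of the additive loss is absorbed into the multiplicative factor $3$. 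This is the only non-mechanical point of the argument.
\item \case{c}: \lemref{_3,4W1__case_c} gives $3\delta_G\paren{u,v}+4W_2\paren{u\squiggly v}$, and $W_2\le W_1$ finishes it.
\end{itemize}

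\textbf{Set sizes.} By \obvref{hssize2},
\[
\abs{S_1}=\tilde O\paren{n^{1-\beta_1}},\qquad \abs{S_2}=\tilde O\paren{n^{1-\beta_1-\beta_2}},\qquad \abs{S_3}=\tilde O\paren{n^{1-\beta_1-\beta_2-\beta_3}}.
\]
\begin{itemize}
\item By \obvref{edgesToPivots}, $\abs{D}=\tilde O\paren{n}$.
\item By \obvref{edgessize}, $\abs{E_1}=\tilde O\paren{n^{1+\beta_1}}$.
\item $\abs{S_1\times S_2}=\tilde O\paren{n^{2-2\beta_1-\beta_2}}$.
\item By \obvref{edgestobunches2} with $j=1$, $\abs{B_1\paren{V}}=\tilde O\paren{n^{1+\max\bracce{\beta_1,\beta_2,\beta_3}}}$. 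For the refined balls $\ball{u}{i}{i+1}$, the expected size is $1/q_{i+1}=n^{\beta_{i+1}}$.
\item The fallback product $V\times S_3$, which the text says is included in $F$, contributes $\tilde O\paren{n^{2-\beta_1-\beta_2-\beta_3}}$.
\end{itemize}

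\textbf{Choice of parameters.} Setting $\beta_1=\beta_2=\beta_3=\frac14$ makes every term $\tilde O\paren{n^{5/4}}$:
\[
n^{1+\frac14},\qquad n^{2-\frac34},\qquad n^{1+\frac14},\qquad n^{2-\frac34}.
\]
By linearity of expectation, the expected size of $F$ is then $\tilde O\paren{n^{5/4}}$.
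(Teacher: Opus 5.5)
Your proposal is correct and follows essentially the same route as the paper. You combine the three case lemmas, dominate the $+6W_1$ bound of Case (b) by $(3,4W_1)$ via $W_1\le\delta_G(u,v)$, sum the sizes of $D$, $E_1$, $S_1\times S_2$, $V\times S_3$ and $B_1(V)$, and set $\beta_1=\beta_2=\beta_3=\frac14$; the only differences are that you make explicit the domination step (which the paper merely asserts) and the lower bound $\delta_H\ge\delta_G$ (which the paper leaves implicit in the choice of $w_H$).
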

\begin{proof}

The upper bound of \lemref{_3,4W1__case_a} is trivial. We consider the upper bounds of \lemref{_3,4W1__case_b} and \lemref{_3,4W1__case_c}, which are $+6W_{1}$ and $\paren{3,4W_{2}}$, respectively. Both of these are smaller than $\paren{3,4W_{1}}$. Hence, we conclude that for any $u,v\in V$ and $u\squiggly v$ it holds that: $\delta_H\paren{u,v}\leq 3\delta_G \paren{u,v}+4W_{1}\paren{u\squiggly v}$. 

We now analyze the expected size of the emulator. By \obvref{hssize}, it follows that $\abs{S_1} = \tilde O \paren{n^{1-\beta_1}}$, $\abs{S_2} = \tilde O \paren{n^{1-\beta_1-\beta_2}}$ and $\abs{S_3} = \tilde O \paren{n^{1-\beta_1-\beta_2-\beta_3}}$. Hence, the product $S_{1}\times S_{2}$  contributes $\tilde O \paren{n^{2-2\beta_1-\beta_2}}$ 
 edges; the product $V\times S_{3}$ contributes $\tilde O\paren{n^{2-\beta_{1}-\beta_{2}-\beta{3}}} $ edges; the set $B_{1}\paren{V}$   has at most $\tilde O\paren{n^{1+\beta_1}+n^{1+\beta_2}+n^{1+\beta_3}} $ edges due to \obvref{edgestobunches2}; the set $ D$ contributes  $\tilde O\paren{n}$ edges due to \obvref{edgesToPivots};  lastly,  $\abs{E_{1}} = \tilde O \paren{n^{1+\beta_1}}$ by \obvref{edgessize}. The total size of $\abs{F}$ is therefore:

$$ \tilde O \paren{\abs{D} + \abs{E_{1}}+\abs{S_{1}}\cdot\abs{S_{2}} +\abs{V}\cdot\abs{S_{3}}  +\abs{B_{1}\paren{V}}} = \tilde O\paren{n^{1+\beta_1} + n^{2-2\beta_1-\beta_2} +n^{2-\beta_{1}-\beta_{2}-\beta_{3}} + n^{1+\beta_2} + n^{1+\beta_3}} $$

By selecting $\beta_1=\beta_2=\beta_3=\frac{1}{4}$, the size becomes $\tilde O\paren{n^\frac{5}{4}}$. 
\end{proof}

\section{$\paren{2\cdot \floor{\frac{k}{2}} -1, 2\cdot \ceil{\frac{k}{2}}\cdot W_{1} +\mmax{0,2\cdot\paren{\ceil{\frac{k}{2}}-2}}\cdot W_{2}}$-emulators}\label{_k-1,kW1+_k-4_W2_}

In this section we present a $(k-1,k\cdot W_{1}+\paren{k-4}\cdot W_{2})$-emulator for even values $k$, and a $(k-2,\paren{k+1}\cdot W_{1} + \paren{k-3}\cdot W_{2})$-emulator for odd values of $k$. The size of the resulting emulator is $\tilde O \paren{n^{1+\frac{1}{k}}}$. When $k=3$ this results in the $+4W_{1}$-emulator with $\tilde O \paren{n^{\frac{4}{3}}}$ edges from \secref{+4W1}, and when $k=4$ this is the $\paren{3,4W_{1}}$-emulator with $\tilde O \paren{n^{\frac{5}{4}}}$ edges from \secref{_3,4W1_}.

\algref{_k-1,kW1+_k-4_W2_} introduces $k-1$ parameters $\beta_{1},\beta_{2},\dots,\beta_{k-1} \in \paren{0,1}$ such that a vertex from $S_{i-1}$ has probability $n^{-\beta_{i}}$ to be sampled into $S_{i}$, for $i\in\bracke{k-1}$. For completeness of our hierarchical structure, we define $S_{0} = V$ and $S_{k}=\varnothing$. After sampling these sets, we identify the $k-1$ pivots $p_{i}\paren{u}$ for each vertex $u\in V$ and $i \in \bracke{k-1}$. We then construct the auxiliary edge set $D_{i}$ connecting each vertex $u\in V$ to its respective pivot $p_{i}\paren{u}$, and set $D = \mcup{D_{i}}{i=1}{k-1}$. Utilizing the distances $\delta\paren{u,p_{1}\paren{u}}$ for any $u\in V$, we compute the edge set $E_{1}\paren{u}$ and then $E_{1}$. Furthermore, extending the idea presented in \algref{_3,4W1_}, we construct the edge sets $B_{1}\paren{V}$ and $B_{2}\paren{S_{1}}$ (See: \secref{hierarchy}).

We set $F$ to be the union of the edge sets: $D$, $E_1$, $B_{1}\paren{V}$, $B_{2}\paren{S_{1}}$. In addition, we add to $F$ any pair $S_{i-1} \times S_{k-i}$ for $i\in \bracke{k}$. Finally, we set the weight $w_{H}$ of each edge $\paren{u,v} \in F$ to equal its exact distance $\delta_G \paren{u,v}$ in the original graph $G$. This concludes the construction of \algref{_k-1,kW1+_k-4_W2_}. We now show that $H=\paren{V,F,w_{H}}$ is a $\paren{2\cdot \floor{\frac{k}{2}} -1, 2\cdot \ceil{\frac{k}{2}}\cdot W_{1} +\mmax{0,2\cdot\paren{\ceil{\frac{k}{2}}-2}}\cdot W_{2}}$-emulator.

Let $u,v\in V$ and fix a shortest path $u\squiggly v$. To establish the stretch bounds, we proceed with a  case analysis that considers different values of $\sizeindex{u\squiggly v}{1}, \rightabovearrow{u\squiggly v}{1}$ and $\leftabovearrow{u\squiggly v}{1}$, similarly to the case analysis presented in  \algref{+4W1} and \algref{_3,4W1_}. For each case, we analyze the even and odd values of $k$ separately.

\begin{algorithm}[H]
\caption{generalEmulator$\paren{G=\paren{V,E,w},k}$}\label{alg:_k-1,kW1+_k-4_W2_}

\textbf{Input:} An undirected non-negative weighted graph $G=\paren{V,E,w}$ and an integer $k\geq 2$.

\textbf{Output:} A $(2\cdot \floor{\frac{k}{2}} -1, 2\cdot \ceil{\frac{k}{2}}\cdot W_{1} +\max\{0,2\cdot(\ceil{\frac{k}{2}}-2)\}\cdot W_{2})$-emulator. 

\medskip

Compute APSP

Let $\beta_{1},\beta_{2},\ldots, \beta_{k-1} \in \paren{ 0 , 1}$ to be fixed later

Set $S_0 \leftarrow V$

\For{$i \in \bracke{k-1}$}
{
    Sample each vertex of $S_{i-1}$ with probability $n^{-\beta_i}$ to $S_{i}$
    
    For every  $u\in V$ find the pivot $p_{i}\paren{u}$ 

    Construct the set of edges $D_{i}$
}

Set $S_{k} \leftarrow \varnothing$

$D\leftarrow \mcup{D_{i}}{i=1}{k-1}$

Construct the set of edges $E_{1}$

Construct the set of edges $B_{1}\paren{V}$

Construct the set of edges $B_{2}\paren{S_{1}}$

Set $F\leftarrow D \cup E_{1}  \cup B_{1}\paren{V} \cup B_{2}\paren{S_{1}}$

\For{$i \in \bracke{k}$}
{
    $F\leftarrow F \cup \paren{S_{i-1}\times S_{k-i}}$
}

Initialize $w_{H}\leftarrow \varnothing$

\For{$\paren{u,v}\in F$}
    {
        $w_{H} \paren{u,v} \leftarrow \delta_{G} \paren{u,v}$
    }

\Return $H=\paren{V,F,w_{H}}$

\end{algorithm}

\begin{itemize}
    \item \case{a}: $\sizeindex{u\squiggly v}{1} = 0 $,
    \item \case{b}: $\paren{x,y} = \rightabovearrow{u\squiggly v}{1} = \leftabovearrow{u\squiggly v}{1}$,
    \item \case{c}: $\paren{x,y} = \rightabovearrow{u\squiggly v}{1}  \neq \leftabovearrow{u\squiggly v}{1} = \paren{z,w}$.
\end{itemize}

We show that distinguishing between these three cases is sufficient to establish the stretch of the emulator.  We prove the desired stretch in \lemref{_k-1,kW1+_k-4_W2__case_a} to \lemref{_k-1,kW1+_k-4_W2__case_c} for cases $\case{a}$ to $\case{c}$, respectively. We later show that the emulator satisfies $\abs{F} \in \tilde O\paren{n^{1+\frac{1}{k}}}$. We now begin by considering each case individually:

\begin{figure}[h]
\centering

\tikzset{every picture/.style={line width=0.75pt}} 

\begin{tikzpicture}[x=0.75pt,y=0.75pt,yscale=-1,xscale=1]

\draw [color={rgb, 255:red, 245; green, 166; blue, 35 }  ,draw opacity=1 ]   (331.15,146.14) .. controls (332.81,144.47) and (334.48,144.47) .. (336.15,146.13) .. controls (337.82,147.79) and (339.49,147.78) .. (341.15,146.11) .. controls (342.82,144.44) and (344.48,144.44) .. (346.15,146.1) .. controls (347.82,147.76) and (349.49,147.75) .. (351.15,146.08) .. controls (352.82,144.41) and (354.48,144.41) .. (356.15,146.07) .. controls (357.82,147.73) and (359.49,147.72) .. (361.15,146.05) .. controls (362.82,144.38) and (364.48,144.38) .. (366.15,146.04) .. controls (367.82,147.7) and (369.49,147.69) .. (371.15,146.02) .. controls (372.82,144.35) and (374.48,144.35) .. (376.15,146.01) .. controls (377.82,147.67) and (379.49,147.66) .. (381.15,145.99) .. controls (382.82,144.32) and (384.48,144.32) .. (386.15,145.98) .. controls (387.82,147.64) and (389.49,147.63) .. (391.15,145.96) .. controls (392.82,144.29) and (394.48,144.29) .. (396.15,145.95) .. controls (397.82,147.61) and (399.49,147.6) .. (401.15,145.93) .. controls (402.82,144.26) and (404.48,144.26) .. (406.15,145.92) .. controls (407.82,147.58) and (409.49,147.57) .. (411.15,145.9) .. controls (412.82,144.23) and (414.48,144.23) .. (416.15,145.89) .. controls (417.82,147.55) and (419.49,147.54) .. (421.15,145.87) -- (423.32,145.87) -- (423.32,145.87) ;
\draw  [color={rgb, 255:red, 0; green, 0; blue, 0 }  ,draw opacity=1 ][fill={rgb, 255:red, 0; green, 0; blue, 0 }  ,fill opacity=1 ] (185.9,146.03) .. controls (185.9,143.82) and (187.69,142.03) .. (189.9,142.03) .. controls (192.11,142.03) and (193.9,143.82) .. (193.9,146.03) .. controls (193.9,148.24) and (192.11,150.03) .. (189.9,150.03) .. controls (187.69,150.03) and (185.9,148.24) .. (185.9,146.03) -- cycle ;
\draw  [color={rgb, 255:red, 124; green, 180; blue, 60 }  ,draw opacity=1 ] (195,128.25) -- (277.5,128.25) -- (277.5,134.42) ;
\draw  [color={rgb, 255:red, 124; green, 180; blue, 60 }  ,draw opacity=1 ] (277.5,128.25) -- (195,128.25) -- (195,134.42) ;

\draw  [color={rgb, 255:red, 0; green, 0; blue, 0 }  ,draw opacity=1 ][fill={rgb, 255:red, 0; green, 0; blue, 0 }  ,fill opacity=1 ] (277.5,146.17) .. controls (277.5,143.96) and (279.29,142.17) .. (281.5,142.17) .. controls (283.71,142.17) and (285.5,143.96) .. (285.5,146.17) .. controls (285.5,148.38) and (283.71,150.17) .. (281.5,150.17) .. controls (279.29,150.17) and (277.5,148.38) .. (277.5,146.17) -- cycle ;
\draw [color={rgb, 255:red, 0; green, 0; blue, 0 }  ,draw opacity=1 ]   (189.9,146.14) .. controls (191.56,144.47) and (193.23,144.47) .. (194.9,146.13) .. controls (196.57,147.79) and (198.24,147.78) .. (199.9,146.11) .. controls (201.57,144.44) and (203.23,144.44) .. (204.9,146.1) .. controls (206.57,147.76) and (208.24,147.75) .. (209.9,146.08) .. controls (211.57,144.41) and (213.23,144.41) .. (214.9,146.07) .. controls (216.57,147.73) and (218.24,147.72) .. (219.9,146.05) .. controls (221.57,144.38) and (223.23,144.38) .. (224.9,146.04) .. controls (226.57,147.7) and (228.24,147.69) .. (229.9,146.02) .. controls (231.57,144.35) and (233.23,144.35) .. (234.9,146.01) .. controls (236.57,147.67) and (238.24,147.66) .. (239.9,145.99) .. controls (241.57,144.32) and (243.23,144.32) .. (244.9,145.98) .. controls (246.57,147.64) and (248.24,147.63) .. (249.9,145.96) .. controls (251.57,144.29) and (253.23,144.29) .. (254.9,145.95) .. controls (256.57,147.61) and (258.24,147.6) .. (259.9,145.93) .. controls (261.57,144.26) and (263.23,144.26) .. (264.9,145.92) .. controls (266.57,147.58) and (268.24,147.57) .. (269.9,145.9) .. controls (271.57,144.23) and (273.23,144.23) .. (274.9,145.89) .. controls (276.57,147.55) and (278.24,147.54) .. (279.9,145.87) -- (282.07,145.87) -- (282.07,145.87) ;
\draw  [color={rgb, 255:red, 0; green, 0; blue, 0 }  ,draw opacity=1 ][fill={rgb, 255:red, 0; green, 0; blue, 0 }  ,fill opacity=1 ] (327.15,146.03) .. controls (327.15,143.82) and (328.94,142.03) .. (331.15,142.03) .. controls (333.36,142.03) and (335.15,143.82) .. (335.15,146.03) .. controls (335.15,148.24) and (333.36,150.03) .. (331.15,150.03) .. controls (328.94,150.03) and (327.15,148.24) .. (327.15,146.03) -- cycle ;
\draw  [color={rgb, 255:red, 0; green, 0; blue, 0 }  ,draw opacity=1 ][fill={rgb, 255:red, 0; green, 0; blue, 0 }  ,fill opacity=1 ] (418.75,146.17) .. controls (418.75,143.96) and (420.54,142.17) .. (422.75,142.17) .. controls (424.96,142.17) and (426.75,143.96) .. (426.75,146.17) .. controls (426.75,148.38) and (424.96,150.17) .. (422.75,150.17) .. controls (420.54,150.17) and (418.75,148.38) .. (418.75,146.17) -- cycle ;

\draw (183.98,133.03) node [anchor=north west][inner sep=0.75pt]  [font=\footnotesize,color={rgb, 255:red, 0; green, 0; blue, 0 }  ,opacity=1 ,xslant=-0.03] [align=left] {$\displaystyle u$};
\draw (277.25,132.5) node [anchor=north west][inner sep=0.75pt]  [font=\footnotesize,xslant=-0.03] [align=left] {$\displaystyle v$};
\draw (230.5,113.75) node [anchor=north west][inner sep=0.75pt]  [font=\scriptsize] [align=left] {\textcolor[rgb]{0.18,0.5,0.89}{$\displaystyle \textcolor[rgb]{0.49,0.71,0.24}{E}\textcolor[rgb]{0.49,0.71,0.24}{_{1}}$}};
\draw (325.23,133.03) node [anchor=north west][inner sep=0.75pt]  [font=\footnotesize,color={rgb, 255:red, 0; green, 0; blue, 0 }  ,opacity=1 ,xslant=-0.03] [align=left] {$\displaystyle u$};
\draw (418.5,132.5) node [anchor=north west][inner sep=0.75pt]  [font=\footnotesize,xslant=-0.03] [align=left] {$\displaystyle v$};

\end{tikzpicture}

\caption[.]{\label{fig:_k-1,kW1+_k-4_W2__case_a} To the left: \case{a} of \algref{_k-1,kW1+_k-4_W2_}. To the right: the path that will be in the emulator $H$.}
\vspace{1mm}
\end{figure}
\begin{lemma}~\label{lem:_k-1,kW1+_k-4_W2__case_a} 
   If \case{a} holds then $\delta_H\paren{u,v} = \delta_G\paren{u,v}$.
\end{lemma}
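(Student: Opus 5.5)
The plan is to argue exactly as in \lemref{+4W1_case_a} and \lemref{_3,4W1__case_a}, since \algref{_k-1,kW1+_k-4_W2_} still places $E_{1}$ into $F$. First, I will recall that \case{a} means $\sizeindex{u\squiggly v}{1}=0$. By \obvref{Pi=0,1}, this is equivalent to $u\squiggly v\subseteq E_{1}$. Because the algorithm sets $F\leftarrow D\cup E_{1}\cup B_{1}\paren{V}\cup B_{2}\paren{S_{1}}\cup\ldots$, we get $E_{1}\subseteq F$, and therefore every edge of the fixed shortest path $u\squiggly v$ lies in $F$.

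Next, I will check that each such edge keeps its original weight in $H$. The algorithm assigns $w_{H}\paren{a,b}=\delta_{G}\paren{a,b}$ to every $\paren{a,b}\in F$. For an edge $\paren{a,b}$ on a shortest path, the single edge is itself a shortest $a$–$b$ path, so $\delta_{G}\paren{a,b}=w\paren{a,b}$. Summing over the edges of $u\squiggly v$ then gives $\delta_{H}\paren{u,v}\leq w\paren{u\squiggly v}=\delta_{G}\paren{u,v}$.

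For the reverse inequality, note that every edge $\paren{a,b}\in F$ has weight $\delta_{G}\paren{a,b}$. Replacing each edge of any $u$–$v$ path in $H$ by a corresponding shortest path in $G$, and applying the triangle inequality, gives $\delta_{H}\paren{u,v}\geq\delta_{G}\paren{u,v}$. Together the two bounds yield equality. There is no genuine obstacle here; the only point needing care is the weight check, namely that the weights of $E_{1}$-edges on a shortest path are not altered by the assignment $w_{H}\leftarrow\delta_{G}$. This holds because each sub-edge of a shortest path is itself shortest. Both this fact and the lower bound are independent of $k$ and of parity, so even and odd $k$ need no separate treatment (cf.\ \figref{_k-1,kW1+_k-4_W2__case_a}).
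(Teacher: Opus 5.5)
Your proposal is correct and takes the same route as the paper: in Case (a) every edge of the fixed shortest path lies in $E_{1}\subseteq F$, so the whole path survives in $H$ and $\delta_H(u,v)=\delta_G(u,v)$. The paper's two-line proof leaves two points implicit, and you supply both soundly: that $w_H(a,b)=\delta_G(a,b)=w(a,b)$ for each edge on a shortest path, and that $\delta_H\geq\delta_G$ because every emulator edge is weighted by a $G$-distance.
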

\begin{proof} 
Recall that $F$ includes $E_1$. As $\sizeindex{u\squiggly v}{1} =0$, it follows that $u\squiggly v\subseteq E_{1}$. Hence, $u\squiggly v \subseteq F$ which means that $\delta_H\paren{u,v} = \delta_G\paren{u,v}$. See: \figref{_k-1,kW1+_k-4_W2__case_a}.
\end{proof}

For \case{b}, we first establish a crucial structural property. While our previous arguments naturally bound $\delta_{G}\paren{x,p_{1}\paren{x}}$ and $\delta_{G}\paren{y,p_{1}\paren{y}}$, bounding higher level pivots $\delta_{G}\paren{x,p_{i}\paren{x}}$ and $\delta_{G}\paren{y,p_{i}\paren{y}}$ for $i \geq 2$ requires further care. We prove that as long as the pivots of $x$ and $y$ avoid the respective opposing bunches at all lower levels, the distance to the $i^{\textnormal{th}}$ pivot remains strictly bounded by $i \cdot w\paren{x,y}$. Specifically:

\begin{claim2}~\label{clm:_k-1,kW1+_k-4_W2__case_b_index}
Assume \case{b} holds. Let $i \leq k$ such that for every $j\in \bracke{i-1}$ it holds that $p_{j} \paren{x} \notin \ball{y}{j}{j+1}$ and $p_{j} \paren{y} \notin \ball{x}{j}{j+1}$, then $\delta_{G}\paren{x,p_{i}\paren{x}}, \delta_{G}\paren{y,p_{i}\paren{y}} \leq i \cdot w\paren{x,y}$.
\end{claim2}
\begin{proof} 

We prove this claim by induction on $i$. For the base case, assume $i=2$. As $\paren{x,y}\notin E_{1}$, it follows that $\delta_{G}\paren{x,p_{1}\paren{x}}, \delta_{G}\paren{y,p_{1}\paren{y}} \leq w\paren{x,y}$. By the assumption for $i=2$, it also holds that $p_{1} \paren{x} \notin \ball{y}{1}{2}$ and $p_{1} \paren{y} \notin \ball{x}{1}{2}$. Hence it follows that:

\begin{alignat*}{3}
        \delta_{G}\paren{x,p_{2}\paren{x}} & \leq \textubrace{ \delta_G\paren{x,p_{1}\paren{y}} }{$p_{1} \paren{y} \notin \ball{x}{1}{2}$} 
        &
        \leq \textubrace{ w\paren{x,y}+ \delta_G\paren{y,p_{1}\paren{y}}}{Triangle inequality}  
        &
        \leq \textubrace{ 2w\paren{x,y} }{$\delta_{G}\paren{y,p_{1}\paren{y}} \leq w\paren{x,y}$} 
\end{alignat*}

By a symmetric argument 
$ \delta_{G}\paren{y,p_{2}\paren{y}} \leq 2  w\paren{x,y}$.
Now, assume the claim holds inductively for $i-1$. Therefore, $ \delta_G\paren{x,p_{i-1}\paren{x}}, \delta_G\paren{y,p_{i-1}\paren{y}} \leq \paren{i-1} \cdot w\paren{x,y}$. Consequently:

\begin{alignat*}{3}
        \delta_{G}\paren{x,p_{i}\paren{x}} & \leq \textubrace{ \delta_G\paren{x,p_{i-1}\paren{y}} }{$p_{i-1} \paren{y} \notin \ball{x}{i-1}{i}$}
        &
        \leq \textubrace{ w\paren{x,y}+ \delta_G\paren{y,p_{i-1}\paren{y}}}{Triangle inequality}
        &
        \leq \textubrace{ i \cdot w\paren{x,y} }{Induction hypothesis}\end{alignat*}

By a symmetric argument utilizing the premise that $p_{i-1} \paren{x} \notin \ball{y}{i-1}{i}$, we obtain $\delta_{G}\paren{y,p_{i}\paren{y}} \leq i \cdot w\paren{x,y}$. This concludes the proof. 
\end{proof}

\input{\emulatorsImagesPath _k-1,kW1+_k-4_W2_/case_b}

Utilizing \clmref{_k-1,kW1+_k-4_W2__case_b_index}, we can now prove:

\begin{lemma}~\label{lem:_k-1,kW1+_k-4_W2__case_b} 
   If \case{b} holds then $\delta_H\paren{u,v} \leq \delta_G\paren{u,v} + 2\cdot\paren{k-1}\cdot W_{1}\paren{u\squiggly v}$.
\end{lemma}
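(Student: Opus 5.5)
The plan is to mirror the proof of \lemref{_3,4W1__case_b}, replacing its two-way split by a split on the first level at which the pivots of $x$ and $y$ land in each other's refined balls. Let $\paren{x,y}$ be the unique edge of $u\squiggly v$ outside $E_1$. Then $u\squiggly x, y\squiggly v\subseteq E_1\subseteq F$, and it suffices to bound $\delta_H\paren{x,y}$ by $w\paren{x,y}+2\paren{k-1}w\paren{x,y}$. Define $i^*$ as the smallest $j\in\bracke{k-2}$ such that $p_j\paren{x}\in\ball{y}{j}{j+1}$ or $p_j\paren{y}\in\ball{x}{j}{j+1}$. If no such $j$ exists, set $i^*=\infty$.

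\textbf{Case $i^*$ finite.} The hypothesis of \clmref{_k-1,kW1+_k-4_W2__case_b_index} holds for $i=i^*$, so $\delta_G\paren{x,p_{i^*}\paren{x}},\delta_G\paren{y,p_{i^*}\paren{y}}\leq i^* w\paren{x,y}$. Say $p_{i^*}\paren{x}\in\ball{y}{i^*}{i^*+1}$; the other case is symmetric. Then $\paren{y,p_{i^*}\paren{x}}\in B_1\paren{y}\subseteq F$. Since $D\subseteq F$, the path $u\squiggly x\straight p_{i^*}\paren{x}\straight y\squiggly v$ lies in $H$. By the triangle inequality it has length at most $\delta_G\paren{u,v}+2\delta_G\paren{x,p_{i^*}\paren{x}}\leq\delta_G\paren{u,v}+2i^*w\paren{x,y}$. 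This is at most $\delta_G\paren{u,v}+2\paren{k-2}W_1$.

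\textbf{Case $i^*=\infty$.} The claim now applies for every level up to $k-1$, so $\delta_G\paren{x,p_i\paren{x}},\delta_G\paren{y,p_i\paren{y}}\leq i\,w\paren{x,y}$ for all $i\leq k-1$. I will use a product edge $S_{i-1}\times S_{k-i}$, which pairs levels $a$ and $b$ with $a+b=k-1$. Take $a=\floor{\frac{k-1}{2}}$ and $b=\ceil{\frac{k-1}{2}}$, and route $u\squiggly x\straight p_a\paren{x}\straight p_b\paren{y}\straight y\squiggly v$. (For $a=0$, i.e.\ small $k$, $p_0\paren{x}=x$ and the corresponding $D$ edge is unnecessary.) The triangle inequality gives $\delta_G\paren{p_a\paren{x},p_b\paren{y}}\leq\delta_G\paren{x,p_a\paren{x}}+w\paren{x,y}+\delta_G\paren{y,p_b\paren{y}}$. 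Hence the path has length at most $\delta_G\paren{u,v}+2\paren{a+b}w\paren{x,y}=\delta_G\paren{u,v}+2\paren{k-1}W_1$.

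The case $k=2$ is degenerate. There $S_1\times S_0$ and $S_0\times S_1$ pair levels $0$ and $1$, so the first case is vacuous and the second gives $2W_1$ directly.

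\textbf{Main obstacle.} The delicate step is the index bookkeeping. One must check that every level used is actually available, i.e.\ at most $k-1$ since $S_k=\varnothing$. One must also check that \clmref{_k-1,kW1+_k-4_W2__case_b_index} is invoked only with its hypothesis verified on all lower levels. Finally, the refined-ball edges $B_1\paren{\cdot}$ must exist for levels $j\leq k-2$, which they do because $B_1$ ranges over $i\in\bracke{\ell-1}\cup\bracce{0}$ with $\ell=k-1$.

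If the split between $a$ and $b$ seems wasteful, note that any split with $a+b=k-1$ gives the same $2\paren{k-1}$ bound. So the simplest choice, $a=0$ and $b=k-1$ via $V\times S_{k-1}$, would also suffice in the second case.
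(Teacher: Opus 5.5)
Your proof is correct and is essentially the paper's argument. It takes the minimal level at which one endpoint's pivot lies in the other endpoint's refined ball and uses the resulting $B_{1}$ edge together with the bound $\delta_G(x,p_i(x)),\delta_G(y,p_i(y))\leq i\cdot w(x,y)$; otherwise it routes through a product edge whose two levels sum to $k-1$. The only difference is that the paper ends the first case at $\lceil k/2\rceil$ and pairs levels $\lfloor k/2\rfloor-1$ and $\lceil k/2\rceil$, whereas your cutoff at $k-2$ keeps every $B_{1}$ level you use inside $\{0,\dots,k-2\}$. This gives the same $2(k-1)W_{1}$ bound with slightly tidier index bookkeeping.
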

\begin{proof} 
Let $\paren{x,y} = \rightabovearrow{u\squiggly v}{1}=\leftabovearrow{u\squiggly v}{1}$. By definition of $E_{1}$, it follows that $\delta_G\paren{x,p_{1}\paren{x}},\delta_G\paren{y,p_{1}\paren{y}} \leq w\paren{x,y}$. We consider two possible cases:
\begin{enumerate}
    \item \dashuline{\case{b}{1}: There exists an $i \leq \ceil{\frac{k}{2}}$ such that either $p_{i}\paren{x}\in \ball{y}{i}{i+1}$ or $p_{i}\paren{y}\in \ball{x}{i}{i+1}$}\\
    Let $i$ be the smallest such index and assume w.l.o.g. that $p_{i}\paren{x}\in \ball{y}{i}{i+1}$. As $i$ is the smallest such index, for every $j < i$ it holds that $p_j\paren{x} \notin \ball{y}{j}{j+1}$ and $p_j\paren{y} \notin \ball{x}{j}{j+1}$. By \clmref{_k-1,kW1+_k-4_W2__case_b_index}, it follows that $\delta_G\paren{x,p_{i}\paren{x}} \leq i \cdot w\paren{x,y}$. It should be observed that the auxiliary edge $\paren{p_{i}\paren{x},y}\in B_{1}\paren{y} \subseteq B_{1}\paren{V}$. Hence:
    
    \begin{alignat*}{3}
        \delta_{H}\paren{u,v} & \leq \textubrace{ \delta_{H}\paren{u,x} + w_{H}\paren{x,p_{i}\paren{x}} + w_{H}\paren{p_{i}\paren{x},y} + \delta_{H}\paren{y,v} }{$\paren{x,p_{i}\paren{x}}\in D \subseteq F$, $\paren{p_{i}\paren{x},y}\in B_{1}\paren{V} \subseteq F$}
        \\&
        = \textubrace{ \delta_{G}\paren{u,x} + \delta_{G}\paren{x,p_{i}\paren{x}} + \delta_{G}\paren{p_{i}\paren{x},y} + \delta_{G}\paren{y,v} }{Definition of $w_{H}$}\\
        &
        \leq \textubrace{ \delta_{G}\paren{u,x} + 2\delta_G\paren{x,p_{i}\paren{x}} +w\paren{x,y} + \delta_{G}\paren{y,v} }{Triangle inequality}
    \end{alignat*}\vspace{-1cm}
    \begin{alignat*}{3}
        \longspace
        &
        \leq \textubrace{ \delta_{G}\paren{u,v} + 2\delta_G\paren{x,p_{i}\paren{x}} }{$u\squiggly x \straight y \squiggly v$ is a shortest path} 
        &
        \leq \textubrace{ \delta_{G}\paren{u,v} + 2i \cdot w\paren{x,y} }{\clmref{_k-1,kW1+_k-4_W2__case_b_index}}\\
        &
        \leq \textubrace{ \delta_{G}\paren{u,v} + \paren{k+1} \cdot w\paren{x,y} }{$i\leq \ceil{\frac{k}{2}}$}
        &
        < \textubrace{ \delta_{G}\paren{u,v} + \paren{2\cdot\paren{k-1}} \cdot w\paren{x,y} }{$k\geq 3$}
    \end{alignat*}

    \item \dashuline{\case{b}{2}: For any $i\leq \ceil{\frac{k}{2}}$ both $p_{i}\paren{x}\notin \ball{y}{i}{i+1}$ and $p_{i}\paren{y}\notin \ball{x}{i}{i+1}$}\\
    As the above holds for any $i \leq \ceil{\frac{k}{2}}$, consider $i_{1}=\floor{\frac{k}{2}}-1$ and $i_{2}=\ceil{\frac{k}{2}}$. By \clmref{_k-1,kW1+_k-4_W2__case_b_index}, it holds that $\delta_{G}\paren{x,p_{\floor{\frac{k}{2}}-1}\paren{x}} \leq \paren{\floor{\frac{k}{2}} -1}\cdot w\paren{x,y}$ and $\delta_{G}\paren{y,p_{\ceil{\frac{k}{2}}}\paren{y}} \leq \ceil{\frac{k}{2}} \cdot w\paren{x,y}$.  Furthermore, observe that the auxiliary edge $\paren{p_{\floor{\frac{k}{2}}-1}\paren{x}, p_{\ceil{\frac{k}{2}}}\paren{y}} \in S_{\floor{\frac{k}{2}}-1} \times S_{\ceil{\frac{k}{2}}} \subseteq F$, as $\floor{\frac{k}{2}}-1 + \ceil{\frac{k}{2}} = k-1$. It hence follows that:
    
    \begin{alignat*}{3}
        \delta_{H}\paren{u,v} & \leq \textubrace{ \delta_{H}\paren{u,x} +  w_{H}\paren{x,p_{\floor{\frac{k}{2}-1}}\paren{x}}+w_{H}\paren{p_{\floor{\frac{k}{2}-1}}\paren{x},p_{\ceil{\frac{k}{2}}}\paren{y}}+w_{H}\paren{p_{\ceil{\frac{k}{2}}}\paren{y},y}+\delta_G\paren{y,v}}{Auxiliary edges in $F$} \\
        &
        = \textubrace{ \delta_{G}\paren{u,x} + \delta_{G}\paren{x,p_{\floor{\frac{k}{2}-1}}\paren{x}} + \delta_{G}\paren{p_{\floor{\frac{k}{2}-1}}\paren{x},p_{\ceil{\frac{k}{2}}}\paren{y}} + \delta_{G}\paren{p_{\ceil{\frac{k}{2}}}\paren{y},y} + \delta_{G}\paren{y,v} }{Definition of $w_{H}$} \\
        &
        \leq \textubrace{ \delta_{G}\paren{u,x} + 2\delta_{G}\paren{x,p_{\floor{\frac{k}{2}-1}}\paren{x}} + w\paren{x,y} + 2\delta_{G}\paren{y,p_{\ceil{\frac{k}{2}}}\paren{y}} + \delta_{G}\paren{y,v} }{Triangle inequality}\\
        &
        \leq \textubrace{ \delta_{G}\paren{u,v} + 2\delta_{G}\paren{x,p_{\floor{\frac{k}{2}-1}}\paren{x}} + 2\delta_{G}\paren{y,p_{\ceil{\frac{k}{2}}}\paren{y}} }{$u\squiggly x \straight y \squiggly v$ is a shortest path}
        \\
        &
        \leq \textubrace{ \delta_{G}\paren{u,v} + 2\cdot \paren{\floor{\frac{k}{2}}-1} \cdot w\paren{x,y} + 2\cdot \ceil{\frac{k}{2}}  \cdot w\paren{x,y} }{\clmref{_k-1,kW1+_k-4_W2__case_b_index}}\\
        &
        \leq \textubrace{ \delta_{G}\paren{u,v} + 2\paren{k-1} \cdot w\paren{x,y} }{Simply summing}
    \end{alignat*}
    
    (See \figref{_k-1,kW1+_k-4_W2__case_b} for an illustration of \case{b}{2}).
\end{enumerate}

\end{proof}

For \case{c}, we establish a refined structural property. The presence of two distinct edges $\paren{x,y}, \paren{z,w} \in u\squiggly v$ such that $\paren{x,y}, \paren{z,w} \notin E_{1}$ implies that we could naively upper bound the distances $\delta_{G}\paren{x,p_{i}\paren{x}}$ and $\delta_{G}\paren{w,p_{i}\paren{w}}$ directly by $\delta_{G}\paren{x,w}$. However, the weight of the subpath $x\squiggly w$ may be arbitrarily close to the weight of the entire path $u\squiggly v$. Consequently, relying on the upper bound $\delta_{G}\paren{x,w} \leq \delta_{G}\paren{u,v}$ to deduce $\delta_{G}\paren{x,p_{i}\paren{x}} \leq i \cdot \delta_{G}\paren{x,w} \leq i \cdot \delta_{G}\paren{u,v}$ introduces an unacceptably large multiplicative stretch. To achieve a refined upper bound for $\delta_{G}\paren{x,p_{i}\paren{x}}$, we  utilize instead the distance from $p_{1}\paren{w}$ to the pivot's pivots $p_{i}\paren{p_{1}\paren{x}}$ (and symmetrically for $w$). Specifically:

\begin{claim2}~\label{clm:_k-1,kW1+_k-4_W2__case_c_index}
Assume \case{c} holds. Let $3 \leq i \leq k$. If $p_{1}\paren{x} \notin \ball{w}{1}{2}$ and $p_{1}\paren{w} \notin \ball{x}{1}{2}$, and additionally for every $j \in \bracke{i-2}$ it holds that $p_{j}\paren{x} \notin \ball{p_{1}\paren{w}}{j}{j+2}$ and $p_{j}\paren{w} \notin \ball{p_{1}\paren{x}}{j}{j+2}$, then:

$$  \delta_{G}\paren{x,p_{i}\paren{x}} 
    \leq 
    \begin{cases}
            \frac{i}{2}\cdot \delta_{G}\paren{x,w} + \frac{i+2}{2}\cdot w\paren{x,y} + \frac{i-4}{2}\cdot w\paren{z,w} & i = 0 \pmod 4 \\
            \frac{i}{2}\cdot \delta_{G}\paren{x,w} + \frac{i-2}{2}\cdot w\paren{x,y} + \frac{i}{2}\cdot w\paren{z,w} & i = 2 \pmod 4 \\
            \frac{i-1}{2}\cdot \delta_{G}\paren{x,w} + \frac{i+1}{2}\cdot w\paren{x,y} + \frac{i-1}{2}\cdot w\paren{z,w} & i = 1 \pmod 2
    \end{cases} 
$$

$$  \delta_{G}\paren{w,p_{i}\paren{w}} 
    \leq 
    \begin{cases}
            \frac{i}{2}\cdot \delta_{G}\paren{x,w} + \frac{i-4}{2}\cdot w\paren{x,y} + \frac{i+2}{2}\cdot w\paren{z,w} & i = 0 \pmod 4 \\
            \frac{i}{2}\cdot \delta_{G}\paren{x,w} + \frac{i}{2}\cdot w\paren{x,y} + \frac{i-2}{2}\cdot w\paren{z,w} & i = 2 \pmod 4 \\
            \frac{i-1}{2}\cdot \delta_{G}\paren{x,w} + \frac{i-1}{2}\cdot w\paren{x,y} + \frac{i+1}{2}\cdot w\paren{z,w} & i = 1 \pmod 2
    \end{cases} 
$$
\end{claim2}

\begin{table}[H]
  \begin{center}
    
\small\addtolength{\tabcolsep}{-3pt}
    \begin{tabular}{|c|c|c|c|} 
      \hline
      \multirow{2}{*}{\textbf{\small $i$} }& \multirow{2}{*}{\textbf{\small Additional Assumption} }& \textbf{\small Implications for} & \textbf{\small Implications for}  \\
      && \textbf{\small pivots of $p_{1}\paren{x}, p_{1}\paren{w}$} & \textbf{\small pivots of $x,w$}  \\
      \hhline{|=|=|=|=|}
      &&&\\ [-1em]
      \multirow{2}{*}{$1$} & \multirow{2}{*}{$\paren{x,y},\paren{z,w}\notin E_{1} $ }& \multirow{2}{*}{--} & $\delta_{G} \paren{x,p_{1}\paren{x}}\leq w\paren{x,y}$   \\
      &&& $\delta_{G} \paren{w,p_{1}\paren{w}}\leq w\paren{z,w}$ \\
      &&&\\ [-1em]
      \hline

    &&&\\ [-1em]
    \multirow{2}{*}{$2$} & $p_{1}\paren{w}\notin \ball{x}{1}{2}$& \multirow{2}{*}{--} & $\delta_{G} \paren{x,p_{2}\paren{x}}\leq\delta_{G}\paren{x,w}+ w\paren{z,w}$   \\
      &$p_{1}\paren{x}\notin \ball{w}{1}{2}$&& $\delta_{G} \paren{w,p_{2}\paren{w}}\leq \delta_{G}\paren{x,w}+w\paren{x,y}$ \\
      &&&\\ [-1em]
      \hline

    &&&\\ [-1em]
    \multirow{4}{*}{$3$} & 
    \multirow{2}{*}{$p_{1}\paren{w}\notin \ball{p_{1}\paren{x}}{1}{3}$} & \multirow{2}{*}{$\delta_{G}\paren{p_{1}\paren{x},p_{3}\paren{p_{1}\paren{x}}}, \delta_{G}\paren{p_{1}\paren{w},p_{3}\paren{p_{1}\paren{w}}}$} & 
    $\delta_{G} \paren{x,p_{3}\paren{x}}\leq$   \\
    &&& $\delta_{G}\paren{x,w}+ 2w\paren{x,y}+w\paren{z,w}$   \\
    & \multirow{2}{*}{$p_{1}\paren{x}\notin \ball{p_{1}\paren{w}}{1}{3}$}& 
    \multirow{2}{*}{$\leq\delta_{G}\paren{x,w}+w\paren{x,y}+w\paren{z,w}$} & 
    $\delta_{G} \paren{w,p_{3}\paren{w}}\leq$ \\
    &&&$\delta_{G}\paren{x,w}+ w\paren{x,y}+2w\paren{z,w}$ \\
    &&&\\ [-1em]
    \hline

    &&&\\ [-1em]
    \multirow{4}{*}{$4$} &
    \multirow{2}{*}{$p_{2}\paren{w}\notin \ball{p_{1}\paren{x}}{2}{4}$} & 
    \multirow{2}{*}{$\delta_{G}\paren{p_{1}\paren{x},p_{4}\paren{p_{1}\paren{x}}} \leq 2\delta_{G}\paren{x,w}+2w\paren{x,y}$ }&
    \multirow{2}{*}{$\delta_{G} \paren{x,p_{4}\paren{x}}\leq 2\delta_{G}\paren{x,w}+ 3w\paren{x,y}$}   \\
    && & \\
    & \multirow{2}{*}{$p_{2}\paren{x}\notin \ball{p_{1}\paren{w}}{2}{4}$}&
    \multirow{2}{*}{$\delta_{G}\paren{p_{1}\paren{w},p_{4}\paren{p_{1}\paren{w}}} \leq 2\delta_{G}\paren{x,w}+2w\paren{z,w} $} &
    \multirow{2}{*}{$\delta_{G} \paren{w,p_{4}\paren{w}}\leq 2\delta_{G}\paren{x,w}+ 3w\paren{z,w}$} \\
    &&&\\
    &&&\\ [-1em]
    \hline

    &&&\\ [-1em]
    \multirow{4}{*}{$5$} &
    \multirow{2}{*}{$p_{3}\paren{w}\notin \ball{p_{1}\paren{x}}{3}{5}$}&
    \multirow{2}{*}{$\delta_{G}\paren{p_{1}\paren{x},p_{5}\paren{p_{1}\paren{x}}}, \delta_{G}\paren{p_{1}\paren{w},p_{5}\paren{p_{1}\paren{w}}}$} &
    $\delta_{G} \paren{x,p_{5}\paren{x}}\leq$ \\
    &&& $2\delta_{G}\paren{x,w}+ 3w\paren{x,y}+2w\paren{z,w}$ \\
    & \multirow{2}{*}{$p_{3}\paren{x}\notin \ball{p_{1}\paren{w}}{3}{5}$}&
    \multirow{2}{*}{$\leq 2\delta_{G}\paren{x,w}+2w\paren{x,y}+2w\paren{z,w}$} &
    $\delta_{G} \paren{w,p_{5}\paren{w}}\leq$ \\
    &&& $2\delta_{G}\paren{x,w}+ 2w\paren{x,y}+3w\paren{z,w}$ \\
    &&&\\ [-1em]
    \hline

    &&&\\ [-1em]
    \multirow{4}{*}{$6$} &
    \multirow{2}{*}{$p_{4}\paren{w}\notin \ball{p_{1}\paren{x}}{4}{6}$} & 
    $\delta_{G}\paren{p_{1}\paren{x},p_{6}\paren{p_{1}\paren{x}}} \leq$ &
    $\delta_{G} \paren{x,p_{6}\paren{x}}\leq$ \\
    && $3\delta_{G}\paren{x,w}+w\paren{x,y} +3w\paren{z,w}$ &
    $3\delta_{G}\paren{x,w}+ 2w\paren{x,y} + 3w\paren{z,w}$ \\
    & \multirow{2}{*}{$p_{4}\paren{x}\notin \ball{p_{1}\paren{w}}{4}{6}$}&
    $\delta_{G}\paren{p_{1}\paren{w},p_{6}\paren{p_{1}\paren{w}}} \leq$ &
    $\delta_{G} \paren{w,p_{6}\paren{w}}\leq$ \\
    && $3\delta_{G}\paren{x,w}+3w\paren{x,y}+w\paren{z,w}$ &
    $3\delta_{G}\paren{x,w}+ 3w\paren{x,y}+2w\paren{z,w}$ \\
    &&&\\ [-1em]
    \hline

    &&&\\ [-1em]
    \multirow{4}{*}{$7$} & 
    \multirow{2}{*}{$p_{5}\paren{w}\notin \ball{p_{1}\paren{x}}{5}{7}$}& 
    \multirow{2}{*}{$\delta_{G}\paren{p_{1}\paren{x},p_{7}\paren{p_{1}\paren{x}}}, \delta_{G}\paren{p_{1}\paren{w},p_{7}\paren{p_{1}\paren{w}}}$} & 
    $\delta_{G} \paren{x,p_{7}\paren{x}}\leq$ \\
    &&& $3\delta_{G}\paren{x,w}+ 4w\paren{x,y}+3w\paren{z,w}$ \\
    & \multirow{2}{*}{$p_{5}\paren{x}\notin \ball{p_{1}\paren{w}}{5}{7}$}& 
    \multirow{2}{*}{$\leq 3\delta_{G}\paren{x,w}+3w\paren{x,y}+3w\paren{z,w}$} & 
    $\delta_{G} \paren{w,p_{7}\paren{w}}\leq$ \\
    &&& $3\delta_{G}\paren{x,w}+ 3w\paren{x,y}+4w\paren{z,w}$ \\
    &&&\\ [-1em]
    \hline

    &&&\\ [-1em]
   \multirow{4}{*}{$8$} &
   \multirow{2}{*}{$p_{6}\paren{w}\notin \ball{p_{1}\paren{x}}{6}{8}$} & 
   $\delta_{G}\paren{p_{1}\paren{x},p_{8}\paren{p_{1}\paren{x}}} \leq$ &
   $\delta_{G} \paren{x,p_{8}\paren{x}}\leq$ \\
   && $4\delta_{G}\paren{x,w}+4w\paren{x,y}+2w\paren{z,w}$ &
   $4\delta_{G}\paren{x,w}+ 5w\paren{x,y}+2w\paren{z,w}$ \\
   & \multirow{2}{*}{$p_{6}\paren{x}\notin \ball{p_{1}\paren{w}}{6}{8}$}&
   $\delta_{G}\paren{p_{1}\paren{w},p_{8}\paren{p_{1}\paren{w}}} \leq$ &
   $\delta_{G} \paren{w,p_{8}\paren{w}}\leq$ \\
   && $4\delta_{G}\paren{x,w}+2w\paren{x,y}+4w\paren{z,w}$ &
   $4\delta_{G}\paren{x,w}+ 2w\paren{x,y}+5w\paren{z,w}$ \\
   &&&\\ [-1em]
   \hline

   &&&\\ [-1em]
   \multirow{4}{*}{$9$} & 
   \multirow{2}{*}{$p_{7}\paren{w}\notin \ball{p_{1}\paren{x}}{7}{9}$}& 
   \multirow{2}{*}{$\delta_{G}\paren{p_{1}\paren{x},p_{9}\paren{p_{1}\paren{x}}}, \delta_{G}\paren{p_{1}\paren{w},p_{9}\paren{p_{1}\paren{w}}}$} & 
   $\delta_{G} \paren{x,p_{9}\paren{x}}\leq$ \\
   &&& $4\delta_{G}\paren{x,w}+ 5w\paren{x,y}+4w\paren{z,w}$ \\
   & \multirow{2}{*}{$p_{7}\paren{x}\notin \ball{p_{1}\paren{w}}{7}{9}$}& 
   \multirow{2}{*}{$\leq 4\delta_{G}\paren{x,w}+4w\paren{x,y}+4w\paren{z,w}$} & 
   $\delta_{G} \paren{w,p_{9}\paren{w}}\leq$ \\
   &&& $4\delta_{G}\paren{x,w}+ 4w\paren{x,y}+5w\paren{z,w}$ \\
   &&&\\ [-1em]
   \hline

   &&&\\ [-1em]
   \multirow{4}{*}{$10$} &
   \multirow{2}{*}{$p_{8}\paren{w}\notin \ball{p_{1}\paren{x}}{8}{10}$} & 
   $\delta_{G}\paren{p_{1}\paren{x},p_{10}\paren{p_{1}\paren{x}}} \leq$ &
   $\delta_{G} \paren{x,p_{10}\paren{x}}\leq$ \\
   && $5\delta_{G}\paren{x,w}+3w\paren{x,y}+5w\paren{z,w}$ &
   $5\delta_{G}\paren{x,w}+ 4w\paren{x,y}+5w\paren{z,w}$ \\
   & \multirow{2}{*}{$p_{8}\paren{x}\notin \ball{p_{1}\paren{w}}{8}{10}$}&
   $\delta_{G}\paren{p_{1}\paren{w},p_{10}\paren{p_{1}\paren{w}}} \leq$ &
   $\delta_{G} \paren{w,p_{10}\paren{w}}\leq$ \\
   && $5\delta_{G}\paren{x,w}+5w\paren{x,y}+3w\paren{z,w}$ &
   $5\delta_{G}\paren{x,w}+ 5w\paren{x,y}+4w\paren{z,w}$ \\
   &&&\\ [-1em]
   \hline

   &&&\\ [-1em]
   \multirow{4}{*}{$11$} & 
   \multirow{2}{*}{$p_{9}\paren{w}\notin \ball{p_{1}\paren{x}}{9}{11}$}& 
   \multirow{2}{*}{$\delta_{G}\paren{p_{1}\paren{x},p_{11}\paren{p_{1}\paren{x}}}, \delta_{G}\paren{p_{1}\paren{w},p_{11}\paren{p_{1}\paren{w}}}$} & 
   $\delta_{G} \paren{x,p_{11}\paren{x}}\leq$ \\
   &&& $5\delta_{G}\paren{x,w}+ 6w\paren{x,y}+5w\paren{z,w}$ \\
   & \multirow{2}{*}{$p_{9}\paren{x}\notin \ball{p_{1}\paren{w}}{9}{11}$}& 
   \multirow{2}{*}{$\leq 5\delta_{G}\paren{x,w}+5w\paren{x,y}+5w\paren{z,w}$} & 
   $\delta_{G} \paren{w,p_{11}\paren{w}}\leq$ \\
   &&& $5\delta_{G}\paren{x,w}+ 5w\paren{x,y}+6w\paren{z,w}$ \\
   &&&\\ [-1em]
   \hline

   &&&\\ [-1em]
   \multirow{4}{*}{$12$} &
   \multirow{2}{*}{$p_{10}\paren{w}\notin \ball{p_{1}\paren{x}}{10}{12}$} & 
   $\delta_{G}\paren{p_{1}\paren{x},p_{12}\paren{p_{1}\paren{x}}} \leq$ &
   $\delta_{G} \paren{x,p_{12}\paren{x}}\leq$ \\
   && $6\delta_{G}\paren{x,w}+6w\paren{x,y}+4w\paren{z,w}$ &
   $6\delta_{G}\paren{x,w}+ 7w\paren{x,y}+4w\paren{z,w}$ \\
   & \multirow{2}{*}{$p_{10}\paren{x}\notin \ball{p_{1}\paren{w}}{10}{12}$}&
   $\delta_{G}\paren{p_{1}\paren{w},p_{12}\paren{p_{1}\paren{w}}} \leq$ &
   $\delta_{G} \paren{w,p_{12}\paren{w}}\leq$ \\
   && $6\delta_{G}\paren{x,w}+4w\paren{x,y}+6w\paren{z,w}$ &
   $6\delta_{G}\paren{x,w}+ 4w\paren{x,y}+7w\paren{z,w}$ \\
   \hline

    \end{tabular}
    \caption{Several demonstrations of \clmref{_k-1,kW1+_k-4_W2__case_c_index} for $i\geq 3$ and the separate cases of $i=1$ and $i=2$. The left column holds the assumptions for these specific base cases. The middle column presents the immediate implications for the distances $\delta_{G}\paren{p_{1}\paren{x},p_{i}\paren{p_{1}\paren{x}}}$ and $\delta_{G}\paren{p_{1}\paren{w},p_{i}\paren{p_{1}\paren{w}}}$. The right column establishes our desired upper bounds for the distances $\delta_{G}\paren{x,p_{i}\paren{p_{1}\paren{x}}}$ and $\delta_{G}\paren{w,p_{i}\paren{p_{1}\paren{w}}}$. }~\label{tab:_k-1,kW1+_k-4_W2__case_c_index}
  \end{center}
\end{table} 

\begin{proof}

We prove this claim by induction on $i$. As the upper bound depends on the value of $i \pmod 4$, the induction requires four consecutive base cases. We establish the base cases for $i \in \bracce{2, 3, 4, 5}$. 

We begin with the first base case, $i=2$ (where $2 = 2 \pmod 4$). By the premise of the claim, we have $p_{1}\paren{w} \notin \ball{x}{1}{2}$. Thus, we obtain:

\begin{alignat*}{3}
        \delta_{G}\paren{x,p_{2}\paren{x}} & \leq \textubrace{ \delta_G\paren{x,p_{1}\paren{w}} }{$p_{1} \paren{w} \notin \ball{x}{1}{2}$} 
        &
        \leq \textubrace{ \delta_{G}\paren{x,w}+ \delta_{G}\paren{w,p_{1}\paren{w}}}{Triangle inequality}  
        &
        \leq \textubrace{ \delta_{G}\paren{x,w}+w\paren{z,w} }{$\delta_{G}\paren{w,p_{1}\paren{w}} \leq w\paren{z,w}$} 
    \end{alignat*}

This matches the required upper bound for $i = 2 \pmod 4$. Symmetrically, by reversing the roles of $x$ and $w$ and applying $p_{1}\paren{x} \notin \ball{w}{1}{2}$, it follows that $\delta_{G}\paren{w,p_{2}\paren{w}} \leq \delta_{G}\paren{x,w}+w\paren{x,y}$.

For the second base case, $i=3$ (where $3 = 1 \pmod 2$), we consider as well the additional assumption that $p_{1}\paren{w} \notin \ball{p_{1}\paren{x}}{1}{3}$. Therefore:

\begin{alignat*}{3}
        \delta_{G}\paren{p_{1}\paren{x},p_{3}\paren{p_{1}\paren{x}}} & \leq \textubrace{ \delta_{G}\paren{p_{1}\paren{x},p_{1}\paren{w}}  }{$p_{1}\paren{w} \notin \ball{p_{1}\paren{x}}{1}{3}$} 
        &
        \leq \textubrace{  \delta_{G}\paren{x,p_{1}\paren{x}} + \delta_{G}\paren{x,w} + \delta_{G}\paren{w,p_{1}\paren{w}}}{Triangle inequality} 
        &
        \leq \textubrace{ \delta_{G}\paren{x,w} + w\paren{x,y} +w\paren{z,w} }{Assumptions on $p_{1}\paren{x},p_{1}\paren{w}$} 
\end{alignat*}

Observe that the distance between $x$ and the pivot $p_{3}\paren{x}$ is upper bounded by routing through $p_{1}\paren{x}$ to its respective pivot $p_{3}\paren{p_{1}\paren{x}}$. Consequently, we obtain: 

\begin{alignat*}{3}
        \delta_{G}\paren{x,p_{3}\paren{x}} & \leq \textubrace{ \delta_{G}\paren{x,p_{1}\paren{x}} + \delta_{G}\paren{p_{1}\paren{x},p_{3}\paren{p_{1}\paren{x}}} }{Pivot definition} 
        &
        \leq \textubrace{ \delta_{G}\paren{x,w} + 2w\paren{x,y} +w\paren{z,w} }{Our previous conclusion} 
\end{alignat*}

Overall, $\delta_{G}\paren{x,w} + 2w\paren{x,y} + w\paren{z,w}$ fits the stated bound for $i = 3 = 1 \pmod 2$ case. A fully symmetric argument holds for $w$: by exchanging $x$ and $w$ and utilizing the complementary assumption $p_{1}\paren{x} \notin \ball{p_{1}\paren{w}}{1}{3}$, we directly deduce the symmetric upper bound bound $\delta_{G}\paren{w,p_{3}\paren{w}} \leq \delta_{G}\paren{x,w} + w\paren{x,y} + 2w\paren{z,w}$.

For the third base case, $i=4$ (where $4 = 0 \pmod 4$), we consider the additional assumption that $p_{2}\paren{w} \notin \ball{p_{1}\paren{x}}{2}{4}$. Thus:

\begin{alignat*}{3}
        \delta_{G}\paren{p_{1}\paren{x},p_{4}\paren{p_{1}\paren{x}}} & \leq \textubrace{ \delta_{G}\paren{p_{1}\paren{x},p_{2}\paren{w}} }{$p_{2}\paren{w} \notin \ball{p_{1}\paren{x}}{2}{4}$} 
        &
        \leq \textubrace{ \delta_{G}\paren{x,p_{1}\paren{x}} + \delta_{G}\paren{x,w} + \delta_{G}\paren{w,p_{2}\paren{w}} }{Triangle inequality} 
        &
        \leq \textubrace{ 2\delta_{G}\paren{x,w} + 2w\paren{x,y} }{Base case $i=2$ for $w$} 
\end{alignat*}

Note that the distance between $x$ and its pivot $p_{4}\paren{x}$ is upper bounded by routing through $p_{1}\paren{x}$ to its respective pivot $p_{4}\paren{p_{1}\paren{x}}$. Therefore:

\begin{alignat*}{3}
        \delta_{G}\paren{x,p_{4}\paren{x}} & \leq \textubrace{ \delta_{G}\paren{x,p_{1}\paren{x}} + \delta_{G}\paren{p_{1}\paren{x},p_{4}\paren{p_{1}\paren{x}}} }{Pivot definition} 
        &
        \leq \textubrace{ 2\delta_{G}\paren{x,w} + 3w\paren{x,y} }{Our previous conclusion} 
\end{alignat*}

The upper bound we received fits the stated bound for the case of $i = 4 = 0 \pmod 4$. A fully symmetric argument holds for $w$: by exchanging $x$ and $w$ and utilizing the complementary assumption $p_{2}\paren{x} \notin \ball{p_{1}\paren{w}}{2}{4}$, we directly deduce the symmetric upper bound $\delta_{G}\paren{w,p_{4}\paren{w}} \leq 2\delta_{G}\paren{x,w} + 3w\paren{z,w}$.

For the final base case, $i=5$ (where $5 = 1 \pmod 2$), we rely on the given premise that $p_{3}\paren{w} \notin \ball{p_{1}\paren{x}}{3}{5}$. From this, we derive:

\begin{alignat*}{3}
        \delta_{G}\paren{p_{1}\paren{x},p_{5}\paren{p_{1}\paren{x}}} & \leq \textubrace{ \delta_{G}\paren{p_{1}\paren{x},p_{3}\paren{w}} }{$p_{3}\paren{w} \notin \ball{p_{1}\paren{x}}{3}{5}$} \\
        &
        \leq \textubrace{ \delta_{G}\paren{x,p_{1}\paren{x}} + \delta_{G}\paren{x,w} + \delta_{G}\paren{w,p_{3}\paren{w}} }{Triangle inequality} \\
        &
        \leq \textubrace{ 2\delta_{G}\paren{x,w} + 2w\paren{x,y} + 2w\paren{z,w} }{Base case $i=3$ for $w$} 
\end{alignat*}

We can upper bound the distance from $x$ to its pivot $p_{5}\paren{x}$ by routing through $p_{1}\paren{x}$ to reach $p_{5}\paren{p_{1}\paren{x}}$. This yields:

\begin{alignat*}{3}
        \delta_{G}\paren{x,p_{5}\paren{x}} & \leq \textubrace{ \delta_{G}\paren{x,p_{1}\paren{x}} + \delta_{G}\paren{p_{1}\paren{x},p_{5}\paren{p_{1}\paren{x}}} }{Pivot definition} 
        &
        \leq \textubrace{ 2\delta_{G}\paren{x,w} + 3w\paren{x,y} + 2w\paren{z,w} }{Our previous conclusion} 
\end{alignat*}

This satisfies the required bound for $i = 5 = 1 \pmod 2$. A fully symmetric argument, where the roles of $x$ and $w$ are replaced, and the condition $p_{3}\paren{x} \notin \ball{p_{1}\paren{w}}{3}{5}$ is applied, yields the symmetric upper bound $\delta_{G}\paren{w,p_{5}\paren{w}} \leq 2\delta_{G}\paren{x,w} + 2w\paren{x,y} + 3w\paren{z,w}$.

For several explicit examples and a structural summary of these initial bounds, we refer the reader to \tabref{_k-1,kW1+_k-4_W2__case_c_index}. We now inductively assume  that the claim holds for all indices up to $i-1$. We proceed to prove the claim for an arbitrary $i \geq 6$. By the premise of the claim,  we assume that $p_{i-2}\paren{w} \notin \ball{p_{1}\paren{x}}{i-2}{i}$. We first bound the distance from $p_{1}\paren{x}$ to its  pivot $p_{i}\paren{p_{1}\paren{x}}$. 

\begin{alignat*}{3}
        \delta_{G}\paren{p_{1}\paren{x},p_{i}\paren{p_{1}\paren{x}}} & \leq \textubrace{ \delta_{G}\paren{p_{1}\paren{x},p_{i-2}\paren{w}} }{$p_{i-2}\paren{w} \notin \ball{p_{1}\paren{x}}{i-2}{i}$} \\
        &
        \leq \textubrace{ \delta_{G}\paren{x,p_{1}\paren{x}} + \delta_{G}\paren{x,w} + \delta_{G}\paren{w,p_{i-2}\paren{w}} }{Triangle inequality} \\
        &
        \leq \textubrace{  \delta_{G}\paren{x,w} + w\paren{x,y} +\delta_{G}\paren{w,p_{i-2}\paren{w}} }{$\paren{x,y}\notin E_{1}$} 
\end{alignat*}

Note that the distance between $x$ and its pivot $p_{i}\paren{x}$ is upper bounded by the weight of the path $x\squiggly p_{1}\paren{x} \squiggly p_{i}\paren{p_{1}\paren{x}}$. Hence:

\begin{alignat}{3}~\label{eqn:_k-1,kW1+_k-4_W2__case_c_index_step}
        \delta_{G}\paren{x,p_{i}\paren{x}} & \leq \textubrace{ \delta_{G}\paren{x,p_{1}\paren{x}} + \delta_{G}\paren{p_{1}\paren{x},p_{i}\paren{p_{1}\paren{x}}} }{Pivot definition} 
        &
        \leq \textubrace{  \delta_{G}\paren{x,w} + 2w\paren{x,y} +\delta_{G}\paren{w,p_{i-2}\paren{w}} }{Our previous conclusion} 
\end{alignat}

Having establish the above inequality, we distinguish between the possible cases of values of  $i \pmod 4$ and apply the inductive hypothesis to $\delta_{G}\paren{w,p_{i-2}\paren{w}}$:

\begin{enumerate}
    \item \dashuline{Case 1: $i = 0 \pmod 4$.} \\
    Since $i = 0 \pmod 4$, it follows that $i-2 = 2 \pmod 4$. By applying the inductive hypothesis for $w$ specifically at index $i-2$, we have:
    $$ \delta_{G}\paren{w,p_{i-2}\paren{w}} \leq \frac{i-2}{2}\cdot\delta_{G}\paren{x,w} + \frac{i-2}{2}\cdot w\paren{x,y} + \frac{i-4}{2}\cdot w\paren{z,w} $$
    Substituting this into our general recursive bound in \eqnref{_k-1,kW1+_k-4_W2__case_c_index_step} gives:
    \begin{align*}
            \delta_{G}\paren{x,p_{i}\paren{x}}&\leq  \delta_{G}\paren{x,w} + 2w\paren{x,y}+\frac{i-2}{2}\cdot\delta_{G}\paren{x,w} + \frac{i-2}{2}\cdot w\paren{x,y} + \frac{i-4}{2}\cdot w\paren{z,w}\\ 
            &\leq \paren{ \delta_{G}\paren{x,w} + \frac{i-2}{2}\cdot\delta_{G}\paren{x,w} } + \paren{ 2w\paren{x,y} + \frac{i-2}{2}\cdot w\paren{x,y} } + \frac{i-4}{2}\cdot w\paren{z,w} \\
            &= \frac{i}{2}\cdot\delta_{G}\paren{x,w} + \frac{i+2}{2}\cdot w\paren{x,y} + \frac{i-4}{2}\cdot w\paren{z,w}
    \end{align*}
    This results in the required upper bound for $i = 0 \pmod 4$. A symmetric argument holds for $w$: utilizing the assumption $p_{i-2}\paren{x} \notin \ball{p_{1}\paren{w}}{i-2}{i}$ and replacing the roles of $x$ and $w$. 

    \item \dashuline{Case 2: $i = 2 \pmod 4$.} \\
    In this case, $i = 2 \pmod 4$, meaning $i-2 = 0 \pmod 4$. Invoking the inductive hypothesis for $w$ at the $i-2$ index results with:
    $$ \delta_{G}\paren{w,p_{i-2}\paren{w}} \leq \frac{i-2}{2}\cdot\delta_{G}\paren{x,w} + \frac{i-6}{2}\cdot w\paren{x,y} + \frac{i}{2}\cdot w\paren{z,w} $$
    Substituting this within \eqnref{_k-1,kW1+_k-4_W2__case_c_index_step}, we get:
    \begin{align*}
            \delta_{G}\paren{x,p_{i}\paren{x}} &\leq \delta_{G}\paren{x,w} + 2w\paren{x,y} + \frac{i-2}{2}\cdot\delta_{G}\paren{x,w} + \frac{i-6}{2}\cdot w\paren{x,y} + \frac{i}{2}\cdot w\paren{z,w} \\
            &\leq \paren{ \delta_{G}\paren{x,w} + \frac{i-2}{2}\cdot\delta_{G}\paren{x,w} } + \paren{ 2w\paren{x,y} + \frac{i-6}{2}\cdot w\paren{x,y} } + \frac{i}{2}\cdot w\paren{z,w} \\
            &= \frac{i}{2}\cdot\delta_{G}\paren{x,w} + \frac{i-2}{2}\cdot w\paren{x,y} + \frac{i}{2}\cdot w\paren{z,w}
    \end{align*}
    The upper bound we received satisfies the stated condition for $i = 2 \pmod 4$. Replacing the roles of $x$ and $w$ alongside the assumption that $p_{i-2}\paren{x} \notin \ball{p_{1}\paren{w}}{i-2}{i}$ similarly establishes the symmetric counterpart for $\delta_{G}\paren{w,p_{i}\paren{w}}$.

    \item \dashuline{Case 3: $i = 1 \pmod 2$.} \\
    Because $i$ is odd, the index $i-2$ satisfies $i-2 = 1 \pmod 2$. Applying the inductive hypothesis for $w$ at index $i-2$ yields:
    $$ \delta_{G}\paren{w,p_{i-2}\paren{w}} \leq \frac{i-3}{2}\cdot\delta_{G}\paren{x,w} + \frac{i-3}{2}\cdot w\paren{x,y} + \frac{i-1}{2}\cdot w\paren{z,w} $$
    Substituting this into \eqnref{_k-1,kW1+_k-4_W2__case_c_index_step} gives:
    \begin{align*}
            \delta_{G}\paren{x,p_{i}\paren{x}} &\leq \delta_{G}\paren{x,w} + 2w\paren{x,y} + \frac{i-3}{2}\cdot\delta_{G}\paren{x,w} + \frac{i-3}{2}\cdot w\paren{x,y} + \frac{i-1}{2}\cdot w\paren{z,w} \\
            &\leq \paren{ \delta_{G}\paren{x,w} + \frac{i-3}{2}\cdot\delta_{G}\paren{x,w} } + \paren{ 2w\paren{x,y} + \frac{i-3}{2}\cdot w\paren{x,y} } + \frac{i-1}{2}\cdot w\paren{z,w} \\
            &= \frac{i-1}{2}\cdot\delta_{G}\paren{x,w} + \frac{i+1}{2}\cdot w\paren{x,y} + \frac{i-1}{2}\cdot w\paren{z,w}
    \end{align*}
    This results in the required upper bound for $i = 1 \pmod 2$. A symmetric argument holds for $w$: utilizing the assumption $p_{i-2}\paren{x} \notin \ball{p_{1}\paren{w}}{i-2}{i}$ and replacing the roles of $x$ and $w$.

\end{enumerate}

This concludes the proof of the claim.

\end{proof}

\input{\emulatorsImagesPath _k-1,kW1+_k-4_W2_/case_c}

Leveraging this property we can now prove:
\begin{lemma}~\label{lem:_k-1,kW1+_k-4_W2__case_c} 
   If \case{c} holds then:
   $$\delta_H\paren{u,v} \leq 
   \begin{cases}
            \paren{k-1}\cdot \delta_{G}\paren{u,v} + \paren{k-4}\cdot W_{1}+k\cdot W_{2} & k = 0 \pmod 2 \\
            \paren{k-2}\cdot \delta_{G}\paren{u,v} + \paren{k-1}\cdot \paren{W_{1}+W_{2}} & k = 1 \pmod 2 
    \end{cases} 
   $$.
\end{lemma}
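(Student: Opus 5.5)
The plan is to mirror the proof of \lemref{_3,4W1__case_c}, refined by a three-way split that matches the hypotheses of \clmref{_k-1,kW1+_k-4_W2__case_c_index}.

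Fix $\paren{x,y}=\rightabovearrow{u\squiggly v}{1}$ and $\paren{z,w}=\leftabovearrow{u\squiggly v}{1}$. Then $u\squiggly x,\ w\squiggly v\subseteq E_{1}$, $\delta_G\paren{x,p_{1}\paren{x}}\leq w\paren{x,y}$ and $\delta_G\paren{w,p_{1}\paren{w}}\leq w\paren{z,w}$. Every route we build has the form $u\squiggly x \straight a \straight b \straight w\squiggly v$ (or a shorter variant) using edges of $E_1\cup D\cup B_1\paren{V}\cup B_2\paren{S_1}\cup\paren{S_{i-1}\times S_{k-i}}$. By the triangle inequality its weight is at most $\delta_G\paren{u,v}+2\delta_G\paren{x,a}+2\delta_G\paren{w,b}$, where $a,b$ are the pivots used on each side, because $x\squiggly w$ is a subpath of $u\squiggly v$.

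\textbf{First case, (c$_1$).} Suppose $p_{1}\paren{x}\in\ball{w}{1}{2}$ or $p_{1}\paren{w}\in\ball{x}{1}{2}$. Exactly as in \lemref{_3,4W1__case_c}, the edge in $B_{1}\paren{V}$ gives $\delta_H\paren{u,v}\leq\delta_G\paren{u,v}+2W_{1}$, which is within both targets.

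\textbf{Second case, (c$_2$).} Suppose some $j\leq j^{*}$ (a threshold fixed below) has $p_{j}\paren{x}\in\ball{p_{1}\paren{w}}{j}{j+2}$ or the symmetric condition. Take the smallest such $j$ and assume w.l.o.g. the first alternative. Then $\paren{p_{1}\paren{w},p_{j}\paren{x}}\in B_{2}\paren{S_{1}}$. The route $x\straight p_{j}\paren{x}\straight p_{1}\paren{w}\straight w$ costs extra $2\delta_G\paren{x,p_{j}\paren{x}}+2w\paren{z,w}$. Minimality of $j$ validates the premise of \clmref{_k-1,kW1+_k-4_W2__case_c_index} at index $j$ (for $j\geq3$; $j\leq 2$ is handled directly by the base bounds in the claim's proof). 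So the extra cost is at most $2\cdot\frac{j}{2}\delta_G\paren{x,w}$ plus linear terms in $w\paren{x,y},w\paren{z,w}$. For the multiplicative factor to stay $\leq 2\floor{\frac{k}{2}}-1$ we need $j\leq k-2$ roughly, which sets $j^{*}$.

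\textbf{Third case, (c$_3$).} Otherwise, the claim applies at every index up to $j^{*}+2$. We then use a product edge $\paren{p_{a}\paren{x},p_{b}\paren{w}}\in S_{a}\times S_{b}$ with $a+b=k-1$, where $a=0$ is allowed and then $p_0\paren{x}=x$.

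For odd $k$ I would take $a=1$, $b=k-2$. The claim's odd bound at $k-2$, doubled, plus $2w\paren{x,y}$ gives exactly $\paren{k-2}\delta_G\paren{u,v}+\paren{k-1}\paren{w\paren{x,y}+w\paren{z,w}}$, matching the stated bound. (A quick check at $k=5$ gives $3\delta+4W_1+4W_2$, as required.)

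For even $k$ I would compare the choices $a=0,b=k-1$ and $a=1,b=k-2$, and also the orientation-swapped versions, choosing by $k\bmod 4$. The w.l.o.g. ordering of $w\paren{x,y}$ versus $w\paren{z,w}$ then converts the coefficients into $\paren{k-4}W_{1}+kW_{2}$.

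\textbf{Main obstacle.} The hard step is the even-$k$ bookkeeping. A naive choice, for example $a=0$, gives total additive coefficient $2k-2$ rather than the target $2k-4$, so the split point between (c$_2$) and (c$_3$) and the choice of $\paren{a,b}$ must be tuned to the $i\bmod 4$ pattern of the claim. In particular, the (c$_3$) branch should route through the side whose pivot bound carries the smaller coefficient on the heavier edge. I would first tabulate the candidate routes for $k=4,6,8$ using \tabref{_k-1,kW1+_k-4_W2__case_c_index}, find the pattern that attains $\paren{k-1,\paren{k-4}W_1+kW_2}$, and then check that the (c$_2$) costs at the chosen threshold are dominated by it.
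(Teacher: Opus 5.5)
Your skeleton matches the paper's. \case{c}{1} goes via $B_{1}\paren{V}$. \case{c}{2} uses the minimal index, an edge of $B_{2}\paren{S_{1}}$, and \clmref{_k-1,kW1+_k-4_W2__case_c_index}. \case{c}{3} uses a product edge together with the claim at a high level. Your odd-$k$ computation, via the edge $\paren{p_{1}\paren{x},p_{k-2}\paren{w}}\in S_{1}\times S_{k-2}$ and the claim at $k-2$, is exactly the paper's and is correct.

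Starting \case{c}{2} at $j=1$ is also right, and slightly more careful than the paper. The paper's range $2\leq i\leq k-4$ omits the $j=1$ condition that the claim needs at indices $\geq 3$. That omitted subcase costs only $\delta_{G}\paren{u,v}+2w\paren{x,y}+2w\paren{z,w}$, via $\paren{p_{1}\paren{x},p_{1}\paren{w}}\in B_{2}\paren{S_{1}}$.

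The gap is the even-$k$ case: you leave it open, and your tentative threshold fails there. Choosing $j^{*}$ by the multiplicative factor alone ($j\leq k-2$) is not enough. At $j=k-2$ the claim bounds your \case{c}{2} route by $\paren{k-1}\delta_{G}\paren{u,v}+k\cdot w\paren{e}+\paren{k-4}\cdot w\paren{e'}$. Which of $\paren{x,y},\paren{z,w}$ plays $e$ is dictated by $k \bmod 4$ and by which alternative holds. You fixed the alternative ``w.l.o.g.'', which already spends the path-reversal symmetry, so you cannot also order the two weights. Your argument therefore only yields $\paren{k-1}\delta_{G}\paren{u,v}+kW_{1}+\paren{k-4}W_{2}$, which is weaker than the stated bound whenever $W_{1}>W_{2}$.

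The fix has two parts.
\begin{itemize}
\item Stop \case{c}{2} at $j\leq k-3$ (the paper stops at $k-4$). There the multiplicative factor is at most $k-3$. The slack $2\delta_{G}\paren{u,v}\geq 2\paren{w\paren{x,y}+w\paren{z,w}}$ absorbs the surplus, whichever orientation occurs.
\item In \case{c}{3} the claim then holds at index $k-2$ for both $x$ and $w$. Use both edges $\paren{p_{k-2}\paren{x},p_{1}\paren{w}}$ and $\paren{p_{1}\paren{x},p_{k-2}\paren{w}}$ of $S_{1}\times S_{k-2}$. Since $k-2$ is even, the claim's mod-$4$ pattern makes the two routes cost $\paren{k-1}\delta_{G}\paren{u,v}+k\cdot w\paren{x,y}+\paren{k-4}\cdot w\paren{z,w}$ and $\paren{k-1}\delta_{G}\paren{u,v}+\paren{k-4}\cdot w\paren{x,y}+k\cdot w\paren{z,w}$; which route gives which depends on $k \bmod 4$. $\delta_{H}\paren{u,v}$ is at most their minimum. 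That minimum puts $k-4$ on the heavier edge, giving $\paren{k-1}\delta_{G}\paren{u,v}+\paren{k-4}W_{1}+kW_{2}$.
\end{itemize}
The option $a=0$ is never needed, and no tabulation is required. Incidentally, at $j=k-2$ the refined ball is all of $S_{k-2}$ because $S_{k}=\varnothing$. Both alternatives therefore hold automatically at that level, and taking the better route there is precisely this \case{c}{3} argument.
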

\begin{proof} 
We assume \case{c} holds. To prove the upper bound for $\delta_{H}\paren{u,v}$, we divide the proof into three distinct subcases, based on whether a pivot of one vertex falls within the corresponding bunch of the other. 

\begin{enumerate}
    \item \dashuline{\case{c}{1}: $p_{1}\paren{x} \in \ball{w}{1}{2}$ or $p_{1}\paren{w} \in \ball{x}{1}{2}$.} \\
    We assume without loss of generality that $p_{1}\paren{w} \in \ball{x}{1}{2}$. Observe that  $\paren{x, p_{1}\paren{w}}\in B_{1}\paren{x}\subseteq  B_{1}\paren{V} \subseteq F$. Using the triangle inequality, we can directly bound the overall distance of $u \squiggly x \squiggly p_{1}\paren{w} \squiggly w \squiggly v$:
    \begin{alignat*}{3}
            \delta_{H}\paren{u,v} & \leq \textubrace{\delta_{H}\paren{u,x} + w_{H}\paren{x,p_{1}\paren{w}} + w_{H}\paren{p_{1}\paren{w},w} + \delta_{H}\paren{w,v}}{Auxiliary edges in $F$} \\
            &= \textubrace{ \delta_{G}\paren{u,x} + \delta_{G}\paren{x,p_{1}\paren{w}} + \delta_{G}\paren{w,p_{1}\paren{w}} + \delta_{G}\paren{w,v} }{Definition of $w_{H}$} 
            \\
            & \leq \textubrace{ \delta_{G}\paren{u,x} + \delta_{G}\paren{w,p_{1}\paren{w}} +\delta_{G}\paren{x,w}+ \delta_{G}\paren{w,p_{1}\paren{w}} + \delta_{G}\paren{w,v} }{Triangle inequality} 
            \\
            & \leq \textubrace{ \delta_{G}\paren{u,v} + 2 \delta_{G}\paren{w,p_{1}\paren{w}} }{$u\squiggly x \squiggly w \squiggly v$ is a shortest path} 
            \\
            & \leq \textubrace{ \delta_{G}\paren{u,v} + 2w\paren{z,w}}{$\paren{x,y},\paren{z,w}\notin E_{1}$} 
    \end{alignat*}
    Which is strictly smaller than the required upper bound for any $k\geq 3$, regardless of the parity of $k$.

    \item \dashuline{\case{c}{2}:
    $p_{1}\paren{x} \notin \ball{w}{1}{2}$, $p_{1}\paren{w} \notin \ball{x}{1}{2}$  and there exists an $2 \leq i \leq k-4$ such that}\\
    \dashuline{$p_{i}\paren{x} \in \ball{p_{1}\paren{w}}{i}{i+2}$ or $p_{i}\paren{w} \in \ball{p_{1}\paren{x}}{i}{i+2}$.} \\
    Let $i$ be the minimal such index. Hence, for every $j \in \bracke{i-1}$, $p_{j}\paren{x} \notin \ball{p_{1}\paren{w}}{j}{j+2}$ and $p_{j}\paren{w} \notin \ball{p_{1}\paren{x}}{j}{j+2}$. Consequently, the premises for \clmref{_k-1,kW1+_k-4_W2__case_c_index} hold up to index $i$. Without loss of generality, let us assume that $p_{i}\paren{x} \in \ball{p_{1}\paren{w}}{i}{i+2}$. Observe that $\paren{p_{i}\paren{x}, p_{1}\paren{w}}\in B_{2}\paren{p_{1}\paren{w}}\subseteq  B_{2}\paren{S_{1}} \subseteq F$. Using the triangle inequality,we can  bound the weight of $u\squiggly v $ by the weight of $u \squiggly x \squiggly p_{i}\paren{x} \squiggly p_{1}\paren{w} \squiggly w \squiggly v$. That is:
    \begin{alignat*}{3}
            \delta_{H}\paren{u,v} & \leq \textubrace{ \delta_{H}\paren{u,x} + w_{H}\paren{x,p_{i}\paren{x}} + w_{H}\paren{p_{i}\paren{x},p_{1}\paren{w}} + w_{H}\paren{p_{1}\paren{w},w} + \delta_{H}\paren{w,v} }{Auxiliary edges in $F$} \\
            &= \textubrace{ \delta_{G}\paren{u,x} + \delta_{G}\paren{x,p_{i}\paren{x}} + \delta_{G}\paren{p_{i}\paren{x},p_{1}\paren{w}} + \delta_{G}\paren{w,p_{1}\paren{w}} + \delta_{G}\paren{w,v} }{Definition of $w_{H}$} 
            \\
            & \leq \textubrace{ \delta_{G}\paren{u,x} + 2\delta_{G}\paren{x,p_{i}\paren{x}} + \delta_{G}\paren{x,w} + 2\delta_{G}\paren{w,p_{1}\paren{w}} + \delta_{G}\paren{w,v} }{Triangle inequality} 
            \\
            & \leq \textubrace{ \delta_{G}\paren{u,v} + 2\delta_{G}\paren{x,p_{i}\paren{x}} + 2\delta_{G}\paren{w,p_{1}\paren{w}} }{$u\squiggly x \squiggly w \squiggly v$ is a shortest path} \\
            & \leq \textubrace{ \delta_{G}\paren{u,v} + 2\delta_{G}\paren{x,p_{i}\paren{x}} + 2w\paren{z,w} }{$\paren{z,w}\notin E_{1}$} 
    \end{alignat*}

    We now consider three possibilities for $i\pmod 4$: 
    \begin{enumerate}
        \item \dashuline{$i=0\pmod 4$}\\
            \begin{alignat*}{3}
                & \leq \textubrace{ \delta_{G}\paren{u,v} + 2\cdot \paren{\frac{i}{2}\cdot \delta_{G}\paren{x,w}+ \frac{i+2}{2}\cdot w\paren{x,y}+\frac{i-4}{2}\cdot w\paren{z,w}}+ 2w\paren{z,w} }{\clmref{_k-1,kW1+_k-4_W2__case_c_index}} \\
                & = \textubrace{ \paren{i+1}\cdot\delta_{G}\paren{u,v} + \paren{i+2} \cdot w\paren{x,y} + \paren{i-2} \cdot w\paren{z,w} }{Simply summing} \\
                & \leq \textubrace{ \paren{k-3}\cdot\delta_{G}\paren{u,v} + \paren{k-2} \cdot w\paren{x,y} + \paren{k-6} \cdot w\paren{z,w} }{$i\leq k-4$} \\
                & \leq \textubrace{ \paren{k-1}\cdot\delta_{G}\paren{u,v} + \paren{k-4} \cdot w\paren{x,y} + \paren{k-8} \cdot w\paren{z,w} }{$w\paren{x,y}+w\paren{z,w}\leq\delta_{G}\paren{u,v}$} \\
                & < \textubrace{ \paren{k-1}\cdot\delta_{G}\paren{u,v} + \paren{k-4} \cdot W_{1}\paren{u\squiggly v} + k \cdot W_{2}\paren{u\squiggly v} }{$w\paren{x,y},w\paren{z,w} \leq W_{1},W_{2}$} 
            \end{alignat*}

        \item \dashuline{$i=2\pmod 4$}\\
            \begin{alignat*}{3}
                & \leq \textubrace{ \delta_{G}\paren{u,v} + 2\cdot \paren{\frac{i}{2}\cdot \delta_{G}\paren{x,w}+ \frac{i-2}{2}\cdot w\paren{x,y}+\frac{i}{2}\cdot w\paren{z,w}}+ 2w\paren{z,w} }{\clmref{_k-1,kW1+_k-4_W2__case_c_index}} \\
                & = \textubrace{ \paren{i+1}\cdot\delta_{G}\paren{u,v} + \paren{i-2} \cdot w\paren{x,y} + \paren{i+2} \cdot w\paren{z,w} }{Simply summing} \\
                & \leq \textubrace{ \paren{k-3}\cdot\delta_{G}\paren{u,v} + \paren{k-6} \cdot w\paren{x,y} + \paren{k-2} \cdot w\paren{z,w} }{$i\leq k-4$} \\
                & \leq \textubrace{ \paren{k-1}\cdot\delta_{G}\paren{u,v} + \paren{k-8} \cdot w\paren{x,y} + \paren{k-4} \cdot w\paren{z,w} }{$w\paren{x,y}+w\paren{z,w}\leq\delta_{G}\paren{u,v}$} \\
                & < \textubrace{ \paren{k-1}\cdot\delta_{G}\paren{u,v} + \paren{k-4} \cdot W_{1}\paren{u\squiggly v} + k \cdot W_{2} \paren{u\squiggly v}}{$w\paren{x,y},w\paren{z,w} \leq W_{1},W_{2}$} 
            \end{alignat*}

        \item \dashuline{$i=1\pmod 2$}\\
            \begin{alignat*}{3}
                & \leq \textubrace{ \delta_{G}\paren{u,v} + 2\cdot \paren{\frac{i-1}{2}\cdot \delta_{G}\paren{x,w}+ \frac{i+1}{2}\cdot w\paren{x,y}+\frac{i-1}{2}\cdot w\paren{z,w}}+ 2w\paren{z,w} }{\clmref{_k-1,kW1+_k-4_W2__case_c_index}} \\
                & = \textubrace{ i\cdot\delta_{G}\paren{u,v} + \paren{i+1} \cdot w\paren{x,y} + \paren{i+1} \cdot w\paren{z,w} }{Simply summing} \\
                & \leq \textubrace{ \paren{k-4}\cdot\delta_{G}\paren{u,v} + \paren{k-3} \cdot w\paren{x,y} + \paren{k-3} \cdot w\paren{z,w} }{$i\leq k-4$} \\
                & < \textubrace{ \paren{k-2}\cdot\delta_{G}\paren{u,v} + \paren{k-1} \paren{W_{1}\paren{u\squiggly v} +  W_{2} \paren{u\squiggly v}}}{$w\paren{x,y},w\paren{z,w} \leq W_{1},W_{2}$} 
            \end{alignat*}
 
    \end{enumerate}

 \item \dashuline{\case{c}{3}:
    $p_{1}\paren{x} \notin \ball{w}{1}{2}$, $p_{1}\paren{w} \notin \ball{x}{1}{2}$  and for any $2 \leq i \leq k-4$ it holds that}\\
    \dashuline{$p_{i}\paren{x} \notin \ball{p_{1}\paren{w}}{i}{i+2}$ and $p_{i}\paren{w} \notin \ball{p_{1}\paren{x}}{i}{i+2}$.}\\
    Observe that as we now  satisfy the condition of \clmref{_k-1,kW1+_k-4_W2__case_c_index}, we can provide an upper bound for both $\delta_{G} \paren{x,p_{k-2}\paren{x}}$ and $\delta_{G} \paren{w,p_{k-2}\paren{w}}$, depending on $k \pmod 4$. Additionally, we may use the auxiliary edges of the form $S_{1}\times S_{k-2}\subseteq F$. At first, we consider the auxiliary edge $\paren{p_{k-2}\paren{x},p_{1}\paren{w}}$: 
    \begin{alignat}{3}~\label{eqn:_k-1,kW1+_k-4_W2__case_c3_x}
            \delta_{H}\paren{u,v} & \leq \textubrace{ \delta_{H}\paren{u,x} + w_{H}\paren{x,p_{k-2}\paren{x}} + w_{H}\paren{p_{k-2}\paren{x},p_{1}\paren{w}} + w_{H}\paren{p_{1}\paren{w},w} + \delta_{H}\paren{w,v} }{Auxiliary edges in $F$} \nonumber \\
            &= \textubrace{ \delta_{G}\paren{u,x} + \delta_{G}\paren{x,p_{k-2}\paren{x}} + \delta_{G}\paren{p_{k-2}\paren{x},p_{1}\paren{w}} + \delta_{G}\paren{w,p_{1}\paren{w}} + \delta_{G}\paren{w,v} }{Definition of $w_{H}$}  \nonumber
            \\
            & \leq \textubrace{ \delta_{G}\paren{u,x}  + 2\delta_{G}\paren{x,p_{k-2}\paren{x}} + \delta_{G}\paren{x,w} + 2\delta_{G}\paren{w,p_{1}\paren{w}}+\delta_{G}\paren{w,v} }{Triangle inequality} \nonumber
            \\
            & \leq \textubrace{ \delta_{G}\paren{u,v} + 2\delta_{G}\paren{x,p_{k-2}\paren{x}} + 2\delta_{G}\paren{w,p_{1}\paren{w}}  }{$u\squiggly x \squiggly w \squiggly v$ is a shortest path} \nonumber \\
            & \leq \textubrace{ \delta_{G}\paren{u,v} + 2\delta_{G}\paren{x,p_{k-2}\paren{x}} + 2w\paren{z,w} }{$\paren{z,w} \notin E_{1}$}
    \end{alignat}

    Symmetrically, we may utilize the auxiliary edge $\paren{p_{1}\paren{x},p_{k-2}\paren{w}}$, obtaining:
    \begin{alignat}{3}~\label{eqn:_k-1,kW1+_k-4_W2__case_c3_w}
            \delta_{H}\paren{u,v} & \leq \textubrace{ \delta_{H}\paren{u,x} + w_{H}\paren{x,p_{1}\paren{x}} + w_{H}\paren{p_{1}\paren{x},p_{k-2}\paren{w}} + w_{H}\paren{p_{k-2}\paren{w},w} + \delta_{H}\paren{w,v} }{Auxiliary edges in $F$} \nonumber \\
            &= \textubrace{ \delta_{G}\paren{u,x} + \delta_{G}\paren{x,p_{1}\paren{x}} + \delta_{G}\paren{p_{1}\paren{x},p_{k-2}\paren{w}} + \delta_{G}\paren{w,p_{k-2}\paren{w}} + \delta_{G}\paren{w,v} }{Definition of $w_{H}$} \nonumber
            \\
            & \leq \textubrace{ \delta_{G}\paren{u,x}  + 2\delta_{G}\paren{x,p_{1}\paren{x}} + \delta_{G}\paren{x,w} + 2\delta_{G}\paren{w,p_{k-2}\paren{w}}+\delta_{G}\paren{w,v} }{Triangle inequality} \nonumber
            \\
            & \leq \textubrace{ \delta_{G}\paren{u,v} + 2\delta_{G}\paren{x,p_{1}\paren{x}} + 2\delta_{G}\paren{w,p_{k-2}\paren{w}}  }{$u\squiggly x \squiggly w \squiggly v$ is a shortest path} \nonumber \\
            & \leq \textubrace{ \delta_{G}\paren{u,v} + 2w\paren{x,y} + 2\delta_{G}\paren{w,p_{k-2}\paren{w}} }{$\paren{x,y} \notin E_{1}$}
    \end{alignat}
    
    \begin{enumerate}
        \item \dashuline{$k = 2 \pmod 4$}\\
        Hence $k-2 = 0 \pmod 4$. Therefore, expanding \eqnref{_k-1,kW1+_k-4_W2__case_c3_x} by \clmref{_k-1,kW1+_k-4_W2__case_c_index}, we get:
        \begin{alignat*}{3}
            \delta_{H}\paren{u,v} & \leq \textubrace{ \delta_{G}\paren{u,v} + 2\cdot\paren{\frac{k-2}{2}\cdot\delta_{G}\paren{x,w} + \frac{k}{2}\cdot w\paren{x,y} + \frac{k-6}{2}\cdot w\paren{z,w}} + 2w\paren{z,w} }{\clmref{_k-1,kW1+_k-4_W2__case_c_index}} \\
            & \leq \textubrace{ \paren{k-1}\cdot\delta_{G}\paren{u,v} + k\cdot w\paren{x,y} + \paren{k-4}\cdot w\paren{z,w} }{Simply summing and $\delta_{G}\paren{x,w} \leq \delta_{G}\paren{u,v}$} 
        \end{alignat*}

        Symmetrically, expanding \eqnref{_k-1,kW1+_k-4_W2__case_c3_w} by \clmref{_k-1,kW1+_k-4_W2__case_c_index}, we get:
        \begin{alignat*}{3}
            \delta_{H}\paren{u,v} & \leq \textubrace{ \delta_{G}\paren{u,v} + 2w\paren{x,y} + 2\cdot\paren{\frac{k-2}{2}\cdot\delta_{G}\paren{x,w} + \frac{k-6}{2}\cdot w\paren{x,y} + \frac{k}{2}\cdot w\paren{z,w}} }{\clmref{_k-1,kW1+_k-4_W2__case_c_index}} \\
            & \leq \textubrace{ \paren{k-1}\cdot\delta_{G}\paren{u,v} + \paren{k-4}\cdot w\paren{x,y} + k\cdot w\paren{z,w} }{Simply summing and $\delta_{G}\paren{x,w} \leq \delta_{G}\paren{u,v}$}
        \end{alignat*}

       As $H$ contains both $u\squiggly x \straight p_{k-2}\paren{x} \straight p_{1}\paren{w} \straight w \squiggly v$ and $u\squiggly x \straight p_{1}\paren{x} \straight p_{k-2}\paren{w} \straight w \squiggly v$,  the actual distance $\delta_{H}\paren{u,v}$ is bounded by their minimum. As only at most one among $\paren{x,y},\paren{z,w}$ can be the heaviest edge (unless they have the same weight), it follows that:
        \begin{align}~\label{eqn:_k-1,kW1+_k-4_W2__case_c3_k=2}
            \delta_{H}\paren{u,v} &\leq \paren{k-1}\cdot\delta_{G}\paren{u,v} + \paren{k-4}\cdot W_{1}\paren{u\squiggly v} + k\cdot W_{2}\paren{u\squiggly v}
        \end{align} 

    \item \dashuline{$k = 0 \pmod 4$}\\
        Hence $k-2 = 2 \pmod 4$. In this case, expanding \eqnref{_k-1,kW1+_k-4_W2__case_c3_x} by \clmref{_k-1,kW1+_k-4_W2__case_c_index} we get:
        \begin{alignat*}{3}
            \delta_{H}\paren{u,v} & \leq \textubrace{ \delta_{G}\paren{u,v} + 2\cdot\paren{\frac{k-2}{2}\cdot\delta_{G}\paren{x,w} + \frac{k-4}{2}\cdot w\paren{x,y} + \frac{k-2}{2}\cdot w\paren{z,w}} + 2w\paren{z,w} }{\clmref{_k-1,kW1+_k-4_W2__case_c_index}} \\
            & = \textubrace{  \paren{k-1}\cdot\delta_{G}\paren{u,v} + \paren{k-4}\cdot w\paren{x,y} + k\cdot w\paren{z,w} }{Simply summing and $\delta_{G}\paren{x,w} \leq \delta_{G}\paren{u,v}$} 
        \end{alignat*}

        Symmetrically, expanding \eqnref{_k-1,kW1+_k-4_W2__case_c3_w} by \clmref{_k-1,kW1+_k-4_W2__case_c_index} bounds the distance through $p_{k-2}\paren{w}$:
        \begin{alignat*}{3}
            \delta_{H}\paren{u,v} & \leq \textubrace{ \delta_{G}\paren{u,v} + 2w\paren{x,y} + 2\cdot\paren{\frac{k-2}{2}\cdot\delta_{G}\paren{x,w} + \frac{k-2}{2}\cdot w\paren{x,y} + \frac{k-4}{2}\cdot w\paren{z,w}} }{\clmref{_k-1,kW1+_k-4_W2__case_c_index} for $w$ at $2 \pmod 4$} \\
            & = \textubrace{  \paren{k-1}\cdot\delta_{G}\paren{u,v} + k\cdot w\paren{x,y} + \paren{k-4}\cdot w\paren{z,w} }{Simply summing and $\delta_{G}\paren{x,w} \leq \delta_{G}\paren{u,v}$}
        \end{alignat*}

        Similarly to \eqnref{_k-1,kW1+_k-4_W2__case_c3_k=2}, $H$ contains both paths, so $\delta_{H}\paren{u,v}$ is bounded by their minimum. As only at most one among $\paren{x,y},\paren{z,w}$ is the heaviest edge on $u\squiggly v $ (unless $w\paren{x,y}=w\paren{z,w}$), it follows that:
        \begin{align}~\label{eqn:_k-1,kW1+_k-4_W2__case_c3_k=0}
            \delta_{H}\paren{u,v} 
            &\leq \paren{k-1}\cdot\delta_{G}\paren{u,v} + \paren{k-4}\cdot W_{1}\paren{u\squiggly v} + k\cdot W_{2}\paren{u\squiggly v}
        \end{align}

        It should be observed that \eqnref{_k-1,kW1+_k-4_W2__case_c3_k=2} and \eqnref{_k-1,kW1+_k-4_W2__case_c3_k=0} together cover the case where $k=0\pmod 2$. It is therefore left only to consider the case where $k=1\mod 2$.
    
        \item \dashuline{$k = 1 \pmod 2$}\\
        Hence $k-2 = 1 \pmod 2$ as well. In this case, we only need to expand one of  \eqnref{_k-1,kW1+_k-4_W2__case_c3_x} or  \eqnref{_k-1,kW1+_k-4_W2__case_c3_w}. By \clmref{_k-1,kW1+_k-4_W2__case_c_index} we get:
        
        \begin{alignat*}{3}
            \delta_{H}\paren{u,v} & \leq \textubrace{ \delta_{G}\paren{u,v} + 2\cdot\paren{\frac{k-3}{2}\cdot\delta_{G}\paren{x,w} + \frac{k-1}{2}\cdot w\paren{x,y} + \frac{k-3}{2}\cdot w\paren{z,w}} + 2w\paren{z,w} }{\clmref{_k-1,kW1+_k-4_W2__case_c_index}} \\
            & = \textubrace{ \delta_{G}\paren{u,v} + \paren{k-3}\cdot\delta_{G}\paren{x,w} + \paren{k-1}\cdot w\paren{x,y} + \paren{k-1}\cdot w\paren{z,w} }{Simply summing} \\
            & \leq \textubrace{ \paren{k-2}\cdot\delta_{G}\paren{u,v} + \paren{k-1}\cdot w\paren{x,y} + \paren{k-1}\cdot w\paren{z,w} }{$\delta_{G}\paren{x,w} \leq \delta_{G}\paren{u,v}$} \\
            & \leq \textubrace{ \paren{k-2}\cdot\delta_{G}\paren{u,v} + \paren{k-1}\cdot\paren{W_{1}\paren{u\squiggly v} + W_{2} \paren{u\squiggly v}} }{$w\paren{x,y}, w\paren{z,w} \leq W_{1}, W_{2}$}
        \end{alignat*}
        
    \end{enumerate}
    
\end{enumerate}

This covers all possible subcases \case{c}{1} to \case{c}{3}, hence we have completed the proof for \case{c}.

\end{proof}

Having established the distance bounds across \case{a} through \case{c}, we can now deduce the overall worst case stretch of the emulator. Since \case{a} guarantees the exact shortest path distance, it is dominated by \case{b} and \case{c}. By taking the maximum over these two remaining configurations, we obtain a unified stretch for all cases. To do so, we consider the parity of $k$.

\begin{theorem}~\label{thm:_k-1,kW1+_k-4_W2_} 
   \algref{_k-1,kW1+_k-4_W2_} constructs a $\paren{2\cdot \floor{\frac{k}{2}} -1, 2\cdot \ceil{\frac{k}{2}}\cdot W_{1} +\mmax{0,2\cdot\paren{\ceil{\frac{k}{2}}-2}}\cdot W_{2}}$-emulator of expected size $\tilde O\paren{n^{1+\frac{1}{k}}}$.
\end{theorem}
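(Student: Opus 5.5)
The proof splits into two parts: a stretch argument that merges the three case lemmas, and a size bound obtained from the sampling observations with $\beta_i=\frac{1}{k}$.

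\textbf{Stretch.} Fix $u,v$ and a shortest path $u\squiggly v$. Exactly one of \case{a}, \case{b}, \case{c} holds.

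\lemref{_k-1,kW1+_k-4_W2__case_a} gives exact distance in \case{a}. \lemref{_k-1,kW1+_k-4_W2__case_b} gives the bound $\delta_G(u,v)+2(k-1)W_1$ in \case{b}. \lemref{_k-1,kW1+_k-4_W2__case_c} gives the parity-dependent bound in \case{c}.

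For even $k$ the target is $(k-1)\delta_G+kW_1+(k-4)W_2$.
\begin{itemize}
\item \case{b}: we need $2(k-1)W_1\le (k-2)\delta_G+kW_1+(k-4)W_2$. Since $(x,y)$ lies on the path, $W_1\le\delta_G$, so $(k-2)W_1\le (k-2)\delta_G$.
\item \case{c}: the lemma gives $(k-4)W_1+kW_2$. Since $W_2\le W_1$, this is at most $kW_1+(k-4)W_2$.
\end{itemize}

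For odd $k$ the target is $(k-2)\delta_G+(k+1)W_1+(k-3)W_2$.
\begin{itemize}
\item \case{c}: $(k-1)(W_1+W_2)\le (k+1)W_1+(k-3)W_2$, because $2W_2\le 2W_1$.
\item \case{b}: we need $2(k-1)W_1\le (k-3)\delta_G+(k+1)W_1$, i.e.\ $(k-3)W_1\le(k-3)\delta_G$, which holds as before.
\end{itemize}

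Rewriting these two targets in the $\floor{\cdot}/\ceil{\cdot}$ form of the statement gives the claimed formula. The lower bound $\delta_H\ge\delta_G$ holds because every edge of $F$ has weight $\delta_G$ of its endpoints.

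The small cases $k=2,3$ need separate care, because the lemmas assume $k\ge3$ or index ranges that may be empty. For $k=3$ I would fall back on \secref{+4W1}: $S_0\times S_2$, $S_1\times S_1$ and $D$ are all present, so the warm-up argument applies verbatim and gives $+4W_1$. For $k=2$ I would argue directly with $S_0\times S_1$ in \case{b} and \case{c}, which gives $+2W_1$.

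\textbf{Size.} Take $\beta_i=\frac1k$ for all $i$. By \obvref{hssize2}, $|S_i|=\tilde O(n^{1-i/k})$. Each product then satisfies
$$|S_{i-1}|\cdot|S_{k-i}|=\tilde O\big(n^{1-(i-1)/k}\cdot n^{1-(k-i)/k}\big)=\tilde O\big(n^{1+1/k}\big),$$
and there are $k=\tilde O(1)$ such products. The remaining sets are bounded as follows:
\begin{itemize}
\item $D$ has $\tilde O(n)$ edges by \obvref{edgesToPivots}.
\item $E_1$ has $\tilde O(n^{1+1/k})$ edges by \obvref{edgessize}.
\item $B_1(V)$ has $\tilde O(n\cdot n^{1/k})$ edges by \obvref{edgestobunches2}.
\item $B_2(S_1)$ has $\tilde O(n^{1-1/k}\cdot n^{2/k})=\tilde O(n^{1+1/k})$ edges by \obvref{edgestobunches2}.
\end{itemize}
Summing these gives expected size $\tilde O(n^{1+1/k})$.

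\textbf{Main obstacle.} The delicate step is the comparison in \case{b}: it must absorb the excess $W_1$ coefficient into the multiplicative term via $W_1\le\delta_G$. I must also make sure the index ranges of the claims are valid for small $k$, which is why $k=2,3$ are handled separately.
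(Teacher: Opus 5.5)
Your proposal is correct and follows the paper's proof essentially step for step. It makes the same choice $\beta_i=\frac{1}{k}$ for the size bound, absorbs the excess $W_1$ coefficient of \case{b} into the multiplicative term via $W_1\le\delta_G$, and uses $W_2\le W_1$ for \case{c}. Your explicit separate treatment of $k=2$ and $k=3$ is, if anything, more careful than the paper's brief remark that it is ``taking into consideration $k=2$ and $k=3$''. This matters at $k=2$, where the even-$k$ bound of \lemref{_k-1,kW1+_k-4_W2__case_c}, $\delta_G-2W_1+2W_2$, would fall below $\delta_G$ whenever $W_2<W_1$ and so cannot be used there.
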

\begin{proof}
    We now establish an upper bound for the expected size of $F$. Observe that the probability $q_{i}$ of a vertex from $V$ to enter $S_{i} $  is $n^{-\sum_{j=1}^{i}{\beta_j}}$. 
     By \obvref{hssize2}, each hitting set $S_i$ in the hierarchy satisfies $\abs{S_i} = \tilde O\paren{n^{1 - \sum_{j=1}^{i}{\beta_j}}}$. The total size of the emulator is bounded by the size of the largest component:

   \begin{enumerate}
        \item \dashuline{$D$}: \\
        The set $D$ contributes $\abs{D} = \tilde O\paren{n}$ edges by \obvref{edgesToPivots}. 
   
        \item \dashuline{ $E_{1}$}: \\
        The set $E_{1}$ is of size  $\abs{E_{1}} = \tilde O\paren{n^{1 + \beta_1}} $  by \obvref{edgessize}.
    
        \item \dashuline{$B_1$ and $B_2$}: \\
        By \obvref{edgestobunches2}, the contributions of the bunch edges are $\abs{B_{1}\paren{V}} = \tilde O\paren{\msum{n^{1 + \beta_{i}}}{i=1}{k-1}}$ and of the second-order bunch edges:
        
        $$\abs{B_{2}\paren{S_{1}}} = \tilde O\paren{\msum{\abs{S_1} \cdot n^{\beta_{i} + \beta_{i+1}}}{i=1}{k-2}} = \tilde O\paren{\msum{n^{1 - \beta_1} \cdot n^{\beta_{i} + \beta_{i+1}}}{i=1}{k-2}} = \tilde O\paren{\msum{n^{1 + \beta_{i}+\beta_{i+1}-\beta_{1}}}{i=1}{k-2}}$$ 
    
        \item \dashuline{Product sets $S_{i-1} \times S_{k-i}$}: \\
        For each $i \in \bracke{k}$, the emulator includes the cartesian product $S_{i-1} \times S_{k-i}$. By \obvref{hssize2}, the expected size of each product is:
       $$\abs{S_{i-1} \times S_{k-i}} = \tilde O\paren{n^{1 - \sum_{j=1}^{i-1} \beta_j} \cdot n^{1 - \sum_{j=1}^{k-i} \beta_j}} \\
            = \tilde O\paren{n^{2 - \paren{\sum_{j=1}^{i-1} \beta_j + \sum_{j=1}^{k-i} \beta_j}}}$$
    \end{enumerate}
    
   To minimize $\abs{F}$, we balance the sizes of all edge sets (excluding the linear set $D$). The parameters $\beta_{i}$ govern a trade-off: increasing $\beta_{i}$ expands $E_{1}$ and the bunches, but dually reduces the hitting sets $S_{i}$ that determine the density of the cartesian products. To prevent any component from becoming a bottleneck, we equate the exponents by setting $\beta_{1} = \beta_{2} =  \ldots = \beta_{k-1} = \frac{1}{k}$, ensuring every set in $F$ satisfies the $\tilde{O}\paren{n^{1+\frac{1}{k}}}$ bound. That is: 

   $$\begin{cases} 
        \abs{E_{1}}, \abs{B_{1}\paren{V}} &= \tilde{O}\paren{n^{1 + \beta_{1}}} = \tilde{O}\paren{n^{1 + \frac{1}{k}}} \\
        \abs{B_{2}\paren{S_{1}}} &= \tilde{O}\paren{n^{1 + \beta_{i} + \beta_{i+1} - \beta_{1}}} = \tilde{O}\paren{n^{1 + \frac{1}{k} + \frac{1}{k} - \frac{1}{k}}} = \tilde{O}\paren{n^{1 + \frac{1}{k}}} \\
        \abs{S_{i-1} \times S_{k-i}} &= \tilde{O}\paren{n^{2 - \paren{\frac{i-1}{k} + \frac{k-i}{k}}}} = \tilde{O}\paren{n^{2 - \frac{k-1}{k}}} = \tilde{O}\paren{n^{1 + \frac{1}{k}}}
    \end{cases}$$
   
    It is now left to establish the approximation guarantee of $H$. Let $u,v\in V$. We first remark that \case{a} is dominated by both \case{b} and \case{c} (See: \lemref{_k-1,kW1+_k-4_W2__case_a}). For \case{b}, we know due to \lemref{_k-1,kW1+_k-4_W2__case_b} that $\delta_{H}\paren{u,v} \leq \delta_{G}\paren{u,v} + 2\cdot\paren{k-1}\cdot W_{1}\paren{u\squiggly v}$. For \case{c}, we consider the cases where $k$ is an even and when $k$ is odd. 
    
    When $k = 0 \pmod 2$, by \lemref{_k-1,kW1+_k-4_W2__case_c} we get that $\delta_{H}\paren{u,v} \leq \paren{k-1}\cdot\delta_{G}\paren{u,v} + \paren{k-4}\cdot W_{1}\paren{u\squiggly v} + k\cdot W_{2}\paren{u\squiggly v}$. This bound also covers \case{b} by shifting weight from the additive stretch to the multiplicative stretch. Since $W_{1}\paren{u\squiggly v} + W_{2}\paren{u\squiggly v}\leq \delta_{G}\paren{u,v}$, we have:
    \begin{alignat*}{3}
        \delta_{H}\paren{u,v} & \leq \textubrace{ \delta_{G}\paren{u,v} + 2\cdot \paren{k-1}\cdot W_{1}\paren{u\squiggly v} }{\lemref{_k-1,kW1+_k-4_W2__case_b}}  \\
        & \leq \textubrace{ \paren{k-1}\cdot \delta_{G}\paren{u,v} + k\cdot W_{1}\paren{u\squiggly v} }{$W_{1}\paren{u\squiggly v} \leq \delta_{G}\paren{u,v}$} \\
        & < \textubrace{ \paren{k-1}\cdot \delta_{G}\paren{u,v} + k\cdot W_{1}\paren{u\squiggly v} +\paren{k-4}\cdot W_{2}\paren{u\squiggly v} }{Assuming $k\geq 4$}
    \end{alignat*}

    Additionally, we have: 
    \begin{alignat*}{3}
        \delta_{H}\paren{u,v} & \leq \textubrace{ \paren{k-1}\cdot\delta_{G}\paren{u,v} + \paren{k-4}\cdot W_{1}\paren{u\squiggly v}+ k\cdot W_{2}\paren{u\squiggly v} }{\lemref{_k-1,kW1+_k-4_W2__case_c}}  \\
        & \leq \textubrace{ \paren{k-1}\cdot\delta_{G}\paren{u,v} + k\cdot W_{1}\paren{u\squiggly v}+ \paren{k-4}\cdot W_{2}\paren{u\squiggly v} }{$W_{2}\paren{u\squiggly v}\leq W_{1}\paren{u\squiggly v}$}
    \end{alignat*}

Combining both we get that when $k=0\pmod 2$:

\begin{equation}~\label{eqn:_k-1,kW1+_k-4_W2__k=0}
    \delta_{H}\paren{u,v} \leq \paren{k-1}\cdot\delta_{G}\paren{u,v} + k\cdot W_{1}\paren{u\squiggly v}+ \paren{k-4}\cdot W_{2}\paren{u\squiggly v}
\end{equation}

When $k = 1 \pmod 2$, by \lemref{_k-1,kW1+_k-4_W2__case_c} we have that $\delta_{H}\paren{u,v} \leq \paren{k-2}\cdot\delta_{G}\paren{u,v} + \paren{k-1}\cdot \paren{W_{1}\paren{u\squiggly v} + W_{2}\paren{u\squiggly v}}$. Similarly, for \case{b}, we shift weights to align with the multiplicative stretch of \case{c}:
\begin{alignat*}{3}
    \delta_{H}\paren{u,v} & \leq \textubrace{ \delta_{G}\paren{u,v} + 2\cdot \paren{k-1}\cdot W_{1}\paren{u\squiggly v} }{\lemref{_k-1,kW1+_k-4_W2__case_b}}  \\
    & \leq \textubrace{ \paren{k-2}\cdot \delta_{G}\paren{u,v} + \paren{k+1}\cdot W_{1}\paren{u\squiggly v} }{$W_{1}\paren{u\squiggly v} \leq \delta_{G}\paren{u,v}$} \\
    & < \textubrace{ \paren{k-2}\cdot \delta_{G}\paren{u,v} + \paren{k+1}\cdot W_{1}\paren{u\squiggly v} +\paren{k-3}\cdot W_{2}\paren{u\squiggly v} }{Assuming $k\geq 3$}
\end{alignat*}

Additionally, for \case{c} when $k=1\pmod 2$:
\begin{alignat*}{3}
    \delta_{H}\paren{u,v} & \leq \textubrace{ \paren{k-2}\cdot\delta_{G}\paren{u,v} + \paren{k-1}\cdot W_{1}\paren{u\squiggly v}+ \paren{k-1}\cdot W_{2}\paren{u\squiggly v} }{\lemref{_k-1,kW1+_k-4_W2__case_c}}  \\
    & \leq \textubrace{ \paren{k-2}\cdot\delta_{G}\paren{u,v} + \paren{k+1}\cdot W_{1}\paren{u\squiggly v}+ \paren{k-3}\cdot W_{2}\paren{u\squiggly v} }{$W_{2}\paren{u\squiggly v}\leq W_{1}\paren{u\squiggly v}$}
\end{alignat*}

Combining both we get that when $k=1 \pmod 2$:
\begin{equation}~\label{eqn:_k-1,kW1+_k-4_W2__k=1}
    \delta_{H}\paren{u,v} \leq \paren{k-2}\cdot\delta_{G}\paren{u,v} + \paren{k+1}\cdot W_{1}\paren{u\squiggly v}+ \paren{k-3}\cdot W_{2}\paren{u\squiggly v}
\end{equation}

By combining the bounds derived in \eqnref{_k-1,kW1+_k-4_W2__k=0} for $k=0 \pmod 2$ and in  \eqnref{_k-1,kW1+_k-4_W2__k=1} for $k=1\pmod 2$, we observe that, regardless of the parity of $k$, it holds that:
$$
    \delta_{H}\paren{u,v} \leq \paren{2\cdot \floor{\frac{k}{2}} -1}\cdot\delta_G \paren{u,v} + 2\cdot \ceil{\frac{k}{2}}\cdot W_{1}\paren{u\squiggly v} +2\cdot\paren{\ceil{\frac{k}{2}}-2}\cdot W_{2}\paren{u\squiggly v}
$$

Taking into consideration $k=2$ and $k=3$, we get:

$$
    \delta_{H}\paren{u,v} \leq \paren{2\cdot \floor{\frac{k}{2}} -1}\cdot\delta_G \paren{u,v} + 2\cdot \ceil{\frac{k}{2}}\cdot W_{1}\paren{u\squiggly v} +\mmax{0,2\cdot\paren{\ceil{\frac{k}{2}}-2}}\cdot W_{2}\paren{u\squiggly v}
$$

This concludes the correctness of \algref{_k-1,kW1+_k-4_W2_}.
\end{proof}

\begin{remark}\label{rem:refined_stretch}
The proof of \thmref{_k-1,kW1+_k-4_W2_} combines the
 bounds of
\lemref{_k-1,kW1+_k-4_W2__case_b} and
\lemref{_k-1,kW1+_k-4_W2__case_c}
into a simple united stretch guarantee.
In \appref{refined}, we retain the distinction between
\case{b} and \case{c} and derive a strictly tighter 
stretch guarantee for the same emulator construction.
\end{remark}

\section{Improvement for unweighted graphs}\label{_k-1,2k-4_}

In this section, we consider the unweighted setting and compare our construction for unweighted graphs
with the seminal emulator of Thorup and Zwick \cite{ThoZwi2006, HuaPet2019B}. In an unweighted graph $w\paren{e} = 1$ for all $e \in E$. This implies that $W_{1}\paren{u \squiggly v} = W_{2}\paren{u \squiggly v} = 1$ for any pair $u,v \in V$, regardless of $u\squiggly v$. Therefore, our construction from \secref{_k-1,kW1+_k-4_W2_} yields a $\paren{2\cdot \floor{\frac{k}{2}} - 1, 2\cdot \ceil{\frac{k}{2}} + \max\{0, 2\cdot\paren{\ceil{\frac{k}{2}} - 2}\}}$-emulator of size $\tilde{O}\paren{n^{1+\frac{1}{k}}}$. This is a $\paren{k-1, 2k-4}$-emulator for an even $k$, and a $\paren{k-2, 2k-2}$-emulator for an odd $k$.

To compare it to the $+ \paren{6^{k} -1} \cdot \paren{\delta_{G}\paren{u,v}}^{1-\frac{1}{k}}$-emulator of Thorup and Zwick \cite{ThoZwi2006, HuaPet2019B}, we parameterize our construction to match their $O\paren{n^{1+\frac{1}{2^{k+1}-1}}}$ size. By setting our hierarchy depth to the odd integer $2^{k+1}-1$, we obtain a $\paren{2^{k+1}-3, 2^{k+2}-4}$-emulator of size $\tilde{O}\paren{n^{1+\frac{1}{2^{k+1}-1}}}$. While their stretch is \textbf{asymptotically} superior, we demonstrate that our construction yields a strictly tighter approximation for all vertex pairs up to an exponentially large distance threshold of $\delta_{G}\paren{u,v} \leq \tilde{O}\paren{3^{k^{2}}}$. For a precise list of values, see examples in \tabref{_k-1,2k-4__vsThorupZwick}.

\begin{table}[H]
  \begin{center}
    
\small\addtolength{\tabcolsep}{-3pt}
    \begin{tabular}{|c|c|c|} 
      \hline
      \textbf{\small Hierarchy Depth $k$} & \textbf{\small Root $r_k\approx$} & \textbf{\small Distance Threshold $\delta_{G}\paren{u,v} \leq$}\\
      \hhline{|=|=|=|}
      &&\\ [-1em]
      $2$ & $8.3925397$ & $70$ \\
      &&\\ [-1em]
      \hline

    &&\\ [-1em]
    $3$ & $17.9097233$ & $5,744$ \\
    &&\\ [-1em]
    \hline

    &&\\ [-1em]
    $4$ & $46.2499783$ & $4,575,579$ \\
    &&\\ [-1em]
    \hline

    &&\\ [-1em]
    $5$ & $129.5833333$ & $36,538,081,150$ \\
    &&\\ [-1em]
    \hline

    &&\\ [-1em]
    $6$ & $376.25$ & $2,836,998,143,584,445\textbf{}$ \\
    &&\\ [-1em]
    \hline
    
    &&\\ [-1em]
    $7$ & $1110.8531746$ & $2,087,356,475,107,196,611,113$ \\
    \hline
    \end{tabular}
    \caption{Different values of $k\in\mathbb{N}$ and the regime of dominance for our $\paren{2^{k+1}-3, 2^{k+2}-4}$-emulator over the  Thorup and Zwick $+\paren{6^{k} -1} \cdot \delta_{G}^{1-\frac{1}{k}}$-emulator \cite{ThoZwi2001}.}~\label{tab:_k-1,2k-4__vsThorupZwick}
  \end{center}
\end{table}

To find these distance thresholds, we consider when:

\begin{equation}~\label{eqn:_k-1,2k-4_}
    \paren{2^{k+1}-3} \cdot \delta_{G}\paren{u,v} + 2^{k+2}-4 \leq \delta_{G}\paren{u,v} + \paren{6^{k}-1} \cdot \paren{\delta_{G}\paren{u,v}}^{1-\frac{1}{k}}
\end{equation}

Subtracting $\delta_{G}\paren{u,v}$ from both sides and applying the substitution $x = \paren{\delta_{G}\paren{u,v}}^{\frac{1}{k}}$ reduces Equation \ref{eqn:_k-1,2k-4_} to a polynomial inequality. We bound the valid range for this inequality in the following lemma:

\begin{lemma} \label{lem:_k-1,2k-4__root}
Let $f_{k}\paren{x} = \paren{2^{k+1}-4} \cdot x^{k} - \paren{6^{k}-1} \cdot x^{k-1} + 2^{k+2}-4$. For any integer $k \geq 2$, it holds that $f_{k}\paren{x} <0$ for any $x \in \paren{1, r_{k}}$, where $r_{k} \in \paren{\frac{\paren{k-1}\cdot\paren{6^{k}-1}}{k\cdot\paren{2^{k+1}-4}}, \frac{6^{k}-1}{2^{k+1}-4}}$ is the unique real root of $f_{k}\paren{x}$.
\end{lemma}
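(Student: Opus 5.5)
The plan is to study $f_{k}$ on the half-line $x>0$ through its derivative, writing $a=2^{k+1}-4$, $b=6^{k}-1$ and $c=2^{k+2}-4$, so that $f_{k}\paren{x}=a x^{k}-b x^{k-1}+c$ with $a,b,c>0$ for $k\geq 2$. For $x>0$ we have
$$f_{k}'\paren{x}=x^{k-2}\cdot\paren{k a x-\paren{k-1} b}.$$
Let $x^{*}=\frac{\paren{k-1}\cdot b}{k\cdot a}$, which is exactly the claimed lower end of the interval for $r_{k}$. Then $f_{k}$ is strictly decreasing on $\paren{0,x^{*}}$ and strictly increasing on $\paren{x^{*},\infty}$. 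The statement's ``unique real root'' will be read as the unique root in $\paren{1,\infty}$. This reading is forced: $f_{k}\paren{0}=c>0$ together with $f_{k}\paren{1}<0$ gives a further root in $\paren{0,1}$, and for odd $k$ there is also a negative root. Only $x>1$ is relevant here, since $x=\paren{\delta_{G}\paren{u,v}}^{\frac{1}{k}}$.

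The key steps, in order, are as follows.
\begin{enumerate}
\item Compute $f_{k}\paren{1}=a-b+c=6\cdot 2^{k}-7-6^{k}$ and show it is negative for all $k\geq 2$. At $k=2$ it equals $-19$. For larger $k$ it suffices that $6^{k}\geq 6\cdot 2^{k}$ once $3^{k}\geq 6$, i.e.\ for $k\geq 2$.
\item Verify $x^{*}>1$, i.e.\ $\paren{k-1}\paren{6^{k}-1}>k\paren{2^{k+1}-4}$. This follows from $6^{k}-1\geq 2\paren{2^{k+1}-4}$ together with $\paren{k-1}\cdot 2\geq k$ for $k\geq 2$.
\item Conclude from monotonicity on $\bracke{1,x^{*}}$ that $f_{k}\paren{x}\leq f_{k}\paren{1}<0$ there, and in particular $f_{k}\paren{x^{*}}<0$.
\item Evaluate at $x=\frac{b}{a}$: there $a x^{k}-b x^{k-1}=x^{k-1}\paren{a x-b}=0$, so $f_{k}\paren{\frac{b}{a}}=c>0$. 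Also $\frac{b}{a}>x^{*}$ trivially.
\item Since $f_{k}$ is continuous and strictly increasing on $\bracke{x^{*},\frac{b}{a}}$, moving from a negative value to a positive one, it has exactly one root $r_{k}$ in the open interval $\paren{x^{*},\frac{b}{a}}$. It has no root in $\paren{1,x^{*}}$ by step 3, and none beyond $\frac{b}{a}$ by monotonicity. Hence $r_{k}$ is the unique root in $\paren{1,\infty}$, and $f_{k}<0$ on all of $\paren{1,r_{k}}$.
\end{enumerate}

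The only potentially delicate points are the two elementary inequalities in steps 1 and 2, which must hold uniformly in $k\geq 2$. I would verify both directly at $k=2$ and then argue by the dominance of $6^{k}$ over $2^{k}$, e.g.\ via induction or the bound $3^{k}\geq 2k$. These are the main obstacle, but they are routine.
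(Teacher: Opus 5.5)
Your proposal is correct and follows essentially the same route as the paper: showing $f_{k}\paren{1}<0$, locating the unique critical point $x^{*}=\frac{\paren{k-1}\paren{6^{k}-1}}{k\paren{2^{k+1}-4}}>1$, evaluating $f_{k}\paren{\frac{6^{k}-1}{2^{k+1}-4}}=2^{k+2}-4>0$, and using monotonicity on either side of $x^{*}$. You also make two small improvements. First, your reading of ``unique real root'' as the unique root in $\paren{1,\infty}$ is a correct sharpening, since the paper's argument likewise only establishes uniqueness for $x\geq 1$. Second, your step 2 covers $k=2$ directly, whereas the paper's numerical chain for $x^{*}>1$ implicitly uses $k\geq 3$.
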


\begin{proof}
Our proof goes as follows: first, we identify the unique minimum point $x_{k} = \frac{\paren{k-1}\cdot\paren{6^{k}-1}}{k\cdot\paren{2^{k+1}-4}}>1$ of $f_{k}\paren{x}$ and establish that $f_{k}\paren{1} < 0$. Because $f_{k}\paren{x}$ is a continuous polynomial, this guarantees the function remains negative throughout the interval $\paren{1, x_{k}}$. To extend this range and find $r_{k}$, we evaluate $\hat{x}_{k} = \frac{6^{k}-1}{2^{k+1}-4}$ to show that $f_{k}\paren{\hat{x}_{k}} > 0$. It then strictly follows that a unique root $r_{k} \in \paren{x_{k}, \hat{x}_{k}}$ exists where $f_{k}\paren{r_{k}} = 0$, guaranteeing $f_{k}\paren{x} < 0$ for all $x \in \paren{1, r_{k}}$.

We start by evaluating the function at $x = 1$:

\begin{alignat*}{2}
    f_{k}\paren{1} &= \textubrace{\paren{2^{k+1}-4} - \paren{6^{k}-1} + 2^{k+2}-4}{Definition of $f_{k}\paren{x}$} &&= \textubrace{2^{k+1} + 2^{k+2} - 6^{k} - 7}{Simply summing} \\
    &= \textubrace{2^{k} \cdot \paren{6 - 3^{k}} - 7}{Rearranging terms} &&< \textubrace{0}{$k \geq 2$} 
\end{alignat*}

The  derivative is $f'_{k}\paren{x} = x^{k-2} \cdot \paren{k\cdot\paren{2^{k+1}-4} \cdot x - \paren{k-1}\cdot\paren{6^{k}-1}}$. Since $x \geq 1$, it follows that $x^{k-2} \neq 0$. Therefore, setting $f'_{k}\paren{x} = 0$ yields a unique extremum point:
$$x_{k} = \frac{\paren{k-1}\cdot\paren{6^{k}-1}}{k\cdot\paren{2^{k+1}-4}}$$

To ensure $x_{k}> 1$, we bound it from below. Because $\frac{k-1}{k} \geq \frac{1}{2}$ for $k \geq 3$, we obtain:
\begin{alignat*}{2}
    x_{k} &\geq \frac{1}{2} \cdot \frac{6^{k}-1}{2^{k+1}-4} &&= \frac{6^{k}-1}{4\cdot\paren{2^{k}-2}} = \frac{2^{k}\cdot 3^{k}-1}{4\cdot\paren{2^{k}-2}} \\
    &= \frac{1}{4} \cdot \frac{3^{k} - \frac{1}{2^{k}}}{1 - \frac{1}{2^{k-1}}} &&> \frac{1}{4} \cdot \frac{27 - \frac{1}{8}}{1} > 1
\end{alignat*}

We analyze the sign of the sign of the derivative $f'_{k}\paren{x}$ around the extremum $x_{k}$. Because $x^{k-2} > 0$ for all $x > 1$, the sign of $f'_{k}\paren{x}$ is entirely determined by its linear factor: $k\cdot\paren{2^{k+1}-4} \cdot x - \paren{k-1}\cdot\paren{6^{k}-1}$. For $x < x_{k}$, this linear factor evaluates to a negative value, yielding $f'_{k}\paren{x} < 0$, and for $x > x_{k}$, this factor evaluates to a positive value, yielding $f'_{k}\paren{x} > 0$. Hence, $x_{k}$ is the unique  minimum point for $x\geq 1$. 

Next, we evaluate, for $k\geq 2$, the function at the upper threshold value $\hat{x}_{k} = \frac{6^{k}-1}{2^{k+1}-4}$:
\begin{alignat*}{2}
    f_{k}\paren{\hat{x}_{k}} &= \paren{2^{k+1}-4} \cdot \paren{\frac{6^{k}-1}{2^{k+1}-4}}^{k} - \paren{6^{k}-1} \cdot \paren{\frac{6^{k}-1}{2^{k+1}-4}}^{k-1} + 2^{k+2}-4 && \\
    &= \frac{\paren{6^{k}-1}^{k}}{\paren{2^{k+1}-4}^{k-1}} - \frac{\paren{6^{k}-1}^{k}}{\paren{2^{k+1}-4}^{k-1}} + 2^{k+2} - 4 && \\
    &= 2^{k+2} - 4 > 0 &&
\end{alignat*}

Observe that $x_{k} = \frac{k-1}{k} \cdot \hat{x}_{k} < \hat{x}_{k}$. As $f_{k}\paren{x}$ is continuous,  decreases over $\paren{1,x_{k}}$,  increases over $\paren{x_{k},\infty}$ and satisfies $f_{k}\paren{\hat{x}_{k}}>0$, it follows that a unique real root  $r_{k} \in \paren{x_{k}, \hat{x}_{k}}$ of $f_{k}\paren{x}$ exists. Consequently, for all $x \in \paren{1, r_{k}}$, the polynomial satisfies $f_{k}\paren{x} < 0$.

\end{proof}

By \lemref{_k-1,2k-4__root}, $f_{k}\paren{x} < 0$ for all $x \in \paren{1, r_{k}}$. By reverting the substitution $x = \paren{\delta_{G}\paren{u,v}}^{\frac{1}{k}}$, it follows that \eqnref{_k-1,2k-4_} holds for any distance $\delta_{G}\paren{u,v} \leq \floor{\paren{r_{k}}^{k}}$. We formalize this:

\begin{theorem}~\label{thm:_k-1,2k-4_}
Let $G = \paren{V,E}$ be an unweighted, undirected graph, and let $k \geq 2$ be an integer. \algref{_k-1,kW1+_k-4_W2_} computes a $\paren{2^{k+1}-3, 2^{k+2}-4}$-emulator of size $\tilde{O}\paren{n^{1+\frac{1}{2^{k+1}-1}}}$. Furthermore, for any pair of vertices $u, v \in V$ whose distance satisfies
$$\delta_{G}\paren{u,v} \leq \paren{\floor{\frac{6^{k}-1}{2^{k+1}-4}}}^{k} \in O\paren{3^{k^{2}}}$$
this emulator provides a strictly tighter distance approximation than the $+ \paren{6^{k}-1} \cdot \paren{\delta_{G}\paren{u,v}}^{1-\frac{1}{k}}$-emulator of Thorup and Zwick \cite{ThoZwi2006}.
\end{theorem}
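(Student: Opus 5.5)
We split the statement into its two claims. The first is a direct specialization of \thmref{_k-1,kW1+_k-4_W2_}. The second follows from \lemref{_k-1,2k-4__root} together with one extra estimate showing that $\floor{\hat x_{k}}$ still lies below the root $r_{k}$.

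\textbf{Step 1: stretch and size.} Run \algref{_k-1,kW1+_k-4_W2_} with hierarchy depth $K=2^{k+1}-1$, which is odd. By \thmref{_k-1,kW1+_k-4_W2_}, specialized to the unweighted setting as in the opening of this section, the output is a $\paren{K-2,2K-2}$-emulator of size $\tilde O\paren{n^{1+\frac{1}{K}}}$. Since $K-2=2^{k+1}-3$ and $2K-2=2^{k+2}-4$, this is exactly the first claim.

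\textbf{Step 2: reduction to the polynomial.} Fix $u,v$ and set $\delta=\delta_{G}\paren{u,v}$. If $\delta=0$ then $u=v$, both emulators give $0$, and there is nothing to compare; so assume $\delta\geq 1$. Strict superiority means strict inequality in \eqnref{_k-1,2k-4_}. Put $x=\delta^{\frac{1}{k}}$ and divide by $x^{0}=1$; the strict inequality becomes $f_{k}\paren{x}<0$. The hypothesis $\delta\leq\floor{\hat x_{k}}^{k}$, with $\hat x_{k}=\frac{6^{k}-1}{2^{k+1}-4}$, is equivalent to $x\in\bracke{1,\floor{\hat x_{k}}}$.

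\textbf{Step 3: the interval lies below the root.} The endpoint $x=1$ is handled by the computation $f_{k}\paren{1}<0$ in the proof of \lemref{_k-1,2k-4__root}. By that lemma, it then suffices to show $\floor{\hat x_{k}}<r_{k}$, i.e.\ $f_{k}\paren{\floor{\hat x_{k}}}<0$.

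This is the main obstacle. The lemma only places $r_{k}$ in $\paren{x_{k},\hat x_{k}}$, and in fact $r_{k}$ is extremely close to $\hat x_{k}$: from the table, $r_{5}$ agrees with $\hat x_{5}$ to many decimals. So we need an arithmetic fact about how far $\hat x_{k}$ sits above its floor. Write $c=2^{k+1}-4$. Then
$$f_{k}\paren{x}=c\,x^{k-1}\paren{x-\hat x_{k}}+2c+4 .$$
Since $6^{k}-1$ is odd and $c$ is even, $\hat x_{k}$ is never an integer. Its fractional part is a positive multiple of $\frac{1}{c}$, so $t:=\hat x_{k}-\floor{\hat x_{k}}\geq\frac{1}{c}$. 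Substituting $x=\floor{\hat x_{k}}$ gives
$$f_{k}\paren{\floor{\hat x_{k}}}=-c\,t\,x^{k-1}+2c+4\leq -x^{k-1}+2^{k+2}-4 .$$

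\textbf{Step 4: finishing the estimate.} For $k=2$ we check directly: $\floor{\hat x_{2}}=8$ and $f_{2}\paren{8}=256-280+12<0$. For $k\geq 3$ we use $\floor{\hat x_{k}}\geq x_{k}\geq\frac{1}{2}\hat x_{k}\geq\frac{3^{k}}{4}$, where the middle inequality is from the proof of \lemref{_k-1,2k-4__root}. This gives $x^{k-1}\geq\paren{\frac{3^{k}}{4}}^{k-1}$. We then show $\paren{\frac{3^{k}}{4}}^{k-1}>2^{k+2}$:
\begin{itemize}
\item at $k=3$, using $x=17$, we get $289>28$;
\item for larger $k$, the left-hand side grows like $3^{k\paren{k-1}}$ against $2^{k+2}$, which is a routine induction.
\end{itemize}
Hence $f_{k}\paren{\floor{\hat x_{k}}}<0$. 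By the monotonicity established in \lemref{_k-1,2k-4__root} ($f_{k}$ is decreasing on $\paren{1,x_{k}}$ and increasing afterwards, with $f_{k}\paren{1}<0$), $f_{k}$ is negative on all of $\bracke{1,\floor{\hat x_{k}}}$. This proves the strict improvement.

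Finally, $\floor{\hat x_{k}}^{k}\leq\paren{\frac{6^{k}}{2^{k}}}^{k}=3^{k^{2}}$ gives the stated $O\paren{3^{k^{2}}}$ bound.
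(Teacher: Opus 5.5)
Your proposal is correct and follows the same route as the paper: specialize the general theorem to the odd depth $2^{k+1}-1$ with $W_1=W_2=1$, substitute $x=\delta_G(u,v)^{1/k}$, and invoke the root lemma. The one difference is that the paper only asserts $\lfloor\hat x_k\rfloor\le r_k$ (non-strictly, and with no argument). Your observation that $\hat x_k-\lfloor\hat x_k\rfloor\ge 1/c$ with $c=2^{k+1}-4$, together with the identity $f_k(x)=c\,x^{k-1}(x-\hat x_k)+2c+4$, actually proves the strict inequality $f_k(\lfloor\hat x_k\rfloor)<0$. This step is genuinely needed, since $\hat x_k-r_k=\frac{2c+4}{c\,r_k^{k-1}}$ is tiny. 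You also correctly handle the endpoint $x=1$, which the lemma's open interval $(1,r_k)$ leaves out.
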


\begin{proof}
The correctness and size of the emulator follow directly from \thmref{_k-1,kW1+_k-4_W2_} and our preceding discussion about unweighted graphs. 

By \lemref{_k-1,2k-4__root}, $f_{k}\paren{x} < 0$ for all $x \in \paren{1, r_{k}}$. Note that $\floor{\hat{x}_{k}} \leq r_{k}$ for $\hat{x}_{k} = \frac{6^{k}-1}{2^{k+1}-4}$, hence the substitution $x = \paren{\delta_{G}\paren{u,v}}^{\frac{1}{k}}$ yields $f_{k}\paren{x} < 0$ for any distance $\delta_{G}\paren{u,v} \leq \paren{\floor{\hat{x}_{k}}}^{k}$, satisfying \eqnref{_k-1,2k-4_}.
\end{proof}

\begin{figure}[H]
\centering

\begin{tikzpicture}
\begin{semilogyaxis}[
        width=12cm, height=9cm,
        xlabel={Hierarchy Depth ($k$)},
        ylabel={Distance Threshold $\delta_{G}\paren{u,v} \leq$},
        xmin=1.5, xmax=7.5,
        ymin=10, ymax=1e24, 
        xtick={2,3,4,5,6,7},
        ytick={70, 5744, 4575462, 3.6538e10, 2.8369e15, 2.0873e21},
        yticklabels={
            $70$, 
            $5\text{,}744$, 
            $4.57 \times 10^6$, 
            $3.65 \times 10^{10}$, 
            $2.84 \times 10^{15}$, 
            $2.09 \times 10^{21}$
        },
        grid=both,
        grid style={dashed, gray!30},
        thick,
        title={Regime of Dominance over Thorup and Zwick emulator},
        legend pos=north west
    ]
    
    \addplot [
        color=blue,
        mark=*,
        mark options={scale=1.2, fill=white, thick},
        very thick
    ] coordinates {
        (2, 70)
        (3, 5744)
        (4, 4575462)
        (5, 3.6538e10)
        (6, 2.8369e15)
        (7, 2.0873e21)
    };
    \addlegendentry{Distance threshold $\delta_{G}\paren{u,v}\leq$}
    
    \addplot[fill=blue, fill opacity=0.08, draw=none] coordinates {
        (2, 70) 
        (3, 5744) 
        (4, 4575462) 
        (5, 3.6538e10) 
        (6, 2.8369e15) 
        (7, 2.0873e21)
        (7.5, 2.0873e21) 
        (7.5, 1)        
        (1.5, 1)        
        (1.5, 70)       
        (2, 70)         
    } \closedcycle;
    
    \end{semilogyaxis}
\end{tikzpicture}

\caption[.]{\label{fig:_k-1,2k-4__vsThorupZwick} The regime of dominance for the $\paren{2^{k+1}-3, 2^{k+2}-4}$-emulator. The \qoute{$x$} axis represents the hierarchy depth $k\in\mathbb{N}$, and the \qoute{$y$} axis represents the maximum graph distance $\delta_{G}\paren{u,v}$ for which our construction yields a strictly tighter approximation than the Thorup and Zwick emulator \cite{ThoZwi2001}.}
\vspace{1mm}
\end{figure}
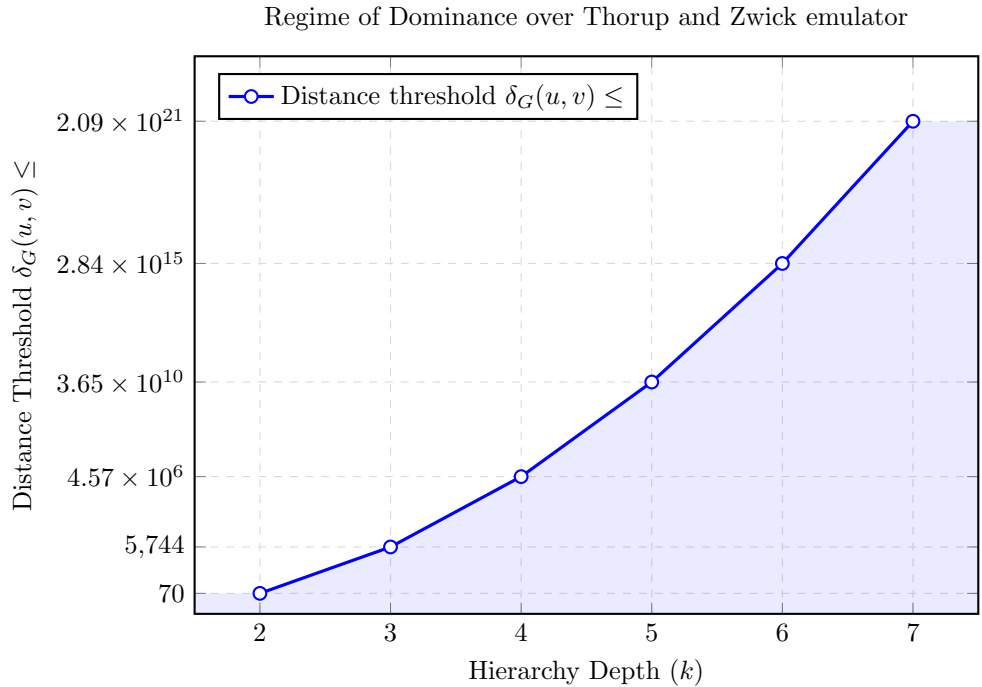

For a visualization of this exponential regime of dominance, we plot the maximum distance threshold $\delta_{G}\paren{u,v}$ as a function of the hierarchy depth $k$ in \figref{_k-1,2k-4__vsThorupZwick}.

\newpage
\clearpage
\renewcommand*{\bibfont}{\small}
\begingroup
\phantomsection\printbibliography[heading=bibintoc]
\endgroup

\appendix

\section{Runtime Efficient Construction}~\label{runtime}
In this section, we present \algref{_k-1,kW1+_k-4_W2__runtime}, which is a runtime efficient algorithm that essentially constructs the same $\paren{2\cdot \floor{\frac{k}{2}} -1, 2\cdot \ceil{\frac{k}{2}}\cdot W_{1} +\mmax{0,2\cdot\paren{\ceil{\frac{k}{2}}-2}}\cdot W_{2}}$-emulator as \algref{_k-1,kW1+_k-4_W2_} from \secref{_k-1,kW1+_k-4_W2_}. It has an expected runtime of $\tilde O\paren{n^{2+\frac{2}{k}}}$.
It utilizes an approach similar to Baswana and Kavitha \cite{BasKav2010}, yet is much simpler, for our usages. 

\algref{_k-1,kW1+_k-4_W2__runtime} differs from \algref{_k-1,kW1+_k-4_W2_} by not having the exact distances $\delta_{G}\paren{u,v}$ apriori. To overcome this, \algref{_k-1,kW1+_k-4_W2__runtime} must compute these distances. This affects the computation of the sets that compose $F$ as well. The major bottleneck is $B_{2}\paren{S_{1}}$ which requires $\tilde O \paren{n^{2+\frac{2}{k}}}$ runtime. The rest can be computed within $\tilde O \paren{n^{2+\frac{1}{k}}}$ runtime. 

A key observation is that  $d\bracke{u,v} =\delta_{G}\paren{u,v}$ for certain vertices $u,v\in V$. For our usage, we would like to show that for \qoute{necessary} auxiliary edges $\paren{u,v}\in F \setminus E$ it holds that $w_{H}=\delta_{G}\paren{u,v}$. In other words, we need to show that for these vertices $d\bracke{u,v}=\delta_{G}\paren{u,v}$. 

\begin{remark}\label{rem:undirected}
As $G$ is undirected, we may assume whenever a value $d\bracke{u,v}$ is updated to hold a smaller value than it previously held, so does $d\bracke{v,u}$. This has no effect on the runtime.
\end{remark}

\begin{algorithm}[H]
\caption{runtimeEfficientGeneralEmulator$\paren{G=\paren{V,E,w},k}$}\label{alg:_k-1,kW1+_k-4_W2__runtime}

\textbf{Input:} An undirected non-negative weighted graph $G=\paren{V,E,w}$ and an integer $k\geq 2$.

\textbf{Output:} A $(2\cdot \floor{\frac{k}{2}} -1, 2\cdot \ceil{\frac{k}{2}}\cdot W_{1} +\max\{0,2\cdot(\ceil{\frac{k}{2}}-2)\}\cdot W_{2})$-emulator. 

\medskip

Let $\beta_{1},\beta_{2},\ldots, \beta_{k-1} \in \paren{ 0 , 1}$ to be fixed later

Initialize $d$ as the edges' weights and $\infty$ otherwise

Set $S_0 \leftarrow V$

\For{$i \in \bracke{k-1}$}
{
    Sample each vertex of $S_{i-1}$ with probability $n^{-\beta_i}$ to $S_{i}$
    
    For every  $u\in V$ find the pivot $p_{i}\paren{u}$ and update $d$ accordingly 

    Construct the edge set $D_{i}$

    Construct the edge set $E_{i}$
}

Set $S_{k} \leftarrow \varnothing$

Set $E_{k} \leftarrow E$

$D\leftarrow \mcup{D_{i}}{i=1}{k-1}$

\For{$i \in \bracke{k-1}\cup\bracce{0}$}
{
    \For{$s\in  S_{i}$}
    {
        Invoke SSSP from $s$ on $ G_{i}  = \paren{ V,E_{i+2}\cup D, d }$ and update $d$ accordingly
    }

    \For{$j \in \bracke{k-1}\cup\bracce{0}$}
    {
        \For{$\paren{x,y}\in E$}
        {
            $d\bracke{p_{i}\paren{x}, p_{j}\paren{y}} \leftarrow \mmin{d\bracke{x,p_{i}\paren{x}}+w\paren{x,y}+d\bracke{y,p_{j}\paren{y}}, d\bracke{p_{i}\paren{x}, p_{j}\paren{y}}}$
        }
    }
}

Construct the set of edges $B_{1}\paren{V}$

Construct the set of edges $B_{2}\paren{S_{1}}$

Set $F\leftarrow D \cup E_{1}  \cup B_{1}\paren{V} \cup B_{2}\paren{S_{1}}$

\For{$i \in \bracke{k}$}
{
    $F\leftarrow F \cup \paren{S_{i-1}\times S_{k-i}}$
}

Initialize $w_{H}\leftarrow \varnothing$

\For{$\paren{u,v}\in F\setminus E$}
    {
        $w_{H} \paren{u,v} \leftarrow d\bracke{u,v}$
    }

\For{$\paren{u,v}\in F\cap E$}
    {
        $w_{H} \paren{u,v} \leftarrow w\paren{u,v}$
    }   

\Return $H=\paren{V,F,w_{H}}$

\end{algorithm}

We start by specifying pairs of vertices $u,v\in V$ such that $\paren{u,v}\in F$ and $d\bracke{u,v} = \delta_{G}\paren{u,v}$. 

\begin{claim2}\label{clm:_k-1,kW1+_k-4_W2__runtime_case_a}
For any $\paren{u,v}\in E_{1}$ it holds that $w_{H}\paren{u,v}=w\paren{u,v}$.
\end{claim2}
\begin{proof}
    By definition we set $w_{H}\paren{u,v} = w\paren{u,v}$ for any $\paren{u,v}\in F \cap E$. As $E_{1}\subseteq F\cap E$, the claim follows.
\end{proof}

For the next lemma, we consider the case where there is an edge and we are interested in the distance estimation between pivots of its endpoints.

\begin{claim2}\label{clm:_k-1,kW1+_k-4_W2__runtime_case_b}
For any $\paren{x,y}\in E$ and $i,j\in\bracke{k-1}$ it holds that $w_{H}\paren{p_{i}\paren{x},p_{j}\paren{y}}\leq \delta_{G}\paren{x,p_{i}\paren{x}}+w\paren{x,y}+\delta_{G}\paren{y,p_{j}\paren{y}}$.
\end{claim2}
\begin{proof}
    \algref{_k-1,kW1+_k-4_W2__runtime}  iterates over all $\paren{x,y}\in E$ and by \obvref{pivotsdistance} it follows that $d\bracke{x,p_{i}\paren{x}}=\delta_{G}\paren{x,p_{i}\paren{x}}$ and $d\bracke{y,p_{j}\paren{y}}=\delta_{G}\paren{y,p_{j}\paren{y}}$. Hence, $d\bracke{p_{i}\paren{x},p_{j}\paren{y}}\leq\delta_{G}\paren{x,p_{i}\paren{x}}+w\paren{x,y}+\delta_{G}\paren{y,p_{j}\paren{y}} $ .
\end{proof}

We now consider moving through an edge from $B_{1}\paren{V}$:

\begin{claim2}\label{clm:_k-1,kW1+_k-4_W2__runtime_case_c1}
For any $\paren{s,w}\in B_{1}\paren{V}$ where $s\in S_{1}$ and $w\in V$, it follows that 
$d\bracke{s,w} =\delta_{G}\paren{s,w}$.
\end{claim2}
\begin{proof}
    By \obvref{pathinball} it follows that $s\squiggly w \subseteq E_{2}$ as $s\in S_{1}$. As we invoke SSSP from $s\in S_{1}$ including the edge set  $E_{2}$, it follows that $d\bracke{s,w} =\delta_{G}\paren{s,w}$.
\end{proof}

Furthermore, we need to consider the case where we utilize an edge from $B_{2}\paren{S_{1}}$:

\begin{claim2}\label{clm:_k-1,kW1+_k-4_W2__runtime_case_c2}
For any $\paren{s,p_{1}\paren{w}}\in B_{2}\paren{S_{1}}$ where $s\in S_{i}$ and $w\in V$, it follows that 
$d\bracke{s,p_{1}\paren{w}} =\delta_{G}\paren{s,p_{1}\paren{w}}$.
\end{claim2}
\begin{proof}
    Similarly, by \obvref{pathinball} we know that $s\squiggly p_{1}\paren{w} \subseteq E_{i+2}$ as $s\in S_{i}$. We invoke SSSP from $s\in S_{i}$ over the edge set  $E_{i+2}$. Therefore  $d\bracke{s,p_{1}\paren{w}} =\delta_{G}\paren{s,p_{1}\paren{w}}$.
\end{proof}

Lastly, we consider the case where either $u\squiggly v \subseteq E_{k-1}$ or there exists an edge with an appropriate pivot.

\begin{claim2}\label{clm:_k-1,kW1+_k-4_W2__runtime_case_c3}
Either $u\squiggly v\subseteq E_{k-1}$  or there exists an edge $\paren{q,r}\in u\squiggly v $ such that $\paren{q,r}\notin E_{k-1}$. 
\end{claim2}
\begin{proof}
    Let us assume $u\squiggly v\not\subseteq E_{k-2}$. It follows in a straightforward manner that there exists an edge $\paren{q,r}\in u\squiggly v$ such that $\paren{q,r}\notin E_{k-1}$. 
\end{proof}

We now turn to argue about the runtime for constructing each set that composes $F$.

\begin{lemma}\label{lem:_k-1,kW1+_k-4_W2__runtime_F}
\algref{_k-1,kW1+_k-4_W2__runtime} requires $\tilde O \paren{n^{2+\frac{2}{k}}}$ runtime.
\end{lemma}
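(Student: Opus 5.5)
We bound the running time of \algref{_k-1,kW1+_k-4_W2__runtime} line by line, with $\beta_1=\dots=\beta_{k-1}=\frac{1}{k}$, so that $\abs{S_i}=\tilde O\paren{n^{1-\frac{i}{k}}}$ by \obvref{hssize2}. The argument charges each block of the algorithm separately. We show that the only block exceeding $\tilde O\paren{n^{2+\frac{1}{k}}}$ is the construction of $B_{2}\paren{S_{1}}$, and that it costs at most $\tilde O\paren{n^{2+\frac{2}{k}}}$.

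\textbf{Preprocessing.} The first loop computes, for each level $i\in\bracke{k-1}$, the pivots $p_i\paren{u}$ and the distances $d\bracke{u,p_i\paren{u}}$. Each level costs one SSSP from an auxiliary super-source, which is $\tilde O\paren{m}$ by \obvref{pivotsdistance}. Building $D_i$ and $E_i$ then takes $O\paren{m}$, since each is a filter over vertices or edges using these distances. Over $k\in\tilde O\paren{1}$ levels the total is $\tilde O\paren{m}\subseteq\tilde O\paren{n^2}$.

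\textbf{The main SSSP loop.} For a level $i$, Dijkstra from each $s\in S_i$ on $G_i=\paren{V,E_{i+2}\cup D,d}$ costs $\tilde O\paren{\abs{E_{i+2}}+\abs{D}+n}$. By \obvref{edgessize} and \obvref{hssize2}, $\abs{E_{i+2}}\in\tilde O\paren{n\cdot n^{\frac{i+2}{k}}}$, with the convention $E_{j}=E$ and $\abs{E}\le n^2$ for $j\ge k$. The level-$i$ cost is therefore
\[
\tilde O\paren{n^{1-\frac{i}{k}}\cdot n^{1+\frac{i+2}{k}}}=\tilde O\paren{n^{2+\frac{2}{k}}}.
\]
For the top levels this needs $\abs{S_i}\cdot n^2\le n^{2+\frac{2}{k}}$. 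At $i=k-1$ we have $\abs{S_{k-1}}=n^{\frac1k}$, giving $n^{2+\frac1k}$. At $i=k-2$ the edge set is $E_k=E$, so the cost is $n^{\frac2k}\cdot n^2$, which is exactly the $n^{2+\frac{2}{k}}$ bottleneck and is the reason the bound is not $n^{2+\frac1k}$. The inner double loop over $j$ and over $\paren{x,y}\in E$ costs $O\paren{k\cdot m}$ per $i$, and hence $\tilde O\paren{m}$ in total.

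\textbf{Building $F$.} For $E_1$, $D$ and $B_1\paren{V}$, we need, for each $u$ and each $i$, the members $s\in S_i$ with $\delta\paren{u,s}<\delta\paren{u,p_{i+1}\paren{u}}$. These can be read off the SSSP trees from the sources $s$: for each $s$, scan all $u$ and test the condition using the already computed $d\bracke{u,p_{i+1}\paren{u}}$. This costs $\sum_i\abs{S_i}\cdot n\le n^2$. The same scan, run with $p_{i+2}$ in place of $p_{i+1}$ and restricted to targets in $S_1$, builds $B_{2}\paren{S_{1}}$, again in $O\paren{n^2}$ time. The products $S_{i-1}\times S_{k-i}$ have $\tilde O\paren{n^{1+\frac1k}}$ pairs each, and their weights are read directly from $d$. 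Assigning $w_H$ costs $O\paren{\abs{F}}$.

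Summing all blocks gives $\tilde O\paren{n^{2+\frac{2}{k}}}$. The main obstacle is the accounting for the top levels of the SSSP loop, where $E_{i+2}=E$. There the crude bound $\abs{S_i}\cdot m$ has to be checked rather than the $E_{S}$-size bound. I expect the worst case to be exactly $i=k-2$, which attains $n^{2+\frac{2}{k}}$. The remaining bounds are routine applications of \obvref{hssize2} and \obvref{edgessize}.
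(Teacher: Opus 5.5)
Your bound is correct and follows the same block-by-block accounting as the paper, but the emphasis is reversed. You make explicit what the paper's proof leaves implicit: the dominant cost is the SSSP loop, $\sum_i \lvert S_i\rvert\cdot\lvert E_{i+2}\rvert=\tilde O(n^{1-i/k}\cdot n^{1+(i+2)/k})=\tilde O(n^{2+2/k})$. The paper only says beforehand that ``$B_2(S_1)$ is the bottleneck.''

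Conversely, most of the paper's proof goes to a point you only assert when you say bunch members ``can be read off the SSSP trees'': that the test $d[u,s]<d[u,p_{i+2}(u)]$ (resp.\ with $p_{i+1}$) is exact. This uses \obvref{pathinball}. If $s\in\mathrm{ball}(u,S_i,S_{i+2})$, the shortest $u$--$s$ path lies in $E_{i+2}$, so the SSSP from $s$ over $E_{i+2}\cup D$ returns $\delta_G(u,s)$. For non-members, $d[u,s]\ge\delta_G(u,s)\ge\delta_G(u,p_{i+2}(u))$, so the test fails. Add this line. Without it, your $O(n^2)$ scan is not known to output $B_1(V)$ and $B_2(S_1)$, so it does not yet bound the cost of those construction steps. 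Your enumeration over all pairs in $S_i\times V$ is the appropriate one; the paper's text words this scan as ranging over $(u,v)\in E$.

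Part of your commentary is wrong, though the upper bound is unaffected. By your own formula, every level $0\le i\le k-2$ costs $\tilde O(n^{2+2/k})$; for example, $i=0$ gives $n\cdot n^{1+2/k}$. So $i=k-2$ is not the unique worst case, and it is not why the bound fails to be $n^{2+1/k}$. The real reason is that every SSSP from $S_i$ runs over $E_{i+2}$ rather than $E_{i+1}$, which is exactly what exactness for $B_2(S_1)$ requires. Replacing $E_{i+2}$ by $E_{i+1}$ gives $\lvert S_i\rvert\cdot\lvert E_{i+1}\rvert=\tilde O(n^{2+1/k})$ at every level, matching the paper's remark for $k\le 4$. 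Likewise, your opening sentence calls the construction of $B_2(S_1)$ the only block above $n^{2+1/k}$, yet your body bounds that scan by $O(n^2)$. The expensive block is the SSSP loop, whose edge sets are chosen for the sake of $B_2(S_1)$.
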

\begin{proof}
    Let $i\in\bracke{k-1}$. By \obvref{pivotsdistance} we can find $p_{i}\paren{u}$ for all $u\in V$ in $\tilde O\paren{m}$. Also, $d\bracke{u,p_{i}\paren{u}}=\delta_{G}\paren{u,p_{i}\paren{u}}$. We construct the set $E_{i}\paren{u}$ for any $u\in V$ in $\tilde O \paren{m}$ time by simply checking whether $w\paren{u,v}<d\bracke{u,v}$  for any neighbour $v$ of $u$. Hence, constructing $E_{i}$ would require  $\tilde O \paren{m}$ overall runtime. We assume $k\in \tilde O\paren{1}$, hence constructing all of the sets $E_{1},\ldots ,E_{k-1}$ and $D$ requires the same $\tilde O\paren{m}$ runtime.

    To construct the set $B_{1}\paren{V}$, we iterate over any $\paren{u,v}\in E$ and $i\in \bracke{k-1}$. If $v\in S_{i}$ we check whether $d\bracke{u,v} < d\bracke{u,p_{i+1}\paren{u}}$. Recall that $d\bracke{u,p_{i+1}\paren{u}} = \delta_{G}\paren{u,p_{i+1}\paren{u}}$ by \obvref{pivotsdistance}. As for $d\bracke{u,v}$, if $v\in \ball{u}{i}{i+1}$, it follows by \obvref{pathinball} that $u\squiggly v \subseteq E_{i+1}$. Hence, when \algref{_k-1,kW1+_k-4_W2__runtime} invokes SSSP from $v\in S_{i}$ over the edge set $E_{i+1}$, the path $u\squiggly v$ will be taken into consideration as well. In plain words, $d\bracke{u,v}=\delta_{G}\paren{u,v}$. Hence, checking whether $d\bracke{u,v} < d\bracke{u,p_{i+1}\paren{u}}$ is equivalent to checking whether $\delta_{G}\paren{u,v}<\delta_{G}\paren{u,p_{i+1}\paren{u}}$. Therefore, the set $B_{1}\paren{V}$ is properly constructed this way.

    A similar argument occurs for $B_{2}\paren{S_{1}}$. To check whether $\paren{u,v}\in B_{2}\paren{S_{1}}$, we need to check whether there exists an index $i\in \bracke{k-2}$ such that $v\in \ball{u}{i}{i+2}$. That is, whether $v\in S_{i}$ and $\delta_{G}\paren{u,v} < \delta_{G}\paren{u,p_{i+2}\paren{u}}$. We recall that $d\bracke{u,p_{i+2}\paren{u}}= \delta_{G}\paren{u,p_{i+2}\paren{u}}$ by \obvref{pivotsdistance}. Furthermore, if $v\in \ball{u}{i}{i+2}$ then by \obvref{pathinball} it follows that $u\squiggly v \subseteq E_{i+2}$. Hence, when invoking SSSP from $v\in S_{i}$ over $E_{i+2}$ results in the exact distance $d\bracke{u,v} = \delta_{G}\paren{u,v}$. 
    
    By \obvref{hssize} it follows that $\abs{S_{i-1}\times S_{k-i}}=\tilde O \paren{n^{1+\frac{1}{k}}}$, hence 
    iterating over $i\in\bracke{k}$ and accumulating the edges into $F$ would still take  $O \paren{n^{1+\frac{1}{k}}}$ runtime. 

    Finally, iterating over $\paren{u,v}\in F$ and setting $w_{H}$ would take $\tilde O\paren{n^{1+\frac{1}{k}}}$ runtime. 
\end{proof}

We may now conclude:

\begin{theorem}\label{thm:_k-1,kW1+_k-4_W2__runtime}
    Let $k\in\mathbb{N}$ such that $k\geq 2$. \algref{_k-1,kW1+_k-4_W2__runtime} requires $\tilde O\paren{n^{2+\frac{2}{k}}}$ runtime to compute a  $\paren{2\cdot \floor{\frac{k}{2}}-1, 2\cdot \ceil{\frac{k}{2}} \cdot W_{1} + \max\bracce{0, 2\cdot\paren{\ceil{\frac{k}{2}}-2}}\cdot W_{2}}$-emulator with $\tilde O \paren{n^{1+\frac{1}{k}}}$ edges. 
\end{theorem}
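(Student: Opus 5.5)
The statement has two parts: the runtime bound and the stretch/size guarantee. The size part is inherited verbatim. \algref{_k-1,kW1+_k-4_W2__runtime} produces the same edge set $F$ as \algref{_k-1,kW1+_k-4_W2_}, namely $D \cup E_{1} \cup B_{1}\paren{V} \cup B_{2}\paren{S_{1}} \cup \bigcup_{i} \paren{S_{i-1}\times S_{k-i}}$, with the same sampling probabilities $\beta_i=\frac{1}{k}$. Hence the size analysis in the proof of \thmref{_k-1,kW1+_k-4_W2_} applies unchanged and gives $\tilde O\paren{n^{1+\frac{1}{k}}}$ edges. The runtime is exactly \lemref{_k-1,kW1+_k-4_W2__runtime_F}, once we also account for the SSSP phase. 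From each $s\in S_i$ we run Dijkstra on $E_{i+2}\cup D$. Since $\abs{E_{i+2}}=\tilde O\paren{n^{1+\frac{i+2}{k}}}$ and $\abs{S_i}=\tilde O\paren{n^{1-\frac{i}{k}}}$, each level costs $\tilde O\paren{n^{2+\frac{2}{k}}}$. The edge-relaxation loop over $\paren{x,y}\in E$ for all pairs of levels costs $\tilde O\paren{k^2 m}$. Summing over the $k\in\tilde O\paren{1}$ levels gives the bound.

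The substantive part is correctness. The weights $w_H$ are now estimates $d\bracke{\cdot,\cdot}$ rather than exact distances. Every $d$ value is the length of some real walk, so $w_H\paren{a,b}\ge\delta_G\paren{a,b}$ and therefore $\delta_H\ge\delta_G$. For the upper bound I would go through the case analysis of \lemref{_k-1,kW1+_k-4_W2__case_a}, \lemref{_k-1,kW1+_k-4_W2__case_b} and \lemref{_k-1,kW1+_k-4_W2__case_c}. For each auxiliary edge used on a witness path, I would show that $w_H$ is at most the quantity the original proof substituted for it. The needed facts are as follows.
\begin{itemize}
\item Edges of $E_1$ carry true weights, by \clmref{_k-1,kW1+_k-4_W2__runtime_case_a}.
\item Edges of $D$ carry exact distances, by \obvref{pivotsdistance}.
\item Edges of $B_1\paren{V}$ and $B_2\paren{S_1}$ are exact, by \clmref{_k-1,kW1+_k-4_W2__runtime_case_c1}, \clmref{_k-1,kW1+_k-4_W2__runtime_case_c2} and the observation in \lemref{_k-1,kW1+_k-4_W2__runtime_F} that the ball paths lie in $E_{i+1}$ or $E_{i+2}$.
\end{itemize}
In every instance the original proofs then immediately bounded these exact distances by the triangle inequality anyway, so the arguments go through verbatim.

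The main obstacle is the cartesian-product edges $S_{i-1}\times S_{k-i}$, whose estimates need not be exact. In \case{b}{2} the product edge joins $p_{\cdot}\paren{x}$ and $p_{\cdot}\paren{y}$ for the single edge $\paren{x,y}$. \clmref{_k-1,kW1+_k-4_W2__runtime_case_b} gives $w_H \le \delta_G\paren{x,p\paren{x}}+w\paren{x,y}+\delta_G\paren{y,p\paren{y}}$, which is precisely the bound the proof of \lemref{_k-1,kW1+_k-4_W2__case_b} used. In \case{c}{3} the product edges join pivots of $x$ and $w$, which are far apart along the path. Here the proof used $\delta_G\paren{p\paren{x},p\paren{w}}\le\delta_G\paren{x,p\paren{x}}+\delta_G\paren{x,w}+\delta_G\paren{w,p\paren{w}}$, and that requires the SSSP phase to capture a path through $x\squiggly w$. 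I would handle this as follows.
\begin{enumerate}
\item The SSSP from the lower-level pivot $s\in S_{i}$ runs over $E_{i+2}\cup D$.
\item Either $x\squiggly w\subseteq E_{i+2}$, so the route $s\straight x\squiggly w$ is seen, and then the product edge estimate is obtained by the relaxation step through $D$.
\item Or some edge of $x\squiggly w$ leaves $E_{i+2}$. This is the dichotomy of \clmref{_k-1,kW1+_k-4_W2__runtime_case_c3}. That edge's endpoints then have level-$(i+2)$ pivots within its weight, which is at most $W_{1}$ or $W_2$.
\item In that case I would route through that edge, as in \case{b}, using \clmref{_k-1,kW1+_k-4_W2__runtime_case_b} and the $D$ edges. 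The aim is to show this detour costs no more than the \case{c}{3} bound.
\end{enumerate}
This charging step is the delicate one, and I would first try to make it go through with the pivot-distance bounds of \clmref{_k-1,kW1+_k-4_W2__case_c_index}. If that fails, I would instead add the $S_{i}\times V$-type fallback edges mentioned in the appendix remarks to restore exactness.
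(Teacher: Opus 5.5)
Your runtime and size accounting, the lower bound $w_H\ge\delta_G$, and your handling of $E_1$, $D$, $B_1(V)$, $B_2(S_1)$ and the Case (b$_2$) product edge (via the relaxation loop) all match the paper.

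\textbf{The gap is Case (c$_3$).} You correctly identify it as the crux, but you leave it open, and the route you sketch does not work. You look at the SSSP from the lower endpoint $p_1(w)\in S_1$, which only runs over $E_3\cup D$. When the segment from $x$ to $w$ leaves $E_3$, you propose to reroute through an offending edge $(q,r)$ ``as in Case (b)''. That fails for two reasons.
\begin{itemize}
\item Case (b) relied on the segments from $u$ to $x$ and from $y$ to $v$ lying in $E_1\subseteq F$. For $(q,r)$ strictly inside the segment from $x$ to $w$, the segment from $u$ to $q$ contains $(x,y)\notin E_1$. So $H$ has no guaranteed cheap route from $u$ to $q$.
\item $(q,r)\notin E_3$ only bounds level-$3$ pivots. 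Their product partners lie in $S_{k-4}$, a level where you control nothing about the other endpoint.
\end{itemize}
Your last resort, adding $S_i\times V$ edges, changes the algorithm, so it cannot prove the stated theorem. Unless $i=k-1$ (where those edges are already in $F$), it also breaks the $\tilde O(n^{1+\frac{1}{k}})$ size bound.

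\textbf{The missing observation is to use the SSSP from the other endpoint.} Since $p_{k-2}(x)\in S_{k-2}$, the algorithm runs SSSP from it on $E_{k}\cup D=E\cup D$. Hence $d[p_{k-2}(x),p_1(w)]=\delta_G(p_{k-2}(x),p_1(w))\le\delta_G(x,p_{k-2}(x))+\delta_G(x,w)+\delta_G(w,p_1(w))$, and symmetrically for $(p_1(x),p_{k-2}(w))$. The original Case (c$_3$) computation then goes through verbatim, with no charging at all. So although product-edge estimates need not be exact in general, the ones Case (c$_3$) uses are.

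The paper phrases this as a dichotomy on whether the fixed shortest $u$--$v$ path lies in $E_{k-1}$.
\begin{itemize}
\item If it does, that SSSP from $S_{k-2}$ captures the walk $p_{k-2}(x)\to x\to\cdots\to w\to p_1(w)$.
\item Otherwise some edge $(q,r)\notin E_{k-1}$ on the path gives $\delta_G(q,p_{k-1}(q))\le w(q,r)$. The edges $(u,p_{k-1}(q))$ and $(p_{k-1}(q),v)$ lie in $S_0\times S_{k-1}$, which is already in $F$ as the $i=1$ product. Their weights are exact because $S_{k-1}\subseteq S_{k-2}$. This yields $\delta_H(u,v)\le\delta_G(u,v)+2W_1$.
\end{itemize}
That fallback connects $u$ and $v$ themselves directly to the pivot, which is exactly what sidesteps the obstruction in your intermediate-edge rerouting.
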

\begin{proof}
    The runtime of \algref{_k-1,kW1+_k-4_W2__runtime} is due to \lemref{_k-1,kW1+_k-4_W2__runtime_F}. As for the correctness, let $u,v\in V$ and fix a $u\squiggly v$. We distinguish between \case{a}, \case{b} and \case{c}. If \case{a} holds, $d\bracke{u,v}=\delta_{G}\paren{u,v}$ by \clmref{_k-1,kW1+_k-4_W2__runtime_case_b}. 

    If \case{b} holds, we recall \lemref{_k-1,kW1+_k-4_W2__case_b}, where we had an auxiliary edge from $p_{i}\paren{x}$ to some $p_{j}\paren{y}$. Regardless of the indices $i,j$, by \clmref{_k-1,kW1+_k-4_W2__runtime_case_b}, it follows that $d\bracke{p_{i}\paren{x},p_{j}\paren{y}} = \delta_{G}\paren{p_{i}\paren{x},p_{j}\paren{y}}$. 

    Therefore, it only remains to show the analysis for \case{c}  in \lemref{_k-1,kW1+_k-4_W2__case_c} still holds. To do so, we consider \case{c}{1}, \case{c}{2} and \case{c}{3}.

    \begin{enumerate}
        \item \dashuline{\case{c}{1}: $p_{1}\paren{x} \in \ball{w}{1}{2}$ or $p_{1}\paren{w} \in \ball{x}{1}{2}$.} \\
        We assume without loss of generality that $p_{1}\paren{w} \in \ball{x}{1}{2}$. By  \clmref{_k-1,kW1+_k-4_W2__runtime_case_c1} it follows that $d\bracke{p_{1}\paren{w},x}=\delta_{G}\paren{p_{1}\paren{w},x}$. The rest follows as in \case{c}{1} from \lemref{_k-1,kW1+_k-4_W2__case_c}.
        \item \dashuline{\case{c}{2}:
        $p_{1}\paren{x} \notin \ball{w}{1}{2}$, $p_{1}\paren{w} \notin \ball{x}{1}{2}$  and there exists an $2 \leq i \leq k-4$ }\\
        \dashuline{such that $p_{i}\paren{x} \in \ball{p_{1}\paren{w}}{i}{i+2}$ or $p_{i}\paren{w} \in \ball{p_{1}\paren{x}}{i}{i+2}$.} \\
        Let $i$ be the minimal such index.
        By  \clmref{_k-1,kW1+_k-4_W2__runtime_case_c2} it follows that $d\bracke{p_{i}\paren{x},p_{1}\paren{w}}=\delta_{G}\paren{p_{i}\paren{x},p_{1}\paren{w}}$. The rest follows as in \case{c}{2} from \lemref{_k-1,kW1+_k-4_W2__case_c}.

        \item \dashuline{\case{c}{3}:
        $p_{1}\paren{x} \notin \ball{w}{1}{2}$, $p_{1}\paren{w} \notin \ball{x}{1}{2}$  and for any $2 \leq i \leq k-4$ it }\\
        \dashuline{holds that $p_{i}\paren{x} \notin \ball{p_{1}\paren{w}}{i}{i+2}$ and $p_{i}\paren{w} \notin \ball{p_{1}\paren{x}}{i}{i+2}$.}\\
        In this case we introduce a slight modification to our case analysis.
        By \clmref{_k-1,kW1+_k-4_W2__runtime_case_c3} either $u\squiggly v \subseteq E_{k-1}$, in which case either $d\bracke{p_{k-2}\paren{x},p_{1}\paren{w}}= \delta_{G}\paren{p_{k-2}\paren{x},p_{1}\paren{w}}$ and we continue as in \case{c}{3} from \lemref{_k-1,kW1+_k-4_W2__case_c}, or there exists an edge $\paren{q,r}\in u\squiggly v$ such that $\paren{q,r}\notin E_{k-1}$, in which case the path $u\straight p_{k-1}\paren{q} \straight v \subseteq V\times S_{k-1} \subseteq F$ and then $\delta_{H}\paren{u,v}\leq \delta_{G}\paren{u,v}+2W_{1}\paren{u\squiggly v}$. 
    \end{enumerate}

This concludes the stretch proof for \algref{_k-1,kW1+_k-4_W2__runtime}. 
    
\end{proof}

We observe that for $k\leq 4$, the set $B_{2}\paren{S_{1}}$ is not being used. Hence, the runtime, for these values of $k$, can be reduced to $\tilde O\paren{n^{2+\frac{1}{k}}}$. We therefore conclude:

\begin{remark}\label{rem:_k-1,kW1+_k-4_W2__runtime_k<=4}
For $k\leq 4$ the runtime of \algref{_k-1,kW1+_k-4_W2__runtime} is $\tilde O\paren{n^{2+\frac{1}{k}}}$. 
\end{remark}

\subsection{Pseudocode for $k=3$}~\label{runtime_k=3}
For completeness, we provide the full pseudocode of a runtime efficient construction of the $+4W_{1}$-emulator presented in \secref{+4W1}. The size of the emulator is $\tilde O\paren{n^{\frac{4}{3}}}$ and the construction runtime is $\tilde O\paren{n^{\frac{7}{3}}}$.

\begin{algorithm}[H]
\caption{$+4W_{1}$-emulator$\paren{G=\paren{V,E,w}}$}\label{alg:+4W1_runtime}

\textbf{Input:} An undirected non-negative weighted graph $G=\paren{V,E,w}$.

\textbf{Output:} A $+4W_{1}$-emulator. 

\medskip

Let $\beta,\gamma \in \paren{ 0 , 1}$ to be fixed later

Initialize $d$ as the edges' weights and $\infty$ otherwise

Set $S_{0}\leftarrow V$

Sample each vertex of $V$ with probability $n^{-\beta}$ to $S_{1}$

Sample each vertex of $S_{1}$ with probability $n^{-\gamma}$ to $S_{2}$

Set $S_{3}\leftarrow \varnothing$

For every $u\in V$ find the pivot $p_{1}\paren{u}$ (respectively, $p_{2}\paren{u}$) and update $d$ accordingly

Construct the set of edges $D$

Construct the set of edges $E_{1}$ (respectively, $E_{2}$)

Set $E_{3}\leftarrow E$

\For{$i \in \bracce{1,2}$}
{
    \For{$s\in  S_{i}$}
    {
        Invoke SSSP from $s$ on $G_{i}  = \paren{V,E_{i+1} \cup D  , d }$ and update $d$ accordingly
    }
}

Set $F\leftarrow D \cup E_{1} \cup \paren{S_{1} \times S_{1} } \cup \paren{V\times S_{2}}$

Initialize $w_{H}\leftarrow \varnothing$

\For{$\paren{u,v}\in F\setminus E$}
    {
        $w_{H} \paren{u,v} \leftarrow d\bracke{u,v}$
    }

\For{$\paren{u,v}\in F\cap E$}
    {
        $w_{H} \paren{u,v} \leftarrow w\paren{u,v}$
    } 

\Return $H=\paren{V,F,w_{H}}$

\end{algorithm}

\pagebreak

\subsection{Pseudocode for $k=4$}~\label{runtime_k=4}
We next provide the full pseudocode of a runtime efficient construction of the $\paren{3,4W_{1}}$-emulator presented in \secref{_3,4W1_}. This constitutes the next case of our general construction and requires a slightly larger set of auxiliary edges than the $k=3$ case. The resulting emulator has $\tilde O\paren{n^{\frac{5}{4}}}$ edges and can be constructed in $\tilde O\paren{n^{\frac{9}{4}}}$ time.

\begin{algorithm}[H]
\caption{$\paren{3,4W_{1}}$-emulator$\paren{G=\paren{V,E,w}}$}\label{alg:_3,4W1__runtime}

\textbf{Input:} An undirected non-negative weighted graph $G=\paren{V,E,w}$.

\textbf{Output:} A $\paren{3,4W_{1}}$-emulator. 

\medskip

Let $\beta_1,\beta_2,\beta_3 \in \paren{ 0 , 1}$ to be fixed later

Initialize $d$ as the edges' weights and $\infty$ otherwise

Set $S_{0}\leftarrow V$

\For{$i\in\bracke{3}$}
{
    Sample each vertex of $S_{i-1}$ with probability $n^{-\beta_{i}}$ to $S_i$

    For every $u\in V$ find the pivot $p_i\paren{u}$ and update $d$ accordingly

    Construct the set of edges $D_{i}$

    Construct the set of edges $E_{i}$
}

Set $S_{4}\leftarrow\varnothing$

Set $E_{4}\leftarrow E$

$D\leftarrow\mcup{D_i}{i=1}{3}$

\For{$i \in \bracke{3}$}
{
    \For{$s\in S_{i}$}
    {
        Invoke SSSP from $s$ on
        $G_{i}=\paren{V,E_{i+1}\cup D,d}$
        and update $d$ accordingly
    }
}

Initialize $F\leftarrow \varnothing$

Construct the set of edges $B_{1}\paren{V}$

Set $F\leftarrow D \cup E_{1} \cup B_{1}\paren{V}
\cup \paren{S_{1}\times S_{2}}
\cup \paren{V\times S_{3}}$

Initialize $w_H\leftarrow\varnothing$

\For{$\paren{u,v}\in F\setminus E$}
{
    $w_{H}\paren{u,v}\leftarrow d\bracke{u,v}$
}

\For{$\paren{u,v}\in F\cap E$}
{
    $w_{H}\paren{u,v}\leftarrow w\paren{u,v}$
}

\Return $H=\paren{V,F,w_{H}}$

\end{algorithm}

\pagebreak

\section{Refined Stretch}~\label{refined}
In \secref{_k-1,kW1+_k-4_W2_}, we presented
\algref{_k-1,kW1+_k-4_W2_} and established its stretch in
\thmref{_k-1,kW1+_k-4_W2_}.
The proof distinguishes between \case{a}, \case{b} and \case{c}.
While \case{a} preserves the exact distance, the upper bounds obtained
for \case{b} and \case{c} have different forms.
In the proof of \thmref{_k-1,kW1+_k-4_W2_}, these two bounds are
upper bounded by a single expression in order to obtain a
concise statement that applies to every pair.

In this appendix, we retain the distinction between the two cases.
This yields a tighter pair-dependent stretch guarantee for
the same emulator.
We emphasize that no modification is made to
\algref{_k-1,kW1+_k-4_W2_} or to its edge set.
Consequently, the expected size of the emulator remains
$\tilde O\paren{n^{1+\frac{1}{k}}}$.

Throughout this appendix, let $k\geq 4$, let $u,v\in V$, and fix a
shortest path $u\squiggly v$. If $u\squiggly v = u\straight v$ we set $W_{2}\paren{u\squiggly v}=0$.
We now state the refined stretch guarantee.

\begin{theorem}~\label{thm:refined} 
\algref{_k-1,kW1+_k-4_W2_} constructs an emulator of expected size $\tilde O\paren{n^{1+\frac{1}{k}}}$ whose stretch is:

$$
\begin{cases}
\paren{
    k-1,\,
    \paren{k-2}\cdot W_{1}
    +W_{2}
    +\abs{
        2W_{1}
        -\paren{k-1}\cdot W_{2}
    }
}
&
k=0\pmod 2,
\\[3mm]
\paren{
    k-2,\,
    k\cdot W_{1}
    +W_{2}
    +\abs{
        W_{1}
        -\paren{k-2}\cdot W_{2}
    }
}
&
k=1\pmod 2.
\end{cases}
$$

This is a
$
\paren{
    2\cdot\floor{\frac{k}{2}}-1,\,
    \paren{
        3\cdot\ceil{\frac{k}{2}}
        -\floor{\frac{k}{2}}
        -2
    }\cdot W_{1}
    +W_{2}
    +\abs{
        \paren{
            \floor{\frac{k}{2}}
            -\ceil{\frac{k}{2}}
            +2
        }\cdot W_{1}
        -
        \paren{
            2\cdot\floor{\frac{k}{2}}-1
        }\cdot W_{2}
    }
}
$-emulator.

\end{theorem}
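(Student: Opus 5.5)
The plan is to leave \algref{_k-1,kW1+_k-4_W2_} and its edge set untouched. The size bound $\tilde O\paren{n^{1+\frac{1}{k}}}$ then carries over verbatim from \thmref{_k-1,kW1+_k-4_W2_}, and only the stretch needs a new argument. As in the main proof, I fix $u,v$ and a shortest path $u\squiggly v$ and split into \case{a}, \case{b} and \case{c}. I reuse \lemref{_k-1,kW1+_k-4_W2__case_a}, \lemref{_k-1,kW1+_k-4_W2__case_b} and \lemref{_k-1,kW1+_k-4_W2__case_c} as black boxes. The change from \thmref{_k-1,kW1+_k-4_W2_} is that I will not weaken each bound separately to a common form. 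Instead I bound the maximum of the two bounds exactly, using the identity $\max\bracce{a,b}=\frac{a+b}{2}+\frac{\abs{a-b}}{2}$.

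The key inequality is $W_{1}\paren{u\squiggly v}+W_{2}\paren{u\squiggly v}\leq \delta_G\paren{u,v}$. It holds because the two heaviest edges are distinct edges of $u\squiggly v$ and all weights are non-negative. When $u\squiggly v=u\straight v$ it holds by the convention $W_{2}=0$, since then $\delta_G\paren{u,v}=W_1$.

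\textbf{Even $k$.} Write $c=k-2$ for the amount of multiplicative stretch to be absorbed. From \lemref{_k-1,kW1+_k-4_W2__case_b} and the key inequality,
$$\delta_G+2\paren{k-1}W_1\leq \paren{k-1}\delta_G-\paren{k-2}\paren{W_1+W_2}+2\paren{k-1}W_1=\paren{k-1}\delta_G+kW_1-\paren{k-2}W_2 .$$
\lemref{_k-1,kW1+_k-4_W2__case_c} gives $\paren{k-1}\delta_G+\paren{k-4}W_1+kW_2$ in \case{c}. So in every case the additive part is at most $\max\bracce{a,b}$ with $a=kW_1-\paren{k-2}W_2$ and $b=\paren{k-4}W_1+kW_2$. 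Then $\frac{a+b}{2}=\paren{k-2}W_1+W_2$ and $\frac{\abs{a-b}}{2}=\abs{2W_1-\paren{k-1}W_2}$, which is exactly the stated bound.

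\textbf{Odd $k$.} Here the absorbed multiplicative stretch is $k-3$. \case{b} becomes $\paren{k-2}\delta_G+\paren{k+1}W_1-\paren{k-3}W_2$, and \case{c} gives $\paren{k-2}\delta_G+\paren{k-1}\paren{W_1+W_2}$. The half-sum is $kW_1+W_2$ and the half-difference is $\abs{W_1-\paren{k-2}W_2}$, again as stated.

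\textbf{Remaining checks.} \case{a} gives $\delta_H=\delta_G$, which is trivially below both bounds. The subcases \case{c}{1} and \case{c}{2} are already bounded inside \lemref{_k-1,kW1+_k-4_W2__case_c} by its stated expression, so nothing new is needed there. Finally, I verify the unified floor/ceiling expression by substitution. For $k=2m$ it gives $\paren{3m-m-2}W_1=\paren{k-2}W_1$ and $\abs{2W_1-\paren{2m-1}W_2}$. For $k=2m+1$ it gives $\paren{3m+3-m-2}W_1=kW_1$ and $\abs{W_1-\paren{2m-1}W_2}=\abs{W_1-\paren{k-2}W_2}$. Both match the parity cases.

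The only delicate step is the \case{b} conversion, which needs $\delta_G\geq W_1+W_2$ for the \emph{same} fixed path, including the single-edge degenerate case. Everything else is bookkeeping with the max identity.
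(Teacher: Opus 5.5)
Your proposal is correct and follows the paper's own proof essentially step for step. The paper likewise keeps the construction and size bound unchanged and treats the \case{b} and \case{c} lemmas as black boxes. It rewrites the \case{b} bound via $W_{1}+W_{2}\leq\delta_G\paren{u,v}$ into $\paren{k-1}\delta_G+kW_1-\paren{k-2}W_2$, or $\paren{k-2}\delta_G+\paren{k+1}W_1-\paren{k-3}W_2$ for odd $k$, merges it with the \case{c} bound through $\max\{a,b\}=\frac{a+b}{2}+\frac{\abs{a-b}}{2}$, and finishes with the same floor/ceiling substitution.
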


\begin{proof}
As the construction of
\algref{_k-1,kW1+_k-4_W2_}
remains unchanged, its expected size
$\tilde O\paren{n^{1+\frac{1}{k}}}$
follows directly from
\thmref{_k-1,kW1+_k-4_W2_}.
It therefore remains only to establish the refined stretch. 
Recall that the analysis of
\algref{_k-1,kW1+_k-4_W2_}
distinguishes between \case{a}, \case{b} and \case{c}.
If \case{a} holds, then
\lemref{_k-1,kW1+_k-4_W2__case_a}
gives
$\delta_H\paren{u,v}=\delta_G\paren{u,v}$,
and hence the required upper bound immediately follows.
It therefore remains to combine the upper bounds obtained for
\case{b} and \case{c}.

We first consider the case where $k=0\pmod 2$.
If \case{b} holds, then by
\lemref{_k-1,kW1+_k-4_W2__case_b}:

\begin{alignat*}{3}
\delta_H\paren{u,v}
&\leq
\textubrace{
\delta_G\paren{u,v}
+
2\cdot\paren{k-1}\cdot
W_{1}\paren{u\squiggly v}
}{
\lemref{_k-1,kW1+_k-4_W2__case_b}
}
\\
&=
\textubrace{
\delta_G\paren{u,v}
+
k\cdot W_{1}\paren{u\squiggly v}
+\paren{k-2}
\cdot W_{1}\paren{u\squiggly v}
}{
Simply summing
}\\
&\leq
\textubrace{
\paren{k-1}\cdot\delta_G\paren{u,v}
+
k\cdot W_{1}\paren{u\squiggly v}
-
\paren{k-2}\cdot W_{2}\paren{u\squiggly v}
}{
Because $W_{1}\paren{u\squiggly v}
+
W_{2}\paren{u\squiggly v}
\leq
\delta_G\paren{u,v}$
}
\end{alignat*}

If \case{c} holds 
\lemref{_k-1,kW1+_k-4_W2__case_c}
gives: $
\paren{k-1}\cdot\delta_G\paren{u,v}
+
\paren{k-4}\cdot W_{1}\paren{u\squiggly v}
+
k\cdot W_{2}\paren{u\squiggly v}$. Consequently the upper bounds obtained for
\case{b} and \case{c} imply:

\begin{alignat*}{3}
\delta_H\paren{u,v}
&\leq
\textubrace{
\paren{k-1}\cdot\delta_G\paren{u,v}
+
\mmax{
k\cdot W_{1}\paren{u\squiggly v}
-\paren{k-2}\cdot W_{2}\paren{u\squiggly v},
\,
\paren{k-4}\cdot W_{1}\paren{u\squiggly v}
+k\cdot W_{2}\paren{u\squiggly v}
}
}{
Our preceding discussion
}
\\
&=
\textubrace{
\paren{k-1}\cdot\delta_G\paren{u,v}
+
\paren{k-2}\cdot W_{1}\paren{u\squiggly v}
+
W_{2}\paren{u\squiggly v}
+
\abs{
2W_{1}\paren{u\squiggly v}
-\paren{k-1}\cdot W_{2}\paren{u\squiggly v}
}
}{
$\max\bracce{a,b}
=
\frac{a+b}{2}
+
\frac{\abs{a-b}}{2}$
}
\end{alignat*}

We now consider the case where $k=1\pmod 2$.
If \case{b} holds, then by
\lemref{_k-1,kW1+_k-4_W2__case_b}:

\begin{alignat*}{3}
\delta_H\paren{u,v}
&\leq
\textubrace{
\delta_G\paren{u,v}
+
2\cdot\paren{k-1}\cdot
W_{1}\paren{u\squiggly v}
}{
\lemref{_k-1,kW1+_k-4_W2__case_b}
}
\\
&=
\textubrace{
\delta_G\paren{u,v}
+
\paren{k+1}\cdot W_{1}\paren{u\squiggly v}
+
\paren{k-3}\cdot W_{1}\paren{u\squiggly v}
}{
Simply summing
}
\\
&\leq
\textubrace{
\paren{k-2}\cdot\delta_G\paren{u,v}
+
\paren{k+1}\cdot W_{1}\paren{u\squiggly v}
-
\paren{k-3}\cdot W_{2}\paren{u\squiggly v}
}{
Because $W_{1}\paren{u\squiggly v}
+
W_{2}\paren{u\squiggly v}
\leq
\delta_G\paren{u,v}$
}
\end{alignat*}

If \case{c} holds by
\lemref{_k-1,kW1+_k-4_W2__case_c}
we get:
$
\paren{k-2}\cdot\delta_G\paren{u,v}
+
\paren{k-1}\cdot
\paren{
W_{1}\paren{u\squiggly v}
+
W_{2}\paren{u\squiggly v}
}
$.
Consequently the upper bounds obtained for
\case{b} and \case{c} imply:

\begin{alignat*}{3}
\delta_H\paren{u,v}
&\leq
\textubrace{
\paren{k-2}\cdot\delta_G\paren{u,v}
+
\mmax{
\paren{k+1}\cdot W_{1}\paren{u\squiggly v}
-
\paren{k-3}\cdot W_{2}\paren{u\squiggly v},
\,
\paren{k-1}\cdot
\paren{
W_{1}\paren{u\squiggly v}
+
W_{2}\paren{u\squiggly v}
}
}
}{
Our preceding discussion
}
\\
&=
\textubrace{
\paren{k-2}\cdot\delta_G\paren{u,v}
+
k\cdot W_{1}\paren{u\squiggly v}
+
W_{2}\paren{u\squiggly v}
+
\abs{
W_{1}\paren{u\squiggly v}
-
\paren{k-2}\cdot W_{2}\paren{u\squiggly v}
}
}{
$\max\bracce{a,b}
=
\frac{a+b}{2}
+
\frac{\abs{a-b}}{2}$
}
\end{alignat*}

Lastly, we remark that if $k=0\pmod 2$, then
$\ceil{\frac{k}{2}}=\floor{\frac{k}{2}}=\frac{k}{2}$, and if
$k=1\pmod 2$, then $\ceil{\frac{k}{2}}=\frac{k+1}{2}$ and
$\floor{\frac{k}{2}}=\frac{k-1}{2}$. Substituting these values results in the unified expression. 

\end{proof}

We next compare the refined stretch to the bound stated in
\thmref{_k-1,kW1+_k-4_W2_}.

\begin{corollary}~\label{cor:refined}
The stretches satisfy:
$$
\begin{cases}
\paren{
k-1,\,
\paren{k-2}\cdot W_{1}
+W_{2}
+\abs{
2W_{1}
-\paren{k-1}\cdot W_{2}
}
}
\leq
\paren{
k-1,\,
k\cdot W_{1}
+\paren{k-4}\cdot W_{2}
}
&
k=0\pmod 2,
\\[3mm]
\paren{
k-2,\,
k\cdot W_{1}
+W_{2}
+\abs{
W_{1}
-\paren{k-2}\cdot W_{2}
}
}
\leq
\paren{
k-2,\,
\paren{k+1}\cdot W_{1}
+\paren{k-3}\cdot W_{2}
}
&
k=1\pmod 2
\end{cases}
$$

Furthermore, for any $u,v\in V$ and fixed shortest path
$u\squiggly v$ such that
$0<W_{2}\paren{u\squiggly v}<W_{1}\paren{u\squiggly v}$,
the additive stretch of the refined stretch is strictly smaller.
\end{corollary}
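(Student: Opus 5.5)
The plan is to compare the two additive terms directly, since the multiplicative parts coincide in each parity case. I will use the identity from the proof of \thmref{refined}: the refined additive term is $\max\bracce{a,b}$, where $a$ is the shifted \case{b} bound and $b$ is the \case{c} bound. The stated stretch of \thmref{_k-1,kW1+_k-4_W2_} is then shown to dominate each of $a$ and $b$ using only $0\leq W_{2}\leq W_{1}$. For $W_{2}=0$ (a single-edge path) the comparison is immediate.

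For $k$ even, set $a = kW_{1}-\paren{k-2}W_{2}$ and $b=\paren{k-4}W_{1}+kW_{2}$. The unified bound is $T = kW_{1}+\paren{k-4}W_{2}$. Then $T-a = \paren{2k-6}W_{2}\geq 0$ for $k\geq 4$, and $T-b = 4W_{1}-4W_{2}\geq 0$. Hence $\max\bracce{a,b}\leq T$.

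For $k$ odd, set $a=\paren{k+1}W_{1}-\paren{k-3}W_{2}$ and $b=\paren{k-1}\paren{W_{1}+W_{2}}$. The unified bound is $T=\paren{k+1}W_{1}+\paren{k-3}W_{2}$. Then $T-a=2\paren{k-3}W_{2}\geq 0$ for $k\geq 3$, and $T-b=2W_{1}-2W_{2}\geq 0$. This proves the first part in both cases.

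For strictness, assume $0<W_{2}<W_{1}$. It suffices that both differences $T-a$ and $T-b$ are strictly positive, since then $\max\bracce{a,b}<T$. The difference $T-b$ is a positive multiple of $W_{1}-W_{2}>0$. The difference $T-a$ is $\paren{2k-6}W_{2}$ or $2\paren{k-3}W_{2}$, which is strictly positive whenever $k>3$; this holds because the appendix assumes $k\geq 4$.

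The main obstacle is this boundary check on $k$. For $k=3$ the difference $T-a$ vanishes, so strictness fails there. The range $k\geq 4$ fixed at the start of \appref{refined} therefore has to be invoked explicitly.
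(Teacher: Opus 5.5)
Your proposal is correct and matches the paper's proof. You write the refined additive term as $\max\{a,b\}$ and show that the original term $T$ dominates each entry, with $T-a=2(k-3)W_{2}$ and $T-b$ a positive multiple of $W_{1}-W_{2}$, exactly as the paper does. Your explicit remark that strictness of $T-a>0$ requires $k\geq 4$ (it fails at $k=3$) makes precise a dependence the paper leaves implicit in its assumption $k\geq 4$ at the start of the appendix.
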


\begin{proof}
We first consider the case where $k=0\pmod 2$.
By the proof of \thmref{refined}, the additive stretch of the
refined stretch is:
$
\mmax{
k\cdot W_{1}\paren{u\squiggly v}
-\paren{k-2}\cdot W_{2}\paren{u\squiggly v},
\,
\paren{k-4}\cdot W_{1}\paren{u\squiggly v}
+k\cdot W_{2}\paren{u\squiggly v}
}
$, 
whereas the additive component of
\thmref{_k-1,kW1+_k-4_W2_}
is
$
k\cdot W_{1}\paren{u\squiggly v}
+\paren{k-4}\cdot W_{2}\paren{u\squiggly v}
$.
We compare this expression with each of the two expressions inside
the maximum:

\begin{alignat*}{3}
&
k\cdot W_{1}\paren{u\squiggly v}
+\paren{k-4}\cdot W_{2}\paren{u\squiggly v}
-
\paren{
k\cdot W_{1}\paren{u\squiggly v}
-\paren{k-2}\cdot W_{2}\paren{u\squiggly v}
}
\\
&=
\textubrace{
2\cdot\paren{k-3}\cdot W_{2}\paren{u\squiggly v}
}{
Simply summing
}
\\
&\geq
\textubrace{
0
}{
$W_{2}\paren{u\squiggly v}\geq 0$
}
\end{alignat*}

Additionally:

\begin{alignat*}{3}
&
k\cdot W_{1}\paren{u\squiggly v}
+\paren{k-4}\cdot W_{2}\paren{u\squiggly v}
-
\paren{
\paren{k-4}\cdot W_{1}\paren{u\squiggly v}
+k\cdot W_{2}\paren{u\squiggly v}
}
\\
&=
\textubrace{
4\cdot
\paren{
W_{1}\paren{u\squiggly v}
-
W_{2}\paren{u\squiggly v}
}
}{
Simply summing
}
\\
&\geq
\textubrace{
0
}{
$W_{1}\paren{u\squiggly v}
\geq
W_{2}\paren{u\squiggly v}$
}
\end{alignat*}

Hence, both expressions inside the maximum are no larger than the
additive component of
\thmref{_k-1,kW1+_k-4_W2_}.
As the multiplicative component is $k-1$ in both results, the desired
inequality follows for even $k$.

We now consider the case where $k=1\pmod 2$.
By the proof of \thmref{refined}, the additive stretch of the
refined stretch is
$
\mmax{
\paren{k+1}\cdot W_{1}\paren{u\squiggly v}
-\paren{k-3}\cdot W_{2}\paren{u\squiggly v},
\,
\paren{k-1}\cdot
\paren{
W_{1}\paren{u\squiggly v}
+
W_{2}\paren{u\squiggly v}
}
}
$, whereas the additive component of
\thmref{_k-1,kW1+_k-4_W2_}
is
$
\paren{k+1}\cdot W_{1}\paren{u\squiggly v}
+\paren{k-3}\cdot W_{2}\paren{u\squiggly v}
$. We again compare this expression with each of the two expressions
inside the maximum:

\begin{alignat*}{3}
&
\paren{k+1}\cdot W_{1}\paren{u\squiggly v}
+\paren{k-3}\cdot W_{2}\paren{u\squiggly v}
-
\paren{
\paren{k+1}\cdot W_{1}\paren{u\squiggly v}
-\paren{k-3}\cdot W_{2}\paren{u\squiggly v}
}
\\
&=
\textubrace{
2\cdot\paren{k-3}\cdot W_{2}\paren{u\squiggly v}
}{
Simply summing
}
\\
&\geq
\textubrace{
0
}{
$W_{2}\paren{u\squiggly v}\geq 0$
}
\end{alignat*}

Additionally:

\begin{alignat*}{3}
&
\paren{k+1}\cdot W_{1}\paren{u\squiggly v}
+\paren{k-3}\cdot W_{2}\paren{u\squiggly v}
-
\paren{
\paren{k-1}\cdot
\paren{
W_{1}\paren{u\squiggly v}
+
W_{2}\paren{u\squiggly v}
}
}
\\
&=
\textubrace{
2\cdot
\paren{
W_{1}\paren{u\squiggly v}
-
W_{2}\paren{u\squiggly v}
}
}{
Simply summing
}
\\
&\geq
\textubrace{
0
}{
$W_{1}\paren{u\squiggly v}
\geq
W_{2}\paren{u\squiggly v}$
}
\end{alignat*}

Hence, both expressions inside the maximum are no larger than the
additive component of
\thmref{_k-1,kW1+_k-4_W2_}.
As the multiplicative component is $k-2$ in both results, the desired
inequality follows for odd $k$.

Finally, if
$0<W_{2}\paren{u\squiggly v}<W_{1}\paren{u\squiggly v}$,
then both differences considered above are strictly positive,
regardless of the parity of $k$.
Hence, the refined additive component is strictly smaller.

\end{proof}

To emphasize the refined stretch, we list in
\tabref{refined} the first few instances of our general
construction.
The multiplicative stretch remains unchanged, while the refined
analysis yields a smaller additive stretch whenever
$0<W_{2}<W_{1}$.

\begin{table}[H]
  \begin{center}

    \small\addtolength{\tabcolsep}{-4pt}
    \begin{tabular}{|c|c|c|c|c|c|}
      \hline
      \textbf{\small $k$}
      &
      \textbf{\small Multiplicative Stretch}
      &
      \textbf{\small Original Additive Stretch}
      &
      \textbf{\small Refined Additive Stretch}
      &
      \textbf{\small Non-Negative $\abs{\cdot}$}
      &
      \textbf{\small Negative $\abs{\cdot}$}
      \\
      \hhline{|=|=|=|=|=|=|}

      &&&&&\\[-1em]
      $4$
      &
      $3$
      &
      $4W_{1}$
      &
      $2W_{1}+W_{2}+\abs{2W_{1}-3W_{2}}$
      &
      $4W_{1}-2W_{2}$
      &
      $4W_{2}$
      \\
      &&&&&\\[-1em]
      \hline

      &&&&&\\[-1em]
      $5$
      &
      $3$
      &
      $6W_{1}+2W_{2}$
      &
      $5W_{1}+W_{2}+\abs{W_{1}-3W_{2}}$
      &
      $6W_{1}-2W_{2}$
      &
      $4W_{1}+4W_{2}$
      \\
      &&&&&\\[-1em]
      \hline

      &&&&&\\[-1em]
      $6$
      &
      $5$
      &
      $6W_{1}+2W_{2}$
      &
      $4W_{1}+W_{2}+\abs{2W_{1}-5W_{2}}$
      &
      $6W_{1}-4W_{2}$
      &
      $2W_{1}+6W_{2}$
      \\
      &&&&&\\[-1em]
      \hline

      &&&&&\\[-1em]
      $7$
      &
      $5$
      &
      $8W_{1}+4W_{2}$
      &
      $7W_{1}+W_{2}+\abs{W_{1}-5W_{2}}$
      &
      $8W_{1}-4W_{2}$
      &
      $6W_{1}+6W_{2}$
      \\
      \hline

    \end{tabular}

    \caption{The additive stretch of
    \thmref{_k-1,kW1+_k-4_W2_}
    compared to the refined additive stretch of
    \thmref{refined}.}~\label{tab:refined}

  \end{center}
\end{table}



\end{document}